\documentclass[12pt]{article}

\usepackage{amsfonts}
\usepackage{amssymb}
\usepackage{amsmath}

\usepackage{titling}
\usepackage{bibunits}
\usepackage{booktabs}
\usepackage{natbib}
\usepackage{url}
\usepackage{fontawesome}
\usepackage[breaklinks=true,bookmarksopen=true,colorlinks=true,citecolor=blue]{hyperref}

\usepackage{setspace}

\newcommand{\blind}{0}

\newcommand{\aidisclosure}{The authors declare the use of generative AI. The research question, the modeling framework, and the initial roadmap of results are the authors' own. Under the authors' direction and review, AI agents (Claude Opus 5 and Claude Fable 5 by Anthropic) helped build the numerical exercise and prove the results reported here, while various separate agent sessions (Gemini 3.1 Pro by Google and GPT 5.6 Soltice by OpenAI through \href{https://coarse.ink/}{https://coarse.ink/}) independently verified the mathematical correctness of all derivations. The authors have adjudicated all AI-assisted content and take full responsibility for the accuracy, integrity, and originality of the final manuscript.}

\newtheorem{theorem}{Theorem}
\newtheorem{lemma}{Lemma}[section]
\newtheorem{proposition}{Proposition}

\newtheorem{assumption}{Assumption}

\newenvironment{proof}[1][Proof]{\noindent\textbf{#1.} }{\ \rule{0.5em}{0.5em}}

\def\spacingset#1{\renewcommand{\baselinestretch}{#1}\small\normalsize}

\numberwithin{equation}{section}

\begin{document}
\begin{bibunit}[jpe]

\spacingset{1}

\if1\blind
{
  \title{\bf Rate-Agnostic Wald Inference for Dyadic Regressions\thanks{\aidisclosure}}
  \author{Anonymous}
  \date{\today}
  \maketitle
} \fi

\if0\blind
{
  \title{\bf Rate-Agnostic Wald Inference for Dyadic Regressions\thanks{\aidisclosure}}
  \author{Benjamin O. Harrison\thanks{Department of Economics, Emory University,
  Rich Building 306, 1602 Fishburne Dr., Atlanta, GA 30322-2240, USA.
  \faEnvelopeO: \href{mailto:boharri@emory.edu}{boharri@emory.edu}.}
  \and David T. Jacho-Ch\'{a}vez\thanks{Corresponding author. Department of Economics,
  Emory University, Rich Building 306, 1602 Fishburne Dr., Atlanta, GA 30322-2240, USA.
  \faEnvelopeO: \href{mailto:djachocha@emory.edu}{djachocha@emory.edu}.}}
  \date{\today}
  \maketitle
} \fi

\begin{abstract}
\normalsize
\noindent This paper develops Wald inference for least-squares estimation of linear regression models on dyadic data, accommodating configurations where multiple observations share the same pair of units (e.g., directed flows, multilayer networks, and dyadic panels). We establish that the dyadic-robust Wald statistic is asymptotically $\chi^2_q$ for an arbitrary nonrandom sequence of full-rank restrictions, under a single condition on the accumulation of dependence. Throughout, no convergence rate is assumed or estimated, permitting the condition number of the score's variance matrix to diverge. We further propose a delete-one-unit jackknife alternative that is positive semidefinite by construction. This jackknife statistic attains the same asymptotic limit under one additional condition on dyad multiplicity and remains asymptotically conservative when that condition fails. A supplement contains all proofs, Monte Carlo experiments featuring estimated coefficients that converge at heterogeneous rates, and an empirical gravity application to bilateral trade.
\end{abstract}

\noindent%
{\it Keywords:} Cluster-robust inference; Dyadic data; Jackknife; Networks; Wald test \\
\noindent%
{\it JEL codes:} C12; C31.

\spacingset{1.15}

\section{Introduction}
\label{sec:intro}

In this paper we provide asymptotically valid Wald inference for least squares in linear regression models where each observation is attached to a \emph{pair} of units, a dyad. Several observations may be attached to the \emph{same} pair. Examples of this setting include the two directed flows between two countries, the layers of a multilayer network, or the dates of a dyadic panel. Our framework encompasses all these settings at once by allowing the map from observations to dyads to be non-injective. Dyadic observations may be arbitrarily dependent when their underlying dyads share at least one common unit; however, observations with disjoint dyads are assumed to be independent. In particular, this accommodates reciprocity by leaving the correlation between the two orientations of a given pair unrestricted.

Inference in this setting is routinely conducted with the dyadic-robust variance
estimator of \citet{fafchamps2007formation} and \citet{aronow2015cluster}, surveyed in \citet{graham2020network}. \citet{tabordmeehan2019inference} provides the corresponding limit theory for dyadic least squares inference, establishing conditions under which the dyadic-robust $t$-statistic is asymptotically standard normal. However, the theory relies on a common convergence rate across the estimator's components. By contrast, \citet[p.~277]{hansen2019asymptotic} develop a rate-free inference framework whose Theorem~9 permits different linear combinations of an estimator to converge at different rates, but their reliance on mutually independent clusters precludes its application to the dyadic configurations studied here.

Our contribution bridges these two frameworks. We retain the dyadic dependence structure formalized by \citet{tabordmeehan2019inference}, but dispense with the common-rate requirement by adapting the rate-free logic of \citet{hansen2019asymptotic} to a setting without mutually independent clusters. Theorem \ref{thm:main} establishes asymptotically valid Wald inference without imposing a common scalar convergence rate. The Wald statistic is asymptotically $\chi^2_q$ for an \emph{arbitrary} nonrandom sequence of full-rank restriction matrices, subject to a single condition on the accumulation of dependence. This condition assumes no scalar convergence rate and permits different linear combinations of the estimator to converge at heterogeneous rates. This matters in practice, since heterogeneous rates are the norm on dyadic designs often used in the trade literature: a coefficient on an exporter attribute and one on a trading-pair attribute accumulate dependence at different speeds, and a researcher running a gravity regression cannot be asked to know, or estimate, either rate.

While Theorem \ref{thm:main} establishes the asymptotic properties of the dyadic-robust variance estimator, our second contribution addresses its finite-sample behavior. Because its inner matrix is a difference of two positive semidefinite matrices, the estimator is not guaranteed to be positive semidefinite at a given sample size, which can render the resulting Wald statistic undefined. We therefore study a jackknife alternative that deletes one \emph{unit}, and with it every observation that unit touches, at a time. The construction is that of
\citet{snijders1999nonparametric}, rescaled to consistency for a dyadic M-estimator in \citet{graham2020network}; see also \citet{bell2002bias} and
\citet{lin2020theoretical} for similar constructions in other settings. It is a sum of outer products, and is therefore positive semidefinite. Because each observation belongs to exactly two deletion groups, however, the classical argument over disjoint independent clusters is unavailable. Theorem \ref{thm:jack} shows that the jackknife Wald statistic attains the same $\chi^2_q$ limit under one further condition on dyad multiplicity, and Proposition \ref{prop:onesided} shows that without that condition the failure is one-sided, in the sense that the test never over-rejects asymptotically.

Finally, we provide three technical contributions that might be of independent interest. First, we establish a central limit theorem for the least squares score standardized by its exact variance rather than by a rate; this accommodates triangular arrays where the tested direction changes with the sample size. Second, we prove that the dyadic-robust variance estimator is consistent relative to the true variance uniformly over the restrictions tested, permitting its linear combinations to converge at heterogeneous rates. Third, we derive an exact finite-sample identity for the effect of deleting one unit on the least squares estimator. Because each observation belongs to exactly two deletion groups, this identity reveals that the first-order terms sum exactly to zero, rendering the difference between the jackknife mean and the full-sample estimator purely of second order.

The remainder of the paper is organized as follows. Section \ref{sec:model} presents the model, the dyadic configuration notation, and the two variance estimators. Section \ref{sec:assumptions} states the main results and compares our assumptions with the existing literature, and Section \ref{sec:conclusion} concludes. All proofs are deferred to the supplemental materials. The supplement also provides three worked-out illustrations and a numerical exercise evaluating a gravity equation across three dyadic configurations: directed Erd\H{o}s--R\'{e}nyi configurations of varying density, multilayer configurations featuring heterogeneous convergence rates, and an empirical application to the bilateral trade data of \citet{santossilva2006log}.

\section{Model and Estimators}
\label{sec:model}

Units are indexed $g=1,\dots,G$ and all limits are taken as $G\to\infty$. Observations are
indexed $n=1,\dots,N$ with $N=N_G$. Each carries a support $\psi(n)\subset
\{1,\dots,G\}$ with $|\psi(n)|=2$, called its dyad, and $\psi$ is \emph{not} assumed
injective, so several observations may carry the same dyad. Write $M_g :=
\#\{n : g\in\psi(n)\}$ for the number of observations touching unit $g$ (observations, not
distinct partners), $\mathcal{M}^{H} := \max_g M_g$ and $\mathcal{M}^{L} := \min_g M_g$. Dually, for a dyad $d$, $L_d := \#\{n : \psi(n)=d\}$, $\mathcal{L} := \max_d L_d$, and
$\mathcal{J}_N := \sum_d L_d^2$, the number of ordered pairs of observations carrying the
same dyad. Since each observation contributes to exactly two of the $M_g$ and $\sum_d L_d
= N$,
\begin{equation}\label{eq:degrees}
\frac{\mathcal{M}^{L}G}{2} \;\le\; N \;\le\; \frac{\mathcal{M}^{H}G}{2},
\qquad\qquad
N \;\le\; \mathcal{J}_N \;\le\; \mathcal{L}N, \qquad \mathcal{L}\le\mathcal{M}^{H}\le N .
\end{equation}
Let $\mathbf{1}_{nm} := \mathbf{1}\{\psi(n)\cap\psi(m)\ne\emptyset\}$ and $\mathcal{A} :=
\{(n,m) : \mathbf{1}_{nm}=1\}$. For matrices, $\|\cdot\|$ and $\|\cdot\|_F$ denote the
spectral and Frobenius norms, $\lambda_{\min}$ and $\lambda_{\max}$ extreme eigenvalues,
and for symmetric positive definite $A$, $A^{1/2}$ is the symmetric square root with
$A^{-1/2} := (A^{1/2})^{-1}$. Finally, throughout, $C := C_x^2C_u^2$ denotes the constant
built from Assumption \ref{as:bound} below.

The linear model is $y_n = \boldsymbol{\beta}'\mathbf{x}_n + u_n$, with
$\mathbf{x}_n\in\mathbb{R}^{K}$ and $K$ fixed, and we estimate it by ordinary least
squares. Let $\mathcal{E}_N$ denote the event that $\sum_n \mathbf{x}_n\mathbf{x}_n'$ is
invertible, and write $\mathcal{E}_N^{c}$ for its complement. On $\mathcal{E}_N$ we set
$\widehat{\boldsymbol{\beta}} := (\sum_n\mathbf{x}_n\mathbf{x}_n')^{-1}
\sum_n\mathbf{x}_ny_n$ and $\widehat{u}_n := y_n -
\widehat{\boldsymbol{\beta}}'\mathbf{x}_n$, while on $\mathcal{E}_N^{c}$ we set
$\widehat{\boldsymbol{\beta}} := \mathbf{0}$ and $\widehat{u}_n := y_n$. Define the score,
its variance, and the design matrices $S_N := \sum_n\mathbf{x}_nu_n$, $\Omega_N :=
\operatorname{Var}(S_N)$, $Q_N := N^{-1}\sum_n\mathbb{E}[\mathbf{x}_n\mathbf{x}_n']$, and
$\widehat{Q}_N := N^{-1}\sum_n\mathbf{x}_n\mathbf{x}_n'$. The \emph{dyadic-robust variance
estimator} of $\widehat{\boldsymbol{\beta}}$ is built
from the feasible and infeasible meats
\begin{equation}\label{eq:meat}
\widehat{\Omega}_N \;:=\; \sum_{n=1}^{N}\sum_{m=1}^{N} \mathbf{1}_{nm}\,
\widehat{u}_n\widehat{u}_m\,\mathbf{x}_n\mathbf{x}_m', \qquad
\widetilde{\Omega}_N \;:=\; \sum_{n=1}^{N}\sum_{m=1}^{N} \mathbf{1}_{nm}\,
u_nu_m\,\mathbf{x}_n\mathbf{x}_m',
\end{equation}
through the sandwich
\begin{equation}\label{eq:vhat}
\widehat{V}_N \;:=\; \frac{1}{N^{2}}\,\widehat{Q}_N^{-1}\widehat{\Omega}_N
\widehat{Q}_N^{-1}, \qquad\qquad V_N \;:=\; \frac{1}{N^{2}}\,Q_N^{-1}\Omega_NQ_N^{-1} .
\end{equation}
Since $\widehat{Q}_N$ is singular on $\mathcal{E}_N^{c}$, we set $\widehat{V}_N:=\mathbf{0}$ there. Alternatively, one can use a Moore--Penrose inverse \citep[see, e.g.,][p.~144]{graham2020network}; either approach is valid because
$\mathbb{P}(\mathcal{E}_N)\to1$ under Assumptions \ref{as:dep}--\ref{as:config}.
Here $\widehat{V}_N$ estimates $V_N$, which standardizes
$\widehat{\boldsymbol{\beta}}-\boldsymbol{\beta}$ in Theorem \ref{thm:main} below.
The second variance estimator is the aforementioned
delete-one-unit jackknife, i.e., $\widehat{V}^{J}$. Write $A := \sum_n\mathbf{x}_n\mathbf{x}_n'$,
$\mathcal{N}_g := \{n : g\in\psi(n)\}$, $A_g := \sum_{n\in\mathcal{N}_g}
\mathbf{x}_n\mathbf{x}_n'$ and $A_{(g)} := A - A_g$; on the event $\mathcal{E}^{J}_N$ that
every $A_{(g)}$ is invertible, let $\widehat{\boldsymbol{\beta}}_{(g)} :=
A_{(g)}^{-1}\sum_{n\notin\mathcal{N}_g}\mathbf{x}_ny_n$,
$\bar{\boldsymbol{\beta}} := G^{-1}\sum_g\widehat{\boldsymbol{\beta}}_{(g)}$, and
\begin{equation}\label{eq:vjack}
\widehat{V}^{J} \;:=\; \frac{G-1}{G}\sum_{g=1}^{G}
\big(\widehat{\boldsymbol{\beta}}_{(g)} - \bar{\boldsymbol{\beta}}\big)
\big(\widehat{\boldsymbol{\beta}}_{(g)} - \bar{\boldsymbol{\beta}}\big)' ,
\end{equation}
with $\widehat{V}^{J} := \mathbf{0}$ on $(\mathcal{E}^{J}_N)^{c}$ as well. Deleting unit $g$
removes all $M_g$ observations touching it at once. Two features of \eqref{eq:vjack}
matter below. First, $\widehat{V}^{J}$ is positive semidefinite by construction, which
$\widehat{\Omega}_N$ is not (see, e.g., \citealp{mackinnon2023cluster}, for further
details). Second, each observation belongs to exactly two of the $G$ deletion groups, so
the classical disjoint-cluster argument \citep{cameron2011robust, hansen2019asymptotic}
cannot be used in the mathematical proofs.

\section{Assumptions and Main Results}
\label{sec:assumptions}

We consider the following regularity conditions.

\begin{assumption}[Dyadic dependence]\label{as:dep}
For any two disjoint
index sets $S_1,S_2\subset\{1,\dots,N\}$, the families
$\{(\mathbf{x}_n,u_n)\}_{n\in S_1}$ and $\{(\mathbf{x}_m,u_m)\}_{m\in S_2}$ are
independent whenever $\psi(n)\cap\psi(m)=\emptyset$ for every $n\in S_1$ and $m\in S_2$.
\end{assumption}

\begin{assumption}[Bounded support]\label{as:bound}
There are finite constants $C_x,C_u$ with $\|\mathbf{x}_n\|\le C_x$ and
$|u_n|\le C_u$ almost surely for all $n$ and $N$.
\end{assumption}

\begin{assumption}[Identification]\label{as:ident}
$\mathbb{E}[\mathbf{x}_nu_n]=\mathbf{0}$ for all $n$, and
$\lambda_{\min}(Q_N)\ge c_Q>0$ for all $N$.
\end{assumption}

\begin{assumption}[Variance accumulation]\label{as:acc}
$\Omega_N$ is positive definite and\newline
$\delta_N := N(\mathcal{M}^{H})^{3}\big/\lambda_{\min}(\Omega_N)^{2}\longrightarrow 0$.
\end{assumption}

\begin{assumption}[Configuration]\label{as:config}
$K$ is fixed and $\mathcal{M}^{L}\ge1$ as $G\to\infty$.
\end{assumption}

Assumption \ref{as:dep} is the dyadic analogue of independent clustering, and it is
a weaker version of Assumption 2.1 in \citet[p.~672]{tabordmeehan2019inference} in that the support map need not be injective and the observations need not be identically distributed. Note that only the independence of families
indexed by node-disjoint dyads is ever used below. This assumption therefore imposes restrictions only on observations that do not share a unit. Consequently, any two observations sharing a common unit, whether they represent two orientations of a single pair, distinct network layers, or different time periods, may be arbitrarily dependent. In particular, reciprocity and cross-layer correlation are left unrestricted. Assumption \ref{as:dep} also generalizes the sampling scheme of \citet[Section 2, p.~269]{hansen2019asymptotic} because it does not require clusters to be disjoint. In their framework, clusters are mutually independent by construction. In contrast, the natural clusters $\{n:g\in\psi(n)\}$ in our setting overlap, with each observation belonging to exactly two of them. Their scheme can be recovered by placing each mutually independent cluster on its own isolated pair of units. However, Assumption \ref{as:dep} does rule out dependence across node-disjoint dyads. Consequently, any shock common to all dyads must be absorbed into the mean, either by conditioning on its realization or by including it as a nonrandom shifter, rather than subsumed in the error term.

Assumption \ref{as:ident} is standard in the linear regression literature; see, for example, the conditions in \citet[p.~672]{tabordmeehan2019inference}. It only requires the regressors and the error term to be uncorrelated, rather than imposing strict mean independence $\mathbb{E}[u_n\mid\mathbf{x}_n]=0$. Thus, $\boldsymbol{\beta}$ is defined as a projection parameter rather than a conditional-mean coefficient. It is important to note, however, that Assumption \ref{as:ident} imposes orthogonality observation by observation. When the data are not identically distributed, this condition is stronger than the aggregate restriction $\sum_n\mathbb{E}[\mathbf{x}_nu_n]=\mathbf{0}$ implied by a pooled population projection. These individual restrictions are precisely what Lemma \ref{lem:struct}(a) uses to ensure the raw cross-products coincide with their covariances and that node-disjoint cross-products vanish, thereby identifying $\mathbb{E}[\widetilde{\Omega}_N]$ with $\Omega_N$. The estimand is therefore a common coefficient satisfying pair-specific orthogonality, and it is imposed as such in the empirical application in Appendix \ref{sec:empirical}.

Similarly, Assumption \ref{as:bound} is the bounded support condition of Assumption 3.1 in
\citet[p.~675]{tabordmeehan2019inference}. It is stronger than the moment conditions
under which cluster-robust limit theory is usually developed, and it is used here to bound
the summands of $\Omega_N$ uniformly, which with the pair count of Lemma
\ref{lem:struct}(b) gives $\lambda_{\max}(\Omega_N)=O(N\mathcal{M}^{H})$. That cap is what permits the condition number of $\Omega_N$ to
diverge in Theorem \ref{thm:main} below. When clusters are disjoint the same control
follows from bounded moments alone, and it does not when they overlap. Notice that Assumption
\ref{as:bound} restricts the support of $(\mathbf{x}_n,u_n)$ and nothing else about its
distribution, so it imposes neither homoskedasticity nor any parametric form. Assumption
\ref{as:config} requires that no unit be isolated and that $K$ stay fixed as $G$ grows.
That is, it rules out node fixed effects and slope vectors whose dimension grows with
$G$; see Illustration
3 in \ref{sec:smb}.

Assumption \ref{as:acc} is more substantive. It replaces Assumption 2.6 in
\citet[p.~675]{tabordmeehan2019inference}, which requires $(NG^{r})^{-1}\Omega_N$ to
converge to a positive-definite limit for some exponent $r\in[0,1]$. No such rate is
imposed here,
and no normalization of $\Omega_N$ is required to converge. What is asked instead is
positive definiteness at each $N$ together with $\delta_N\to0$. Dispensing with the limit
is what lets Theorem \ref{thm:main} below test joint restrictions whose directions
accumulate dependence at genuinely different orders, and Appendix \ref{sec:mc2} reports a
design in which they do. Its second part, namely
$\delta_N\to0$, is similar to the eigenvalue condition in Theorem 9 of
\citet[p.~277]{hansen2019asymptotic}, where the smallest eigenvalue of a normalized
cluster variance is bounded away from zero, but stronger on dense configurations, in the sense that
$\lambda_{\min}(\Omega_N)$ must outgrow $\{N(\mathcal{M}^{H})^{3}\}^{1/2}$, a bound that
exceeds the order $N$ asked for there exactly when $(\mathcal{M}^{H})^{3}>N$. The numerator
$N(\mathcal{M}^{H})^{3}$ counts the overlapping index quadruples that control the
variance of the meat (see Lemma \ref{lem:struct}(c)), a count that is of order $N$ when
clusters are bounded and independent, and grows with the maximum degree once they
overlap.

Notice that Assumption \ref{as:acc} asks nothing about the rate at which
$\widehat{\boldsymbol{\beta}}$ converges, and no such rate is either assumed or estimated
anywhere below. The matrix $\Omega_N$ need not converge under any normalization, and its
condition number $\kappa(\Omega_N):=\lambda_{\max}(\Omega_N)/\lambda_{\min}(\Omega_N)$ is
free to diverge, so that different linear combinations of $\widehat{\boldsymbol{\beta}}$
may converge at different rates, as the multilayer designs of Illustration 2 in
\ref{sec:smb} exhibit. The spread of those rates is nevertheless
controlled, since $\delta_N\to0$ implies
$\kappa(\Omega_N)=o\big((N/\mathcal{M}^{H})^{1/2}\big)$. That is, \emph{rate-agnostic}
refers to what has to be known in order to conduct inference, and not to how far apart
the underlying rates may lie. If the design is nonrandom and the errors satisfy
$\operatorname{Var}(\mathbf{u})\succeq\underline{\sigma}^{2}I_N$ for some
$\underline{\sigma}^{2}>0$, then $\Omega_N\succeq\underline{\sigma}^{2}\sum_n
\mathbf{x}_n\mathbf{x}_n'$ and hence $\lambda_{\min}(\Omega_N)\ge
\underline{\sigma}^{2}c_QN$ by Assumption \ref{as:ident}, so that Assumption
\ref{as:acc} holds automatically whenever
$(\mathcal{M}^{H})^{3}=o(N)$, a requirement on the configuration alone which by
\eqref{eq:degrees} entails $\mathcal{M}^{H}=o(\sqrt{G})$, and is therefore available on
sparse designs. On denser
ones it is a joint restriction on the configuration, the design and the error structure. The following theorem establishes the consistency of the sandwich estimator in
\eqref{eq:vhat} and the limiting distribution of the implied Wald statistic.

\begin{theorem}\label{thm:main}
Let Assumptions \ref{as:dep}--\ref{as:config} hold, and let $\{R_N\}$ be an arbitrary
sequence of nonrandom $K\times q$ matrices of full column rank $q\le K$. Write
$\mathcal{V}_N := R_N'V_NR_N$ and $\widehat{\mathcal{V}}_N := R_N'\widehat{V}_NR_N$. Then, as $G\to\infty$:
(a) $\mathcal{V}_N^{-1/2}R_N'(\widehat{\boldsymbol{\beta}}-\boldsymbol{\beta})
\longrightarrow_d N(\mathbf{0},I_q)$;
(b) $\mathcal{V}_N^{-1/2}\widehat{\mathcal{V}}_N\mathcal{V}_N^{-1/2}\longrightarrow_p I_q$;
(c) $\mathbb{P}(\widehat{\mathcal{V}}_N\succ0)\to1$, and on that event
$\widehat{\mathcal{V}}_N^{-1/2}R_N'(\widehat{\boldsymbol{\beta}}-\boldsymbol{\beta})
\longrightarrow_d N(\mathbf{0},I_q)$;
(d) under $H_0: R_N'\boldsymbol{\beta}=\mathbf{r}$, the Wald statistic
\[
W_N := \big(R_N'\widehat{\boldsymbol{\beta}}-\mathbf{r}\big)'\widehat{\mathcal{V}}_N^{-1}
\big(R_N'\widehat{\boldsymbol{\beta}}-\mathbf{r}\big) \;\longrightarrow_d\; \chi^2_q .
\]
\end{theorem}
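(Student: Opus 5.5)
The plan is to reduce everything to statements about the score $S_N$ standardized by $\Omega_N^{-1/2}$, which is invariant to the unknown rates. On $\mathcal{E}_N$ we have $\widehat{\boldsymbol{\beta}}-\boldsymbol{\beta}=N^{-1}\widehat{Q}_N^{-1}S_N$. First, $\widehat{Q}_N-Q_N\to_p0$: its variance is bounded by $N^{-2}\cdot C_x^4\,\#\mathcal{A}$. By Lemma \ref{lem:struct}(b) the number of dependent pairs is $O(N\mathcal{M}^{H})$, so this bound is $O(\mathcal{M}^{H}/N)$. This tends to zero because $\lambda_{\min}(\Omega_N)\le\lambda_{\max}(\Omega_N)=O(N\mathcal{M}^{H})$ together with $\delta_N\to0$ forces $\mathcal{M}^{H}/N\to0$. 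Combined with $\lambda_{\min}(Q_N)\ge c_Q$, this gives $\mathbb{P}(\mathcal{E}_N)\to1$ and $\widehat{Q}_N^{-1}-Q_N^{-1}\to_p0$.

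For (a), set $B_N:=\mathcal{V}_N^{-1/2}R_N'Q_N^{-1}\Omega_N^{1/2}/N$. Then $B_NB_N'=I_q$, so $B_N$ has orthonormal rows. We write
\[
\mathcal{V}_N^{-1/2}R_N'(\widehat{\boldsymbol{\beta}}-\boldsymbol{\beta})=B_N\Omega_N^{-1/2}S_N+\mathcal{V}_N^{-1/2}R_N'(\widehat{Q}_N^{-1}-Q_N^{-1})Q_N\Omega_N^{1/2}B_N'\,\cdot\,(\ldots).
\]
The key technical issue is that the remainder contains $\mathcal{V}_N^{-1/2}R_N'$, which can blow up relative to $\Omega_N$ when rates differ. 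I would avoid this by writing the remainder as $B_N\,\Omega_N^{1/2}Q_N(\widehat{Q}_N^{-1}-Q_N^{-1})\Omega_N^{-1/2}\cdot\Omega_N^{-1/2}S_N$ using $N\widehat{Q}_N^{-1}=N Q_N^{-1}+\ldots$. Its size is then controlled by $\kappa(\Omega_N)^{1/2}\|\widehat{Q}_N-Q_N\|$. Since $\kappa(\Omega_N)=o((N/\mathcal{M}^{H})^{1/2})$ and $\|\widehat{Q}_N-Q_N\|=O_p((\mathcal{M}^{H}/N)^{1/2})$, this product is $o_p(1)$. This bookkeeping, which keeps $\kappa$ from entering at full strength, is the step I expect to be the main obstacle.

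The leading term is $a_N'\Omega_N^{-1/2}S_N$ for unit vectors $a_N$ (Cramér–Wold applied to $B_N$). I would prove its normality by Stein's method for dependency graphs, using Lemma \ref{lem:struct}. Each summand is bounded by $C_xC_u\lambda_{\min}(\Omega_N)^{-1/2}$, and the dependency neighbourhoods have size $O(\mathcal{M}^{H})$. The Baldi–Rinott type bound is then of order $N(\mathcal{M}^{H})^2\lambda_{\min}^{-3/2}+(N(\mathcal{M}^{H})^3)^{1/2}\lambda_{\min}^{-1}$, and both terms are $o(1)$ under $\delta_N\to0$. Since $B_N$ ranges over arbitrary sequences, this yields $N(0,I_q)$ by a subsequence argument: the $q\times K$ orthonormal $B_N$ lie in a compact set.

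For (b), $\widehat{\Omega}_N$ is compared with $\Omega_N$ in the relative norm $\|\Omega_N^{-1/2}(\widehat{\Omega}_N-\Omega_N)\Omega_N^{-1/2}\|$. The infeasible part $\widetilde{\Omega}_N-\Omega_N$ has mean zero by Lemma \ref{lem:struct}(a). Its second moment involves the quadruples counted in Lemma \ref{lem:struct}(c), giving $O(N(\mathcal{M}^{H})^3)/\lambda_{\min}^2=O(\delta_N)$. The correction from $\widehat{u}_n=u_n-\mathbf{x}_n'(\widehat{\boldsymbol{\beta}}-\boldsymbol{\beta})$ produces cross terms bounded by $\lambda_{\min}^{-1}\cdot N\mathcal{M}^{H}\|\widehat{\boldsymbol{\beta}}-\boldsymbol{\beta}\|\,\|\ldots\|$. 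Here $\|\widehat{\boldsymbol{\beta}}-\boldsymbol{\beta}\|^2=O_p(\lambda_{\max}/N^2)=O_p(\mathcal{M}^{H}/N)$, so the quadratic term is $O_p((\mathcal{M}^{H})^2/\lambda_{\min})\to0$. The linear term is handled similarly via Cauchy–Schwarz. The sandwich then transfers as in (a), again through $B_N$, because
\[
\mathcal{V}_N^{-1/2}\widehat{\mathcal{V}}_N\mathcal{V}_N^{-1/2}=B_N\,\Omega_N^{-1/2}Q_N\widehat{Q}_N^{-1}\widehat{\Omega}_N\widehat{Q}_N^{-1}Q_N\Omega_N^{-1/2}B_N',
\]
with the $\widehat{Q}_N$ perturbation absorbed by the same $\kappa$ argument.

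Part (c) follows from (b): $\lambda_{\min}$ of the standardized matrix tends to one, so $\widehat{\mathcal{V}}_N\succ0$ with probability approaching one. We have $\widehat{\mathcal{V}}_N^{-1/2}\mathcal{V}_N^{1/2}\to_p I$ by continuity of the matrix square root near $I_q$; the difference with an orthogonal factor is harmless because a Gaussian is rotation-invariant, which I would handle with a polar decomposition. Combining with (a) through Slutsky gives (c). For (d), under $H_0$ we have $W_N=\|\widehat{\mathcal{V}}_N^{-1/2}R_N'(\widehat{\boldsymbol{\beta}}-\boldsymbol{\beta})\|^2$, and the continuous mapping theorem gives $\chi^2_q$.
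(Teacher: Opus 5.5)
Your skeleton matches the paper's. Your $B_N$ is its $\mathcal{R}_N$. The remainder in (a) is $B_N(\Gamma_N-I_K)Z_N$ with $\Gamma_N:=\Omega_N^{-1/2}Q_N\widehat{Q}_N^{-1}\Omega_N^{1/2}$; your display swaps the two powers of $\Omega_N$, which does not affect the bound $\kappa(\Omega_N)^{1/2}\|\widehat{Q}_N-Q_N\|\|\widehat{Q}_N^{-1}\|$. Your treatment of the meat in (b) also matches. The step you flag as the main obstacle is routine.

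\textbf{First gap: the CLT.} With $|X_{Nn}|\le C_xC_u\lambda_{\min}(\Omega_N)^{-1/2}$ and degree $O(\mathcal{M}^{H})$, the first term of your Baldi--Rinott bound is, up to constants, exactly $N(\mathcal{M}^{H})^{2}\lambda_{\min}(\Omega_N)^{-3/2}=\delta_N^{3/4}(N/\mathcal{M}^{H})^{1/4}$. Since $N/\mathcal{M}^{H}\to\infty$ under (A4), this term vanishes only if $\delta_N=o\big((\mathcal{M}^{H}/N)^{1/3}\big)$, which (A4) does not provide. Concretely, take isolated dyads ($\mathcal{M}^{H}=1$), $\mathbf{x}_n\equiv1$, and independent symmetric $u_n\in\{-1,0,1\}$ with $\mathbb{P}(u_n\neq0)=N^{-0.4}$. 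Then (A1)--(A5) hold with $\Omega_N=N^{0.6}$ and $\delta_N=N^{-0.2}$, yet that term equals $N^{0.1}$. What is missing is a dependency-graph CLT that needs only a higher-order condition. The paper uses Janson's (1988) Theorem 2 at order $4$, whose quantity $(N/D_N)^{1/4}D_N\Phi_N/\sigma_N$ squares to at most $2^{3/2}C\delta_N^{1/2}$. Your first term is the cube of the order-$3$ version of that quantity, which is why your condition is too strong.

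\textbf{Second gap: part (c).} Continuity of the square root at $I_q$ applies to $\mathcal{C}_N:=\mathcal{V}_N^{-1/2}\widehat{\mathcal{V}}_N\mathcal{V}_N^{-1/2}$, not to $\Psi_N:=\widehat{\mathcal{V}}_N^{-1/2}\mathcal{V}_N^{1/2}$. Neither $\widehat{\mathcal{V}}_N$ nor $\mathcal{V}_N$ is near $I_q$, and under heterogeneous rates $\mathcal{V}_N$ may be arbitrarily ill-conditioned, so plain continuity gives no scale-free bound. What you do get is the polar form $\Psi_N=O_N\mathcal{C}_N^{-1/2}$ with $O_N$ orthogonal. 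Rotation invariance of $N(\mathbf{0},I_q)$ does not make $O_N$ harmless, because $O_N$ is random and may depend on $\zeta_N:=\mathcal{V}_N^{-1/2}R_N'(\widehat{\boldsymbol{\beta}}-\boldsymbol{\beta})$. For example, a rotation carrying $\zeta_N$ to $(\|\zeta_N\|,0,\dots,0)'$ destroys the limit.

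The idea you need is that $\Psi_N$ is similar to $\widehat{\mathcal{V}}_N^{-1/4}\mathcal{V}_N^{1/2}\widehat{\mathcal{V}}_N^{-1/4}\succ0$. Its eigenvalues are therefore real and positive, and they lie between its extreme singular values, which are near one because $\Psi_N'\Psi_N=\mathcal{C}_N^{-1}$. A Schur decomposition then controls the nonnormal part and gives $\|\Psi_N-I_q\|_F\le\sqrt{q}\,\eta_N+2\sqrt{q\eta_N}$ with $\eta_N:=\|\mathcal{C}_N-I_q\|$, independent of the conditioning of $\mathcal{V}_N$ (Lemma \ref{lem:matrix}).

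Part (d) is immune to this problem. Since $\|\Psi_N\zeta_N\|^2=\zeta_N'\mathcal{C}_N^{-1}\zeta_N$, the rotation cancels, and (d) follows from (a) and (b) directly, which is how the paper proves it.
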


Note that parts (a) and (b) are statements on $\mathcal{E}_N$, whose probability tends to
one, and that part (b) delivers $\mathbb{P}(\widehat{\mathcal{V}}_N\succ0)\to1$, which is the event
on which parts (c) and (d) are stated. As previously pointed out, the sequence $\{R_N\}$
is arbitrary, and the bounds behind parts (a) and (b) do not involve it, so both
convergences hold uniformly over such sequences. Uniformity extends to the size statement
in part (d), over classes of data generating processes sharing the constants
$C_x$, $C_u$, $c_Q$ and $K$ of
Assumptions \ref{as:bound}, \ref{as:ident} and \ref{as:config} and a common vanishing
bound on $\delta_N$ (see the end of the proof of Theorem 1 in
\ref{sec:sma}).

Theorem \ref{thm:main} offers two advantages over available results. First, it dispenses
with independent clusters. Theorem 9 in \citet[p.~277]{hansen2019asymptotic} is also
rate-free, but it requires the observations to be partitioned into mutually independent
clusters, and a dyadic configuration need not admit a
growing number of them. Second, it tests joint restrictions
whose directions converge at different orders. The dyadic-robust $t$-statistic of
Theorem 3.1 in \citet[p.~676]{tabordmeehan2019inference} studentizes a single
coefficient under one common
convergence rate. Our proof of Theorem \ref{thm:main} uses the central limit theorem for dependency graphs of Theorem 2 in \citet[p.~307]{janson1988normal}. The following condition on dyad
multiplicity makes the jackknife estimator in
\eqref{eq:vjack} attain the same limit as the dyadic-robust one.

\begin{assumption}[Dyad multiplicity]\label{as:mult}
$\mathcal{J}_N\big/\lambda_{\min}(\Omega_N)\longrightarrow 0$.
\end{assumption}

Assumption \ref{as:mult} is introduced here for the jackknife alone. Since $\mathcal{J}_N\ge N$ by
\eqref{eq:degrees}, it imposes that $\Omega_N$ accumulate strictly faster than under
independent sampling, and it is the more demanding the more observations a single dyad
carries.

\begin{theorem}\label{thm:jack}
Let Assumptions \ref{as:dep}--\ref{as:mult} hold, and let $\{R_N\}$ be as in Theorem
\ref{thm:main}. With $\mathcal{V}_N:=R_N'V_NR_N$ and $\widehat{\mathcal{V}}^{J}_N := R_N'\widehat{V}^{J}R_N$,
as $G\to\infty$: (a) $\mathcal{V}_N^{-1/2}\widehat{\mathcal{V}}^{J}_N\mathcal{V}_N^{-1/2}\longrightarrow_p I_q$,
uniformly over restriction sequences as in Theorem \ref{thm:main}(b); (b)
$\widehat{\mathcal{V}}^{J}_N\succeq0$ on $\mathcal{E}^{J}_N$, and
$\mathbb{P}(\mathcal{E}^{J}_N)\to1$, $\mathbb{P}(\widehat{\mathcal{V}}^{J}_N\succ0)\to1$; (c) on
that event, $(\widehat{\mathcal{V}}^{J}_N)^{-1/2}R_N'(\widehat{\boldsymbol{\beta}}-\boldsymbol{\beta})
\longrightarrow_d N(\mathbf{0},I_q)$; (d) under $H_0: R_N'\boldsymbol{\beta}=\mathbf{r}$,
\[
W^{J}_N := \big(R_N'\widehat{\boldsymbol{\beta}}-\mathbf{r}\big)'
\big(\widehat{\mathcal{V}}^{J}_N\big)^{-1}\big(R_N'\widehat{\boldsymbol{\beta}}-\mathbf{r}\big)
\;\longrightarrow_d\; \chi^2_q .
\]
\end{theorem}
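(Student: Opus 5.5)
The plan is to reduce everything to part (a), since Theorem \ref{thm:main}(a) already gives the CLT for $R_N'(\widehat{\boldsymbol{\beta}}-\boldsymbol{\beta})$ standardized by $\mathcal{V}_N$. Once (a) is established, (b)--(d) follow as in the proof of Theorem \ref{thm:main}(c)--(d). Positive semidefiniteness is immediate because \eqref{eq:vjack} is a sum of outer products. For $\mathbb{P}(\mathcal{E}^{J}_N)\to1$, I would show $\max_g\|A_g\|/N\le C_x^2\mathcal{M}^{H}/N\to0$; this uses $\delta_N\to0$ together with $\lambda_{\max}(\Omega_N)=O(N\mathcal{M}^{H})$, which give $(\mathcal{M}^{H})^{2}=o(N)$. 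Since $\widehat{Q}_N\to_p Q_N\succeq c_Q I$, every $A_{(g)}/N$ is then uniformly invertible. Part (a) then yields $\mathbb{P}(\widehat{\mathcal{V}}^J_N\succ0)\to1$, and a Slutsky argument on $\mathcal{V}_N^{-1/2}\widehat{\mathcal{V}}^J_N\mathcal{V}_N^{-1/2}$ gives (c) and the $\chi^2_q$ limit in (d).

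For (a), the first step is an exact deletion identity. With $\widehat{\mathbf{s}}_g:=\sum_{n\in\mathcal{N}_g}\mathbf{x}_n\widehat u_n$, the normal equations give $\widehat{\boldsymbol{\beta}}_{(g)}-\widehat{\boldsymbol{\beta}}=-A_{(g)}^{-1}\widehat{\mathbf{s}}_g=-A^{-1}\widehat{\mathbf{s}}_g-A^{-1}A_gA_{(g)}^{-1}\widehat{\mathbf{s}}_g$. Because each observation lies in exactly two groups, $\sum_g\widehat{\mathbf{s}}_g=2\sum_n\mathbf{x}_n\widehat u_n=\mathbf{0}$. Hence $\bar{\boldsymbol{\beta}}-\widehat{\boldsymbol{\beta}}$ is purely the second-order average $-G^{-1}\sum_gA^{-1}A_gA_{(g)}^{-1}\widehat{\mathbf{s}}_g$. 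Since centring at $\bar{\boldsymbol{\beta}}$ only lowers the sum of squares relative to centring at $\widehat{\boldsymbol{\beta}}$, I would write
\[
\widehat{V}^{J}=\tfrac{G-1}{G}\Big[\sum_g(\widehat{\boldsymbol{\beta}}_{(g)}-\widehat{\boldsymbol{\beta}})(\widehat{\boldsymbol{\beta}}_{(g)}-\widehat{\boldsymbol{\beta}})'-G(\bar{\boldsymbol{\beta}}-\widehat{\boldsymbol{\beta}})(\bar{\boldsymbol{\beta}}-\widehat{\boldsymbol{\beta}})'\Big].
\]
The leading term is $A^{-1}\big(\sum_g\widehat{\mathbf{s}}_g\widehat{\mathbf{s}}_g'\big)A^{-1}$ plus cross terms involving $A_gA_{(g)}^{-1}$, each carrying an extra factor of order $\mathcal{M}^{H}/N$.

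The key algebraic observation is how $\sum_g\widehat{\mathbf{s}}_g\widehat{\mathbf{s}}_g'=\sum_{n,m}c_{nm}\widehat u_n\widehat u_m\mathbf{x}_n\mathbf{x}_m'$ relates to the dyadic meat. Here $c_{nm}=|\psi(n)\cap\psi(m)|$, which equals $1$ on most of $\mathcal{A}$ and $2$ exactly when $\psi(n)=\psi(m)$. Therefore $\sum_g\widehat{\mathbf{s}}_g\widehat{\mathbf{s}}_g'=\widehat{\Omega}_N+\widehat{D}_N$, where $\widehat{D}_N:=\sum_{\psi(n)=\psi(m)}\widehat u_n\widehat u_m\mathbf{x}_n\mathbf{x}_m'$ is the same-dyad block. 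The block $\widehat D_N$ is positive semidefinite, being a sum over dyads of outer products, and $\|\widehat{D}_N\|\le C'\mathcal{J}_N$ by Assumption \ref{as:bound} (using $\widehat u_n$ bounded on $\mathcal{E}_N$ up to $o_p(1)$). Sandwiching, $\mathcal{V}_N^{-1/2}R_N'A^{-1}\widehat{D}_NA^{-1}R_N\mathcal{V}_N^{-1/2}$ has norm at most of order $\mathcal{J}_N/\lambda_{\min}(\Omega_N)$. This follows from $\mathcal{V}_N\succeq N^{-2}\lambda_{\min}(\Omega_N)R_N'Q_N^{-2}R_N$, a bound free of $R_N$, and Assumption \ref{as:mult} sends it to zero. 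This is exactly the place where, without Assumption \ref{as:mult}, only a positive excess survives, which is the one-sided conservativeness. The remaining term, $A^{-1}\widehat{\Omega}_NA^{-1}=\widehat V_N$, is handled by Theorem \ref{thm:main}(b).

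The main obstacle is bounding the second-order remainders uniformly in $R_N$, relative to $\lambda_{\min}(\Omega_N)/N^2$. These remainders are the cross terms $A^{-1}\widehat{\mathbf{s}}_g\widehat{\mathbf{s}}_g'A_{(g)}^{-1}A_gA^{-1}$, the quadratic term in $A_gA_{(g)}^{-1}\widehat{\mathbf{s}}_g$, and $G(\bar{\boldsymbol{\beta}}-\widehat{\boldsymbol{\beta}})(\cdot)'$. A crude bound via $\|A_g\|\le C_x^2\mathcal{M}^{H}$ gives cross terms of size $N^{-3}\mathcal{M}^{H}\sum_g\|\widehat{\mathbf{s}}_g\|^2$. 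I would handle this by Cauchy--Schwarz in the $\mathcal{V}_N$-metric, with the bound $\sum_g\|\widehat{\mathbf{s}}_g\|^2=O_p(\operatorname{tr}\Omega_N+\mathcal{J}_N)$. The difficulty is that $\operatorname{tr}\Omega_N$ may exceed $\lambda_{\min}(\Omega_N)$ by the diverging condition number. Using $\kappa(\Omega_N)=o((N/\mathcal{M}^{H})^{1/2})$, I would then need the product $(\mathcal{M}^{H}/N)\kappa(\Omega_N)$, or its square root after Cauchy--Schwarz, to vanish, and I expect this to follow from $\delta_N\to0$. The mean-shift term is bounded by Jensen's inequality by the same quantity divided by $G$. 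Replacing $\widehat u_n$ by $u_n$ in the $\widehat D_N$ bound costs terms of the same second-order type as in Theorem \ref{thm:main}(b).
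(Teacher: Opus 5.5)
Your proposal is correct and follows essentially the paper's route: the exact deletion identity $\widehat{\boldsymbol{\beta}}_{(g)}-\widehat{\boldsymbol{\beta}}=-A_{(g)}^{-1}T_g$, the cancellation $\sum_gT_g=\mathbf{0}$ that makes the centring second order, the split $\sum_gT_gT_g'=\widehat{\Omega}_N+\sum_dD_dD_d'$ with (A6) disposing of the same-dyad block, operator-norm bounds on the remaining second-order terms measured against $\lambda_{\min}(\Omega_N)$ uniformly in $R_N$, and then (b)--(d) read off as in Theorem \ref{thm:main}. One intermediate claim is wrong: (A4) together with $\lambda_{\max}(\Omega_N)\le2CN\mathcal{M}^{H}$ gives only $\mathcal{M}^{H}/N\le4C^{2}\delta_N\to0$, not $(\mathcal{M}^{H})^{2}=o(N)$, which fails on dense designs such as the complete directed one where $(\mathcal{M}^{H})^{2}\asymp N$; the weaker statement is all you actually use, and the step you only ``expect'' is then immediate, since $(\mathcal{M}^{H}/N)\kappa(\Omega_N)\le2C(\mathcal{M}^{H})^{2}/\lambda_{\min}(\Omega_N)\le2C\delta_N^{1/2}$ (using $\mathcal{M}^{H}\le N$), which is exactly how the paper bounds its $B_2$ and $B_3$ terms.
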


It is important to note that part (b) of Theorem \ref{thm:jack} is a finite-sample property rather than an asymptotic statement. Positive semidefiniteness holds at every $N$ by construction, following directly from the outer-product form of \eqref{eq:vjack}. In contrast, Theorem \ref{thm:main}(c) lacks this finite-sample guarantee, since 
$\widehat{\Omega}_N$ is a difference of two positive semidefinite matrices (see Lemma \ref{lem:leaveout}(d) in the supplement), and Assumptions \ref{as:dep}--\ref{as:config} leave this difference unrestricted. Therefore it is instructive to precisely delineate the role of Assumption \ref{as:mult}. The positive semidefiniteness in part (b) does not rely on it, holding unconditionally at every $N$. What Assumption \ref{as:mult} actually delivers is the asymptotic convergence established in parts (a) and (d). Without this condition, the jackknife estimator need not symmetrically attain 
$\mathcal{V}_N$, leaving in its place only the one-sided inequality described by Proposition \ref{prop:onesided} below. Consequently, the following proposition formally establishes that the error of the jackknife estimator is strictly one-sided.

\begin{proposition}\label{prop:onesided}
Let Assumptions \ref{as:dep}--\ref{as:config} hold and let $\{R_N\}$ be as in Theorem
\ref{thm:main}. Then there is a sequence $\varepsilon_N=o_p(1)$, not depending on $R_N$,
such that
$\lambda_{\min}\big(\mathcal{V}_N^{-1/2}\widehat{\mathcal{V}}^{J}_N\mathcal{V}_N^{-1/2}\big)\ge1-\varepsilon_N$.
Consequently, for every $\gamma\in(0,1)$,
$\limsup_{G\to\infty}\mathbb{P}\big(W^{J}_N>\chi^2_{q,1-\gamma}\big)\le\gamma$ under
$H_0:R_N'\boldsymbol{\beta}=\mathbf{r}$.
\end{proposition}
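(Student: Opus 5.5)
The plan is to decompose the jackknife into a first-order piece that mimics a leave-out version of the dyadic-robust meat plus a nonnegative multiplicity excess, and to show that all remaining terms are negligible relative to $\mathcal{V}_N$ under Assumptions \ref{as:dep}--\ref{as:config} alone.

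\textbf{Exact deletion identity.} On $\mathcal{E}^{J}_N$ I will use the finite-sample identity announced in the introduction. Write $\widehat{\boldsymbol{\beta}}_{(g)}-\widehat{\boldsymbol{\beta}} = -A_{(g)}^{-1}\widehat{\mathbf{s}}_g$, where $\widehat{\mathbf{s}}_g:=\sum_{n\in\mathcal{N}_g}\mathbf{x}_n\widehat u_n$. Expanding $A_{(g)}^{-1}=A^{-1}+A^{-1}A_gA_{(g)}^{-1}$ splits each deletion effect into a first-order term $-A^{-1}\widehat{\mathbf{s}}_g$ and a remainder $\mathbf{r}_g$. Because each observation lies in exactly two groups, $\sum_g\widehat{\mathbf{s}}_g=2\sum_n\mathbf{x}_n\widehat u_n=\mathbf{0}$, so the first-order terms cancel in $\bar{\boldsymbol{\beta}}-\widehat{\boldsymbol{\beta}}$. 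Centering at $\bar{\boldsymbol{\beta}}$ then costs only the remainders, and $\sum_g(\mathbf{a}_g-\bar{\mathbf{a}})(\mathbf{a}_g-\bar{\mathbf{a}})'\preceq\sum_g\mathbf{a}_g\mathbf{a}_g'$ lets us handle this centering cheaply. The leading matrix is
\[
\frac{G-1}{G}A^{-1}\Big(\sum_g\widehat{\mathbf{s}}_g\widehat{\mathbf{s}}_g'\Big)A^{-1}.
\]

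\textbf{The key combinatorial step.} Compute
\[
\sum_g\widehat{\mathbf{s}}_g\widehat{\mathbf{s}}_g'=\sum_{n,m}|\psi(n)\cap\psi(m)|\,\widehat u_n\widehat u_m\mathbf{x}_n\mathbf{x}_m'.
\]
Here $|\psi(n)\cap\psi(m)|\in\{0,1,2\}$, and it equals $2$ exactly when $\psi(n)=\psi(m)$. Hence
\[
\sum_g\widehat{\mathbf{s}}_g\widehat{\mathbf{s}}_g'=\widehat{\Omega}_N+\sum_d\widehat{\mathbf{t}}_d\widehat{\mathbf{t}}_d',\qquad \widehat{\mathbf{t}}_d:=\sum_{n:\psi(n)=d}\mathbf{x}_n\widehat u_n .
\]
The extra term is positive semidefinite, which is the source of one-sidedness. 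Dropping it yields
\[
R_N'\widehat V^{J}R_N\succeq \tfrac{G-1}{G}\,R_N'A^{-1}\widehat{\Omega}_NA^{-1}R_N - (\text{remainder terms}),
\]
and $A^{-1}\widehat{\Omega}_NA^{-1}=\widehat V_N$. Theorem \ref{thm:main}(b) gives $\mathcal{V}_N^{-1/2}R_N'\widehat V_NR_N\mathcal{V}_N^{-1/2}\to_pI_q$ uniformly in $R_N$, and $(G-1)/G\to1$.

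\textbf{Remainder terms.} The cross terms between the first-order piece and $\mathbf{r}_g$, together with the centering correction, must be shown to be $o_p(1)$ after sandwiching with $\mathcal{V}_N^{-1/2}R_N'$, uniformly in $R_N$. I expect this to be the main obstacle. My first attempt: bound $\|A_{(g)}^{-1}\|\le 2/(c_QN)$ with probability tending to one. This uses $\|A_g\|\le C_x^2\mathcal{M}^H=o(N)$, where $\mathcal{M}^H=o(N)$ follows from $\delta_N\to0$ and $\lambda_{\min}(\Omega_N)=O(N\mathcal{M}^H)$; it also gives $\mathbb{P}(\mathcal{E}^J_N)\to1$. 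Then $\|\mathbf{r}_g\|\lesssim N^{-2}\mathcal{M}^H\|\widehat{\mathbf{s}}_g\|$. To avoid dependence on $R_N$, bound everything in the metric $\Omega_N^{-1/2}$: since $\mathcal{V}_N=R_N'Q_N^{-1}\Omega_NQ_N^{-1}R_N/N^2$, for any $R_N$ we have $\|\mathcal{V}_N^{-1/2}R_N'\mathbf{v}\|\le N\|\Omega_N^{-1/2}Q_N\mathbf{v}\|$. This reduces the task to showing
\[
N^2\sum_g\|\mathbf{r}_g\|^2/\lambda_{\min}(\Omega_N)=O_p\big((\mathcal{M}^H)^2\,\mathbb{E}\textstyle\sum_g\|\widehat{\mathbf{s}}_g\|^2/(N^2\lambda_{\min}(\Omega_N))\big).
\]
With $\sum_g\mathbb{E}\|\mathbf{s}_g\|^2\lesssim N\mathcal{M}^H$ this is $O\big((\mathcal{M}^H)^3/(N\lambda_{\min})\big)$, which is $o(1)$ since $\lambda_{\min}(\Omega_N)\gtrsim (N(\mathcal{M}^H)^3)^{1/2}$. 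The cross terms then follow by Cauchy--Schwarz. Replacing $\widehat u_n$ by $u_n$ in $\widehat{\mathbf{s}}_g$ is handled by the same argument used for $\widehat{\Omega}_N$ in Theorem \ref{thm:main}. The resulting $\varepsilon_N$ is a function of these $R_N$-free bounds only.

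\textbf{Size consequence.} On the event $\{\widehat{\mathcal{V}}^J_N\succ0\}$, the bound $\widehat{\mathcal{V}}^J_N\succeq(1-\varepsilon_N)\mathcal{V}_N$ gives
\[
W^J_N\le(1-\varepsilon_N)^{-1}\,\big\|\mathcal{V}_N^{-1/2}R_N'(\widehat{\boldsymbol{\beta}}-\boldsymbol{\beta})\big\|^2
\]
under $H_0$. By Theorem \ref{thm:main}(a) the right-hand side converges in distribution to $\chi^2_q$. The event $\{\widehat{\mathcal{V}}^J_N\succ0\}$ has probability tending to one by the eigenvalue bound, and Slutsky together with continuity of the $\chi^2_q$ distribution then yields $\limsup\mathbb{P}(W^J_N>\chi^2_{q,1-\gamma})\le\gamma$.
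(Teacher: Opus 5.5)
Your proposal is correct and follows essentially the paper's route. The shared steps are: the exact identity $\widehat{\boldsymbol{\beta}}_{(g)}-\widehat{\boldsymbol{\beta}}=-A_{(g)}^{-1}\widehat{\mathbf{s}}_g$; cancellation of the first-order terms through $\sum_g\widehat{\mathbf{s}}_g=\mathbf{0}$; the split $\sum_g\widehat{\mathbf{s}}_g\widehat{\mathbf{s}}_g'=\widehat{\Omega}_N+\sum_d\widehat{\mathbf{t}}_d\widehat{\mathbf{t}}_d'$ with the positive semidefinite multiplicity excess discarded; $R_N$-free remainder bounds in the $\Omega_N^{-1/2}$ metric; and $W^{J}_N\le\|\zeta_N\|^2/(1-\varepsilon_N)$. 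Your departures are minor: you discard the positive semidefinite centred quadratic remainder and keep $(G-1)/G$ as a multiplicative factor, whereas the paper bounds its $B_3$ and $B_4$ in norm because it needs a two-sided bound for Theorem \ref{thm:jack} anyway.

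The one step to make explicit is your Cauchy--Schwarz bound on the cross terms. Without (A6), the first-order factor $\sum_g\|\mathcal{V}_N^{-1/2}R_N'A^{-1}\widehat{\mathbf{s}}_g\|^2$ need not be $O_p(1)$, since it carries the multiplicity term and is only $O_p\big(N\mathcal{M}^{H}/\lambda_{\min}(\Omega_N)\big)$. Its product with your remainder bound $(\mathcal{M}^{H})^{3}/\{N\lambda_{\min}(\Omega_N)\}$ is nonetheless $O\big((\mathcal{M}^{H})^{4}/\lambda_{\min}(\Omega_N)^{2}\big)=O(\delta_N)$, so the cross terms still vanish. Alternatively, as the paper does, bound the unsandwiched cross term directly by $\max_g\|A_gA_{(g)}^{-1}\|\sum_g\|\widehat{\mathbf{s}}_g\|^2=O_p\big((\mathcal{M}^{H})^{2}\big)=o_p\big(\lambda_{\min}(\Omega_N)\big)$.
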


That a jackknife variance estimator errs upward is classical, going back to \citet[p.~590]{efron1981jackknife}, and it has been established for network functionals by \citet[p.~6107]{lin2020theoretical}. Note that the only difference between Proposition \ref{prop:onesided} and those results is that the error is signed in the positive semidefinite order, and that the bound holds uniformly over the restrictions tested. Proposition \ref{prop:onesided} does not impose Assumption \ref{as:mult}, which is what turns its inequality into the equality of Theorem \ref{thm:jack}(a). In particular, if Assumption \ref{as:mult} fails, a test that rejects when $W^{J}_N>\chi^2_{q,1-\gamma}$ still has asymptotic size no larger than $\gamma$.

Finally, note that Assumptions \ref{as:dep}--\ref{as:config} make $\mathbb{P}(\mathcal{E}_N)\to1$, but not $\mathcal{E}_N$ an almost-sure event at any fixed $N$. We therefore set $W_N:=0$ and $W^{J}_N:=0$ whenever the matrix required to compute each statistic is singular. Both statistics are then
defined on every sample, and the asymptotic size claims of Theorem \ref{thm:main}(d), Theorem \ref{thm:jack}(d) and Proposition \ref{prop:onesided} are ordinary probability statements; see, e.g., Remark 3.4 in \citet[p.~676]{tabordmeehan2019inference} for other alternatives.

\section{Discussion and Concluding Remarks}
\label{sec:conclusion}

In this paper we have considered Wald inference for linear regression on dyadic data in which several observations may share the same pair of units. We have shown that the test built from the dyadic-robust variance estimator is asymptotically valid for an arbitrary sequence of full-rank nonrandom joint restrictions, under a single condition on the accumulation of
dependence and with no convergence rate assumed or estimated. We have also shown that its delete-one-unit jackknife counterpart, which is positive semidefinite at every sample size, attains the same $\chi^2_q$ limiting distribution under one further condition on dyad multiplicity, and that it never over-rejects asymptotically when that condition fails. The
practical content is that the estimator applied researchers already compute
\citep{fafchamps2007formation,aronow2015cluster,graham2020network} supports joint hypothesis testing exactly as computed, on directed, multilayer, and panel dyadic designs alike, without the common-rate normalization that its existing formal justification requires. Appendix \ref{sec:mc2} puts that last point on a design in which the coefficients tested jointly do converge at three different rates, and finds the size of the test governed by the number of units rather than by the spread of those rates. The analogous rate-free result for clustered data is Theorem 9 of \citet[p.~277]{hansen2019asymptotic}, which relies fundamentally on mutually independent clusters. Since dyadic data inherently violate this independence, our Theorems \ref{thm:main} and \ref{thm:jack} substitute it with a dyadic dependency graph. The two sets of conditions are not nested: while their sampling scheme is a special case of our dyadic configuration, Assumptions \ref{as:bound} and \ref{as:acc} neither imply nor are implied by their cluster-size and moment regimes.

Three extensions are left for future research. The first is nonlinear estimation. The
Poisson pseudo-maximum-likelihood estimator of \citet{santossilva2006log} has a score that
is again a sum of dyad-indexed terms, so the route to Theorem \ref{thm:main} appears open.
The leave-unit-out algebra behind Theorem \ref{thm:jack} would there be replaced by an
asymptotic expansion. The second is
resampling. A bootstrap counterpart to the tests developed here would have to reproduce
dependence accumulating at heterogeneous and unknown rates, and its first-order
validity would require a separate argument. The third is growing
dimension. Node fixed effects, or layer-specific slopes with a growing number of layers,
violate the fixed-$K$ part of Assumption \ref{as:config}, and every bound in the
supplement would need to be made uniform in $K$. Finally, our analysis conditions on the observed network configuration throughout. Modeling the underlying network formation process represents a fundamentally distinct problem that remains beyond the scope of this paper.

\spacingset{1}
\putbib[references]
\end{bibunit}

\clearpage
\setcounter{page}{1}
\setcounter{section}{0}
\setcounter{equation}{0}
\setcounter{table}{0}
\setcounter{remark}{0}
\appendix
\renewcommand\thesection{Appendix SM.\Alph{section}}
\renewcommand\thesubsection{SM.\Alph{section}.\arabic{subsection}}
\renewcommand{\theequation}{SM.\Alph{section}.\arabic{equation}}
\renewcommand{\thelemma}{SM.\Alph{section}.\arabic{lemma}}
\renewcommand{\thetheorem}{SM.\Alph{section}.\arabic{theorem}}
\renewcommand{\theremark}{SM.\Alph{section}.\arabic{remark}}
\renewcommand{\thetable}{SM.\Alph{section}.\arabic{table}}
\makeatletter
\@addtoreset{equation}{section}
\@addtoreset{lemma}{section}
\@addtoreset{theorem}{section}
\@addtoreset{remark}{section}
\@addtoreset{table}{section}
\makeatother

\begin{bibunit}[jpe]

\spacingset{1}
\begin{center}
{\bf\Large Rate-Agnostic Wald Inference for Dyadic Regressions}\\[6pt]
{\large --- Supplemental Materials ---}\\[6pt]
\if0\blind
{\large Benjamin O. Harrison}\\[2pt]
{\small Department of Economics, Emory University, Rich Building 306,\\
1602 Fishburne Dr., Atlanta, GA 30322-2240, USA.\\
\faEnvelopeO: \href{mailto:benjamin.harrison@emory.edu}{benjamin.harrison@emory.edu}}\\[6pt]
{\large David T. Jacho-Ch\'{a}vez} (corresponding author)\\[2pt]
{\small Department of Economics, Emory University, Rich Building 306,\\
1602 Fishburne Dr., Atlanta, GA 30322-2240, USA.\\
\faEnvelopeO: \href{mailto:djachocha@emory.edu}{djachocha@emory.edu}}\\[6pt]
\fi
\today
\end{center}
\vspace{1em}

\spacingset{1.45}

\noindent Lemmas, equations, sections and tables numbered with an ``SM'' prefix are
stated in this supplement; all other numbers refer to the main text. Throughout,
$C:=C_x^2C_u^2$ as in the main text, the estimators and the events $\mathcal{E}_N$,
$\mathcal{E}^{J}_N$ are those of Section \ref{sec:model} in the main text, and Assumptions
(A1)--(A6) are those of Section \ref{sec:assumptions}.

\section{Proofs of the Main Results}
\label{sec:sma}

In this section we provide mathematical proofs of all the main results in the main text.
The proofs of Theorems 1 and 2 and of Proposition 1 come first, and the lemmas they invoke
are collected after them, numbered in the order in which the main proofs first use
them.

\begin{proof}[Proof of Theorem 1]
Throughout put $\Xi_N:=(NQ_N)^{-1}\Omega_N^{1/2}$, so
$\Xi_N\Xi_N'=V_N$ by equation
\eqref{eq:vhat} in the main text; put
$\mathcal{R}_N:=(R_N'V_NR_N)^{-1/2}R_N'\Xi_N$, a $q\times K$ matrix with
$\mathcal{R}_N\mathcal{R}_N'=I_q$ and $\|\mathcal{R}_N\|=1$; and put
$Z_N:=\Omega_N^{-1/2}S_N$, which has $\|Z_N\|=O_p(1)$ since
$\mathbb{E}\|Z_N\|^2=K$.

(a) Since $\widehat{\boldsymbol{\beta}}-\boldsymbol{\beta}=(N\widehat{Q}_N)^{-1}S_N$ and
$(N\widehat{Q}_N)^{-1}=(NQ_N)^{-1}Q_N\widehat{Q}_N^{-1}$,
\begin{equation}\label{eq:decomp}
\mathcal{V}_N^{-1/2}R_N'\big(\widehat{\boldsymbol{\beta}}-\boldsymbol{\beta}\big)=
\mathcal{R}_NZ_N+\mathcal{R}_N\big(\Gamma_N-I_K\big)Z_N ,
\end{equation}
with $\Gamma_N$ the matrix of Lemma \ref{lem:bread}. For the leading term, fix a unit
vector $\mathbf{c}\in\mathbb{R}^{q}$; then
$\|\mathcal{R}_N'\mathbf{c}\|=1$, so $\{\mathcal{R}_N'\mathbf{c}\}$ is a sequence of unit
vectors and Lemma \ref{lem:clt} gives
$\mathbf{c}'\mathcal{R}_NZ_N\to_dN(0,1)$; the Cram\'{e}r--Wold device yields
$\mathcal{R}_NZ_N\to_dN(\mathbf{0},I_q)$. For the remainder,
$\|\mathcal{R}_N(\Gamma_N-I_K)Z_N\|\le1\cdot o_p(1)\cdot O_p(1)=o_p(1)$ by Lemma
\ref{lem:bread}, and Slutsky's theorem applied to equation \eqref{eq:decomp} gives (a).
Both bounds are free of $R_N$.

(b) Put $\widehat{\Lambda}_N:=\Omega_N^{-1/2}\widehat{\Omega}_N\Omega_N^{-1/2}$; Lemmas
\ref{lem:infeasible} and \ref{lem:residual} give $\widehat{\Lambda}_N\to_pI_K$. Since
$\Xi_N^{-1}=\Omega_N^{-1/2}NQ_N$,
$\Xi_N^{-1}\widehat{V}_N\Xi_N^{-1\prime}=\Gamma_N\widehat{\Lambda}_N\Gamma_N'$, and the matrix
$U_N:=\Xi_N^{-1}V_N^{1/2}$ is orthogonal, so by invariance of the operator norm under
orthogonal conjugation,
\begin{equation}\label{eq:vratio}
\big\|V_N^{-1/2}\widehat{V}_NV_N^{-1/2}-I_K\big\|=
\big\|\Gamma_N\widehat{\Lambda}_N\Gamma_N'-I_K\big\|\longrightarrow_p0 ,
\end{equation}
the convergence by writing $\Gamma\widehat{\Lambda}\Gamma'-I=
(\Gamma-I)\widehat{\Lambda}\Gamma'+(\widehat{\Lambda}-I)\Gamma'+(\Gamma'-I)$ and using
Lemma \ref{lem:bread}. Finally set $R_N^{*}:=V_N^{1/2}R_N(R_N'V_NR_N)^{-1/2}$, which has
$R_N^{*\prime}R_N^{*}=I_q$; then
$\mathcal{V}_N^{-1/2}\widehat{\mathcal{V}}_N\mathcal{V}_N^{-1/2}-I_q=
R_N^{*\prime}\big(V_N^{-1/2}\widehat{V}_NV_N^{-1/2}-I_K\big)R_N^{*}$, whose operator norm
is at most that in equation \eqref{eq:vratio}, uniformly in $R_N$.

(c) Let $\eta_N:=\|\mathcal{V}_N^{-1/2}\widehat{\mathcal{V}}_N\mathcal{V}_N^{-1/2}-I_q\|=o_p(1)$ by (b). On
$\{\eta_N\le1/2\}$, whose probability tends to one,
$\mathcal{V}_N^{-1/2}\widehat{\mathcal{V}}_N\mathcal{V}_N^{-1/2}$ has all eigenvalues in $[1/2,3/2]$, hence
$\widehat{\mathcal{V}}_N\succ0$. On that event Lemma \ref{lem:matrix} gives
$\|\widehat{\mathcal{V}}_N^{-1/2}\mathcal{V}_N^{1/2}-I_q\|_F\le\sqrt q\,\eta_N+2\sqrt{q\eta_N}\to_p0$, and
since $\widehat{\mathcal{V}}_N^{-1/2}R_N'(\widehat{\boldsymbol{\beta}}-\boldsymbol{\beta})=
(\widehat{\mathcal{V}}_N^{-1/2}\mathcal{V}_N^{1/2})\,\mathcal{V}_N^{-1/2}R_N'
(\widehat{\boldsymbol{\beta}}-\boldsymbol{\beta})$, part (a) and Slutsky's theorem give
(c).

(d) Under $H_0$, put
$\zeta_N:=\mathcal{V}_N^{-1/2}R_N'(\widehat{\boldsymbol{\beta}}-\boldsymbol{\beta})$ and
$\mathcal{C}_N:=\mathcal{V}_N^{-1/2}\widehat{\mathcal{V}}_N\mathcal{V}_N^{-1/2}$. Exactly, and with no square root of
$\widehat{\mathcal{V}}_N$ appearing, $W_N=\zeta_N'\mathcal{C}_N^{-1}\zeta_N$. By (b),
$\mathcal{C}_N^{-1}\to_pI_q$ on the event where $\mathcal{C}_N\succ0$; by (a),
$\|\zeta_N\|=O_p(1)$; hence
$|W_N-\zeta_N'\zeta_N|\le\|\mathcal{C}_N^{-1}-I_q\|\|\zeta_N\|^2=o_p(1)$ and
$\zeta_N'\zeta_N\to_d\chi^2_q$ gives (d).

It remains to record the uniform version of (d) claimed in Section \ref{sec:assumptions}.
Let $\mathcal{F}_N(\bar\delta_N)$ be the class of data generating processes satisfying
(A1)--(A3) and (A5) with the same constants $C_x$, $C_u$, $c_Q$ and $K$ and with
$\delta_N\le\bar\delta_N$, where $\bar\delta_N\to0$. Every bound used above is explicit in
those constants and in $\delta_N$ alone, so each step holds uniformly over
$\mathcal{F}_N(\bar\delta_N)$ as written. Suppose the conclusion failed. Then for some
$\varepsilon>0$ there would be a subsequence $N_k\uparrow\infty$ and
$P_k\in\mathcal{F}_{N_k}(\bar\delta_{N_k})$ with
$|\mathbb{P}_{P_k}(W_{N_k}>\chi^2_{q,1-\gamma})-\gamma|\ge\varepsilon$ for every $k$.
Reading $\{P_k\}$ as a single sequence of data generating processes along $N_k$, it
satisfies (A1)--(A3) and (A5) with those constants, while
$\delta_{N_k}\le\bar\delta_{N_k}\to0$ delivers (A4); part (d) therefore applies to it and
$W_{N_k}\to_d\chi^2_q$. Continuity of the $\chi^2_q$ distribution function at
$\chi^2_{q,1-\gamma}$ forces
$\mathbb{P}_{P_k}(W_{N_k}>\chi^2_{q,1-\gamma})\to\gamma$, a contradiction. The same
argument applies to $W^{J}_N$ and Theorem 2(d) over a class that in addition bounds
$\mathcal{J}_N/\lambda_{\min}(\Omega_N)$. No rate of convergence to $\gamma$ is claimed.
\hfill\end{proof}

\begin{proof}[Proof of Theorem 2]
Recall $\Xi_N:=(NQ_N)^{-1}\Omega_N^{1/2}$ and that $U_N:=\Xi_N^{-1}V_N^{1/2}$ is orthogonal,
so $\|V_N^{-1/2}BV_N^{-1/2}\|=\|\Xi_N^{-1}B\Xi_N^{-1\prime}\|$ for every symmetric $B$. Since
$\Xi_N^{-1}A^{-1}=\Omega_N^{-1/2}Q_N\widehat{Q}_N^{-1}=\Gamma_N\Omega_N^{-1/2}$ with
$\Gamma_N$ exactly the matrix of Lemma \ref{lem:bread}, for every symmetric $B$,
\begin{equation}\label{eq:metric}
\Big\|V_N^{-1/2}A^{-1}BA^{-1}V_N^{-1/2}\Big\|=
\Big\|\Gamma_N\Omega_N^{-1/2}B\,\Omega_N^{-1/2}\Gamma_N'\Big\|\;\le\;
\|\Gamma_N\|^2\,\frac{\|B\|}{\lambda_{\min}(\Omega_N)} .
\end{equation}
Applying equation \eqref{eq:metric} to each of Lemma \ref{lem:remainders}'s blocks and
substituting $\rho=4C_x^2\mathcal{M}^{H}/(c_QN)$: the $B_2$ and $B_3$ bounds become
$96CC_x^2(\mathcal{M}^{H})^{2}/c_Q$ and at most $256CC_x^4(\mathcal{M}^{H})^{2}/c_Q^2$,
each bounded after division by $\lambda_{\min}(\Omega_N)$ by a constant times
$(\mathcal{M}^{H})^{2}/\lambda_{\min}(\Omega_N)\le\delta_N^{1/2}$ by Lemma
\ref{lem:struct}(e)(ii); and for $B_4$, by equation \eqref{eq:degrees} in the main text,
$N\mathcal{M}^{H}/\lambda_{\min}(\Omega_N)=\delta_N^{1/2}(N/\mathcal{M}^{H})^{1/2}\le
\delta_N^{1/2}(G/2)^{1/2}$, so
$32CN\mathcal{M}^{H}/\{G\lambda_{\min}(\Omega_N)\}\le(32C/\sqrt2)\delta_N^{1/2}$.
Therefore, on the event $\mathcal{G}_N$ of Lemma \ref{lem:remainders} and for
every $N$ satisfying the degree condition
\eqref{eq:tencond} stated there,
\begin{equation}\label{eq:master}
\Big\|V_N^{-1/2}\big(\widehat{V}^{J}-\widehat{V}_N\big)V_N^{-1/2}\Big\|\;\le\;
\|\Gamma_N\|^2\left[\frac{4C\,\mathcal{J}_N}{\lambda_{\min}(\Omega_N)}+
C_{\ast}\,\delta_N^{1/2}\right],\qquad
C_{\ast}:=\frac{96CC_x^2}{c_Q}+\frac{256CC_x^4}{c_Q^2}+\frac{32C}{\sqrt2},
\end{equation}
a constant depending only on $(C_x,C_u,c_Q)$. This display is the whole content of the
theorem: the bracket's second term vanishes under (A4) alone, and its first term is
exactly what (A6) is for. By Lemma \ref{lem:bread}, $\|\Gamma_N\|^2\le4$ on an event of
probability tending to one, so combining equation \eqref{eq:master} with (A4) and (A6),
\begin{equation}\label{eq:vjconv}
\Big\|V_N^{-1/2}\big(\widehat{V}^{J}-\widehat{V}_N\big)V_N^{-1/2}\Big\|
\longrightarrow_p0 .
\end{equation}

(a) With $R_N^{*}:=V_N^{1/2}R_N\mathcal{V}_N^{-1/2}$, which has $R_N^{*\prime}R_N^{*}=I_q$,
$\mathcal{V}_N^{-1/2}\widehat{\mathcal{V}}^{J}_N\mathcal{V}_N^{-1/2}-\mathcal{V}_N^{-1/2}\widehat{\mathcal{V}}_N\mathcal{V}_N^{-1/2}=
R_N^{*\prime}\big(V_N^{-1/2}(\widehat{V}^{J}-\widehat{V}_N)V_N^{-1/2}\big)R_N^{*}$, whose
operator norm is at most that in equation \eqref{eq:vjconv}, uniformly in $R_N$; adding
Theorem 1(b) gives (a).

(b) $\mathbb{P}(\mathcal{E}^{J}_N)\to1$ is Lemma \ref{lem:leaveout}(a); on
$\mathcal{E}^{J}_N$, equation \eqref{eq:vjack} in the main text exhibits
$\widehat{V}^{J}$ as a nonnegative multiple of a sum of outer products, so
$\widehat{\mathcal{V}}^{J}_N\succeq0$ for every $R_N$. For strict definiteness, let
$\eta^{J}_N:=\|\mathcal{V}_N^{-1/2}\widehat{\mathcal{V}}^{J}_N\mathcal{V}_N^{-1/2}-I_q\|=o_p(1)$ by (a); on
$\{\eta^{J}_N\le1/2\}$ all eigenvalues lie in $[1/2,3/2]$, hence
$\widehat{\mathcal{V}}^{J}_N\succ0$.

(c) On the event of (b), Lemma \ref{lem:matrix} gives
$\|(\widehat{\mathcal{V}}^{J}_N)^{-1/2}\mathcal{V}_N^{1/2}-I_q\|_F\to_p0$, and Theorem 1(a), which
concerns $\widehat{\boldsymbol{\beta}}$ only and is untouched by the choice of variance
estimator, with Slutsky's theorem gives (c).

(d) Exactly as in the proof of Theorem 1(d), with
$\mathcal{C}^{J}_N:=\mathcal{V}_N^{-1/2}\widehat{\mathcal{V}}^{J}_N\mathcal{V}_N^{-1/2}$,
$W^{J}_N=\zeta_N'(\mathcal{C}^{J}_N)^{-1}\zeta_N$; by (a),
$(\mathcal{C}^{J}_N)^{-1}\to_pI_q$ on the event where $\mathcal{C}^{J}_N\succ0$, and
$\zeta_N'\zeta_N\to_d\chi^2_q$ gives (d).
\hfill\end{proof}

\begin{proof}[Proof of Proposition 1]
Assumption (A6) is not used anywhere in this proof. By Lemma \ref{lem:remainders}, on
its event $\mathcal{G}_N$ and for $N$ satisfying the
degree condition \eqref{eq:tencond},
\begin{align*}
V_N^{-1/2}\widehat{V}^{J}V_N^{-1/2}&=V_N^{-1/2}\widehat{V}_NV_N^{-1/2}+
V_N^{-1/2}A^{-1}B_1A^{-1}V_N^{-1/2}+E_N,\\
E_N:&=V_N^{-1/2}A^{-1}\big(B_2-B_3-B_4\big)A^{-1}V_N^{-1/2}.
\end{align*}
The bounds on $B_2,B_3,B_4$, fed through equation \eqref{eq:metric} and simplified as in
the proof of Theorem 2, give $\|E_N\|\le\|\Gamma_N\|^2C_{\ast}\delta_N^{1/2}\to_p0$ by
(A4) alone. The middle term is positive semidefinite because $B_1\succeq0$ and congruence
preserves positive semidefiniteness, and the first term tends to $I_K$ in probability by
equation \eqref{eq:vratio}. Writing
$\varepsilon_N:=\|V_N^{-1/2}\widehat{V}_NV_N^{-1/2}-I_K\|+\|E_N\|=o_p(1)$, it follows
that $V_N^{-1/2}\widehat{V}^{J}V_N^{-1/2}\succeq(1-\varepsilon_N)I_K$; congruence by
$R_N^{*}$ preserves the ordering and yields
$\mathcal{C}^{J}_N\succeq(1-\varepsilon_N)I_q$ uniformly in $R_N$, the first display of
the Proposition. For the second, on $\{\varepsilon_N<1\}$ the matrix
$\mathcal{C}^{J}_N$ is invertible with
$(\mathcal{C}^{J}_N)^{-1}\preceq(1-\varepsilon_N)^{-1}I_q$, so
$W^{J}_N\le\|\zeta_N\|^2/(1-\varepsilon_N)$. Let
$\mathcal{H}_N:=\mathcal{G}_N\cap\{\varepsilon_N<1\}$, so
$\mathbb{P}(\mathcal{H}_N)\to1$. Fix $\gamma\in(0,1)$; since
$\|\zeta_N\|^2\to_d\chi^2_q$ by Theorem 1(a) and $\varepsilon_N=o_p(1)$, Slutsky's
theorem gives $\|\zeta_N\|^2/(1-\varepsilon_N)\to_d\chi^2_q$, and splitting on
$\mathcal{H}_N$,
\[
\mathbb{P}\big(W^{J}_N>\chi^2_{q,1-\gamma}\big)\;\le\;
\mathbb{P}\!\left(\frac{\|\zeta_N\|^2}{1-\varepsilon_N}>\chi^2_{q,1-\gamma}\right)+
\mathbb{P}\big(\mathcal{H}_N^{c}\big)\;\longrightarrow\;\gamma+0 .
\]
On $\mathcal{H}_N^{c}$ nothing is assumed about $W^{J}_N$: it is a well-defined random
variable there by the convention of Section \ref{sec:assumptions} in the main text,
setting a Wald statistic to zero when the matrix it inverts is singular, and the event is
discarded whole.
\hfill\end{proof}

The following lemmas appear in the order the proofs above first invoke them.

\begin{lemma}\label{lem:bread}
Under Assumptions (A1)--(A5), on the event of Lemma \ref{lem:design} where
$\widehat{Q}_N$ is invertible, define
$\Gamma_N:=\Omega_N^{-1/2}Q_N\widehat{Q}_N^{-1}\Omega_N^{1/2}$. Then
$\|\Gamma_N-I_K\|\longrightarrow_p0$.
\end{lemma}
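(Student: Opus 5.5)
The plan is to write $\Gamma_N-I_K$ as a conjugated perturbation and then pay for the conjugation with the condition number of $\Omega_N$. On the event where $\widehat{Q}_N$ is invertible,
\[
\Gamma_N-I_K=\Omega_N^{-1/2}\big(Q_N\widehat{Q}_N^{-1}-I_K\big)\Omega_N^{1/2}
=\Omega_N^{-1/2}\big(Q_N-\widehat{Q}_N\big)\widehat{Q}_N^{-1}\Omega_N^{1/2}.
\]
Submultiplicativity then gives
\[
\|\Gamma_N-I_K\|\le\kappa(\Omega_N)^{1/2}\,\|\widehat{Q}_N-Q_N\|\,\|\widehat{Q}_N^{-1}\|.
\]
It therefore suffices to establish three things: a rate for $\|\widehat{Q}_N-Q_N\|$, a high-probability bound $\|\widehat{Q}_N^{-1}\|\le 2/c_Q$, and a bound on $\kappa(\Omega_N)$ in terms of $\delta_N$. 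This is the only place the rate-free structure bites. Conjugation by $\Omega_N^{1/2}$ can amplify errors by $\kappa(\Omega_N)^{1/2}$, and that factor is allowed to diverge.

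\textbf{Step 1: design matrix.} Write $\widehat{Q}_N-Q_N=N^{-1}\sum_n(\mathbf{x}_n\mathbf{x}_n'-\mathbb{E}\mathbf{x}_n\mathbf{x}_n')$. By (A1), the cross-covariance of the summands indexed by $n$ and $m$ vanishes whenever $\mathbf{1}_{nm}=0$. Each $n$ has at most $2\mathcal{M}^{H}$ indices $m$ with $\mathbf{1}_{nm}=1$, so $|\mathcal{A}|\le 2N\mathcal{M}^{H}$. Together with (A2) this gives
\[
\mathbb{E}\|\widehat{Q}_N-Q_N\|_F^2\le N^{-2}\,|\mathcal{A}|\,4C_x^4\le 8C_x^4\mathcal{M}^{H}/N.
\]
Hence $\|\widehat{Q}_N-Q_N\|=O_p\big((\mathcal{M}^{H}/N)^{1/2}\big)$. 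This is presumably the content of Lemma \ref{lem:design}.

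\textbf{Step 2: the eigenvalue ratio.} By the same pair count and (A2), $\lambda_{\max}(\Omega_N)\le\|\Omega_N\|_F\le C|\mathcal{A}|\le 2CN\mathcal{M}^{H}$. By (A4), $\lambda_{\min}(\Omega_N)=\{N(\mathcal{M}^{H})^3/\delta_N\}^{1/2}$. Dividing,
\[
\kappa(\Omega_N)\le 2C\,\delta_N^{1/2}\,(N/\mathcal{M}^{H})^{1/2}\,(\mathcal{M}^{H})^{-1}\le 2C\,\delta_N^{1/2}(N/\mathcal{M}^{H})^{1/2},
\]
using $\mathcal{M}^{H}\ge1$ from (A5). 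This is the bound $\kappa=o((N/\mathcal{M}^{H})^{1/2})$ quoted in the main text.

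Step 2 also yields $\mathcal{M}^{H}/N\to0$. Since $\lambda_{\min}\le\lambda_{\max}\le 2CN\mathcal{M}^{H}$, we get $\delta_N\ge(\mathcal{M}^{H})/(4C^2N)$. Step 1 then gives $\|\widehat{Q}_N-Q_N\|=o_p(1)$. By Weyl's inequality and (A3), $\lambda_{\min}(\widehat{Q}_N)\ge c_Q/2$ with probability tending to one. On that event $\widehat{Q}_N$ is invertible, which gives $\mathbb{P}(\mathcal{E}_N)\to1$, and $\|\widehat{Q}_N^{-1}\|\le2/c_Q$.

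\textbf{Step 3: combine.} Putting the three bounds together,
\[
\|\Gamma_N-I_K\|\le\big(2C\delta_N^{1/2}\big)^{1/2}(N/\mathcal{M}^{H})^{1/4}\cdot O_p\big((\mathcal{M}^{H}/N)^{1/2}\big)\cdot\frac{2}{c_Q}
=O_p\big(\delta_N^{1/4}(\mathcal{M}^{H}/N)^{1/4}\big)\to_p0.
\]
The exponents balance because the design error is of order $(\mathcal{M}^{H}/N)^{1/2}$ while the amplification is only of order $(N/\mathcal{M}^{H})^{1/4}$. All constants depend only on $(C_x,C_u,c_Q,K)$ and $\delta_N$, which supports the uniformity claims made after Theorem \ref{thm:main}.

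The main obstacle is the amplification by $\kappa(\Omega_N)^{1/2}$. A naive argument that treats $\Omega_N^{1/2}$ as if it were of scalar order would fail here. The proof hinges on capping $\lambda_{\max}(\Omega_N)$ via bounded support, and bounded support is exactly what (A2) is used for. If the exponent bookkeeping turned out too loose, I would instead work with the symmetric form $\Omega_N^{-1/2}(Q_N-\widehat{Q}_N)\Omega_N^{1/2}$ directly, but the crude bound above appears to suffice.
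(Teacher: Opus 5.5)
Your argument is the paper's own. It uses the same factorization $\Gamma_N-I_K=\Omega_N^{-1/2}(Q_N-\widehat{Q}_N)\widehat{Q}_N^{-1}\Omega_N^{1/2}$ and the same three inputs: $\lambda_{\max}(\Omega_N)\le 2CN\mathcal{M}^{H}$ from the pair count, the design rate $O_p((\mathcal{M}^{H}/N)^{1/2})$, and $\|\widehat{Q}_N^{-1}\|\le 2/c_Q$ with probability tending to one. Your final rate $\delta_N^{1/4}(\mathcal{M}^{H}/N)^{1/4}$ is exactly the paper's $\{(\mathcal{M}^{H})^{2}/\lambda_{\min}(\Omega_N)\}^{1/2}$ once the definition of $\delta_N$ is substituted.

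The only slip is the stray factor $(\mathcal{M}^{H})^{-1}$ in your intermediate bound in Step 2. Direct substitution gives $\kappa(\Omega_N)\le 2C\delta_N^{1/2}(N/\mathcal{M}^{H})^{1/2}$ without any appeal to $\mathcal{M}^{H}\ge1$, so nothing downstream is affected.
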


\begin{proof}
From $\Gamma_N-I_K=\Omega_N^{-1/2}(Q_N-\widehat{Q}_N)\widehat{Q}_N^{-1}\Omega_N^{1/2}$,
\[
\|\Gamma_N-I_K\|\;\le\;
\left(\frac{\lambda_{\max}(\Omega_N)}{\lambda_{\min}(\Omega_N)}\right)^{1/2}
\big\|\widehat{Q}_N-Q_N\big\|\,\big\|\widehat{Q}_N^{-1}\big\| .
\]
By Lemma \ref{lem:struct}(d) the condition number is at most
$2CN\mathcal{M}^{H}/\lambda_{\min}(\Omega_N)$; by Lemma \ref{lem:design},
$\|\widehat{Q}_N-Q_N\|=O_p((\mathcal{M}^{H}/N)^{1/2})$ and
$\|\widehat{Q}_N^{-1}\|\le2/c_Q$ with probability tending to one. The factors of $N$
cancel, leaving
$\|\Gamma_N-I_K\|=O_p\big(\{(\mathcal{M}^{H})^{2}/\lambda_{\min}(\Omega_N)\}^{1/2}\big)
\to_p0$ by Lemma \ref{lem:struct}(e)(ii).
\hfill\end{proof}

\begin{lemma}\label{lem:clt}
Under Assumptions (A1)--(A5), for every sequence of unit vectors
$\mathbf{b}_N\in\mathbb{R}^{K}$,
$\mathbf{b}_N'\Omega_N^{-1/2}S_N\longrightarrow_d N(0,1)$. In particular
$Z_N:=\Omega_N^{-1/2}S_N\to_d N(\mathbf{0},I_K)$.
\end{lemma}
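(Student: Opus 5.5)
The plan is to apply Janson's dependency-graph central limit theorem \citep[Theorem 2, p.~307]{janson1988normal}, cited in the main text. Fix a sequence of unit vectors $\mathbf{b}_N$ and put $\mathbf{a}_N:=\Omega_N^{-1/2}\mathbf{b}_N$. The target is then the scalar sum
\[
T_N:=\mathbf{b}_N'\Omega_N^{-1/2}S_N=\sum_{n=1}^N \xi_{n},\qquad \xi_n:=\mathbf{a}_N'\mathbf{x}_nu_n .
\]
By Assumption (A3), $\mathbb{E}\xi_n=0$, and $\operatorname{Var}(T_N)=\mathbf{b}_N'\Omega_N^{-1/2}\Omega_N\Omega_N^{-1/2}\mathbf{b}_N=1$ exactly. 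Because the standardization is by the exact variance, no normalization of $\Omega_N$ has to converge, and the direction is allowed to change with $N$. By Assumption (A2), $|\xi_n|\le \|\mathbf{a}_N\|C_xC_u\le C^{1/2}\lambda_{\min}(\Omega_N)^{-1/2}=:A_N$ almost surely.

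Next I would build the dependency graph on $\{1,\dots,N\}$ with edge set $\mathcal{A}$, so that $n\sim m$ iff $\psi(n)\cap\psi(m)\neq\emptyset$. Assumption (A1) says that for disjoint index sets with no edges between them the families are independent, which is exactly Janson's dependency-graph property. The maximal degree is bounded as follows. Observation $n$ touches two units, each carrying at most $\mathcal{M}^H$ observations, so $n$ has at most $D_N:=2\mathcal{M}^H$ neighbours, counting itself.

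Janson's Theorem 2 requires uniformly bounded summands $|\xi_n|\le A_N$ together with the condition
\[
\Big(\tfrac{N}{D_N}\Big)^{1/m}\frac{D_N A_N}{\sqrt{\operatorname{Var}T_N}}\to0
\]
for some integer $m\ge 1$. Under that condition it gives $T_N/\sqrt{\operatorname{Var}T_N}\to_d N(0,1)$. Taking $m=3$, with $\operatorname{Var}T_N=1$, the quantity is
\[
(N/D_N)^{1/3}D_NA_N\asymp \frac{N^{1/3}(\mathcal{M}^H)^{2/3}}{\lambda_{\min}(\Omega_N)^{1/2}} .
\]
Raising this to the sixth power, it vanishes iff $N^2(\mathcal{M}^H)^4/\lambda_{\min}(\Omega_N)^3\to0$.

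Showing that this last condition follows from $\delta_N\to0$ is the main obstacle. Write
\[
\frac{N^2(\mathcal{M}^H)^4}{\lambda_{\min}^3}=\delta_N\cdot\frac{N\mathcal{M}^H}{\lambda_{\min}(\Omega_N)}.
\]
The second factor is bounded below, since $\lambda_{\min}\le\lambda_{\max}\le 2CN\mathcal{M}^H$ by Lemma \ref{lem:struct}(d), and it need not be bounded above. So $m=3$ may fail, and I would instead optimize over $m$ or use $m=2$. With $m=2$ the condition is $N(\mathcal{M}^H)/\lambda_{\min}\cdot\mathcal{M}^H\to0$, i.e.\ $N(\mathcal{M}^H)^2/\lambda_{\min}^2\to0$. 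This is weaker than $\delta_N\to 0$ because $\mathcal{M}^H\ge1$, giving $N(\mathcal{M}^H)^2/\lambda_{\min}^2\le\delta_N\to0$. So $m=2$ works directly, and I would check carefully that Janson's exponent convention, $(N/D)^{1/m}DA/\sigma$, indeed admits $m=2$ (any integer $m\ge 3$ is often stated, in which case I would fall back to the moment-method argument below).

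If the $m\ge3$ restriction does bind, an alternative route is the fourth-moment/Stein bound for dependency graphs. It gives Kolmogorov or Wasserstein distance of order $D_N^2\sum_n\mathbb{E}|\xi_n|^3+\big(D_N^3\sum_n\mathbb{E}\xi_n^4\big)^{1/2}$. These terms are bounded by $(\mathcal{M}^H)^2NA_N^3$ and $\big((\mathcal{M}^H)^3NA_N^4\big)^{1/2}=\delta_N^{1/2}C$. The first term equals $C^{3/2}(\mathcal{M}^H)^2N\lambda_{\min}^{-3/2}$, which is at most $C^{3/2}\delta_N^{1/2}\cdot\big(N\mathcal{M}^H/\lambda_{\min}\big)^{1/2}\cdot\lambda_{\min}^{-1/2}\cdot$const. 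Using $\lambda_{\min}\le 2CN\mathcal{M}^H$ this still needs care, so I would use the Stein form, in which the fourth-moment term matches $\delta_N$ exactly. This strongly suggests that (A4) was designed for it.

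Finally, since every bound depends on $\mathbf{b}_N$ only through $\|\mathbf{b}_N\|=1$, the convergence holds for every unit sequence. Applying this to fixed $\mathbf{b}$ together with the Cramér--Wold device gives $Z_N\to_dN(\mathbf{0},I_K)$.
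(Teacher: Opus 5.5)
\textbf{Gap: the choice of Janson's parameter.} Your setup matches the paper: Janson's dependency-graph CLT applied to $\xi_n=\mathbf{a}_N'\mathbf{x}_nu_n$, exact unit variance, the almost-sure bound $A_N=C^{1/2}\lambda_{\min}(\Omega_N)^{-1/2}$, the degree bound $2\mathcal{M}^{H}$, then Cram\'er--Wold. The step that is supposed to close the argument, however, is wrong. Janson's quantity $(N/D_N)^{1/m}D_NA_N=N^{1/m}D_N^{1-1/m}A_N$ gets \emph{smaller} as $m$ grows, because $N\ge D_N$. So the condition \emph{weakens} as $m$ increases, and dropping from $m=3$ to $m=2$ goes in the wrong direction. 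At $m=2$ the squared quantity is $ND_NA_N^2\asymp N\mathcal{M}^{H}/\lambda_{\min}(\Omega_N)$. It is not $N(\mathcal{M}^{H})^2/\lambda_{\min}(\Omega_N)^2$ as you wrote. By Lemma \ref{lem:struct}(d) it is at least $1/(2C)$, so the $m=2$ condition can never hold.

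\textbf{The Stein fallback does not close the gap.} With the almost-sure bound, its third-moment term is $D_N^2NA_N^3\asymp\{\delta_N\,N\mathcal{M}^{H}/\lambda_{\min}(\Omega_N)\}^{1/2}$. This vanishes exactly when your $m=3$ condition does, so it has the same obstruction. Only the fourth-moment term matches (A4). Under (A4) the factor $N\mathcal{M}^{H}/\lambda_{\min}(\Omega_N)$ can diverge. In the paper's multilayer design it is of order $G^{s}$, with $\delta_N\asymp G^{2s-1}$. The third-moment term then diverges for $s\in(1/3,1/2)$, which includes the $s=0.45$ configurations.

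\textbf{The fix.} Take $m=4$; any $m\ge4$ works. Then
$(N/D_N)^{1/4}D_NA_N\le C^{1/2}N^{1/4}(2\mathcal{M}^{H})^{3/4}\lambda_{\min}(\Omega_N)^{-1/2}$, and squaring gives
$2^{3/2}C\,\{N(\mathcal{M}^{H})^{3}\}^{1/2}/\lambda_{\min}(\Omega_N)=2^{3/2}C\,\delta_N^{1/2}\to0$. This is exactly the paper's argument, and it explains the form of (A4).

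\textbf{A minor point.} Janson's $D_N$ is the maximal degree \emph{excluding} the vertex itself, set to $1$ when the graph has no edges. You should therefore use $D_N=\max\{1,d_N\}\le2\mathcal{M}^{H}$. Note also that $N^{1/4}D_N^{3/4}$ is increasing in $D_N$, which is what lets you substitute the upper bound.
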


\begin{proof}
Fix $\{\mathbf{b}_N\}$, put $\mathbf{a}_N:=\Omega_N^{-1/2}\mathbf{b}_N$ and
$X_{Nn}:=(\mathbf{a}_N'\mathbf{x}_n)u_n$, so
$\sum_nX_{Nn}=\mathbf{b}_N'\Omega_N^{-1/2}S_N$. By (A3),
$\mathbb{E}[X_{Nn}]=0$, and
$\sigma_N^2=\mathbf{b}_N'\Omega_N^{-1/2}\Omega_N\Omega_N^{-1/2}\mathbf{b}_N=1$ exactly.
Since $\|\mathbf{a}_N\|\le\lambda_{\min}(\Omega_N)^{-1/2}$, (A2) gives $|X_{Nn}|\le
C_xC_u\lambda_{\min}(\Omega_N)^{-1/2}=:\Phi_N$ almost surely. Let
$\mathcal{D}_N$ have vertex set
$\{1,\dots,N\}$ and an edge between $n\ne m$ whenever $\mathbf{1}_{nm}=1$; if
$\mathcal{U}_1,\mathcal{U}_2$ are disjoint with no edge between them, then (A1) makes the
families $\{(\mathbf{x}_n,u_n)\}_{\mathcal{U}_1}$ and
$\{(\mathbf{x}_m,u_m)\}_{\mathcal{U}_2}$ independent, so $\mathcal{D}_N$ is a
dependency graph
for $\{X_{Nn}\}$.

Theorem 2 in \citet[p.~307]{janson1988normal}, taken at its parameter value $4$,
states that if a family $\{X_{Nn}\}_{n\le N}$ of random variables admits a dependency graph
whose maximal degree is at most $D_N$, if $|X_{Nn}|\le\Phi_N$ almost surely for every $n$,
and if $\sigma_N^{2}:=\operatorname{Var}(\sum_nX_{Nn})$ satisfies
\begin{equation}\label{eq:janson}
L_N\;:=\;\frac{(N/D_N)^{1/4}D_N\Phi_N}{\sigma_N}\;\longrightarrow\;0,
\end{equation}
then $\sigma_N^{-1}\sum_n\{X_{Nn}-\mathbb{E}X_{Nn}\}\to_dN(0,1)$. It remains to verify
\eqref{eq:janson}.

By the proof of Lemma \ref{lem:struct}(b), excluding $m=n$, the ordinary maximal
degree $d_N$ of $\mathcal{D}_N$ satisfies $d_N\le2\mathcal{M}^{H}-2$; the theorem is applied with
$D_N:=\max\{1,d_N\}$, and since $\mathcal{M}^{H}\ge1$,
\begin{equation}\label{eq:degbound}
D_N=\max\{1,d_N\}\;\le\;2\mathcal{M}^{H}.
\end{equation}
The distinction matters: a configuration of pairwise disjoint dyads has no edges, so
$D_N=1$ while $2\mathcal{M}^{H}-2=0$; equation \eqref{eq:degbound} holds in every case
and is what is used. Since $(N/D_N)^{1/4}D_N=N^{1/4}D_N^{3/4}$ is increasing
in $D_N$, substituting equation \eqref{eq:degbound},
\begin{align*}
L_N&=\frac{(N/D_N)^{1/4}D_N\Phi_N}{\sigma_N}\le
C_xC_u\frac{N^{1/4}(2\mathcal{M}^{H})^{3/4}}{\lambda_{\min}(\Omega_N)^{1/2}},\\
\text{so}\quad L_N^2&\le2^{3/2}C\,\frac{N^{1/2}(\mathcal{M}^{H})^{3/2}}{\lambda_{\min}(\Omega_N)}
=2^{3/2}C\,\delta_N^{1/2}\longrightarrow0
\end{align*}
by (A4). That theorem gives the conclusion, and the final claim
follows by the Cram\'{e}r--Wold device with constant sequences.
\hfill\end{proof}

\begin{lemma}\label{lem:infeasible}
Under Assumptions (A1)--(A5),
$\|\Omega_N^{-1/2}(\widetilde{\Omega}_N-\Omega_N)\Omega_N^{-1/2}\|_F\longrightarrow_p0$.
\end{lemma}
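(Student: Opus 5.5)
The plan is a second-moment argument. Write $\Delta_N:=\widetilde{\Omega}_N-\Omega_N$ and $\Lambda_N:=\Omega_N^{-1/2}\Delta_N\Omega_N^{-1/2}$. Because $\|\Omega_N^{-1/2}\|^2=\lambda_{\min}(\Omega_N)^{-1}$,
\[
\|\Lambda_N\|_F\le\frac{\|\Delta_N\|_F}{\lambda_{\min}(\Omega_N)} .
\]
It therefore suffices to show that $\mathbb{E}\|\Delta_N\|_F^2=O(N(\mathcal{M}^{H})^{3})$. Markov's inequality then gives
\[
\mathbb{E}\|\Lambda_N\|_F^2\le\frac{\mathbb{E}\|\Delta_N\|_F^2}{\lambda_{\min}(\Omega_N)^2}=O(\delta_N)\to0
\]
by (A4). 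The resulting bound is free of any rate or direction, as the later uniform statements require.

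\textbf{Step 1: centering.} I would first show that $\mathbb{E}[\widetilde{\Omega}_N]=\Omega_N$, which is Lemma \ref{lem:struct}(a). For pairs with $\mathbf{1}_{nm}=1$, the term $\mathbb{E}[u_nu_m\mathbf{x}_n\mathbf{x}_m']$ is exactly the covariance, since (A3) gives each factor mean zero. For node-disjoint pairs, (A1) and (A3) make $\mathbb{E}[\mathbf{x}_nu_nu_m\mathbf{x}_m']=0$, so $\Omega_N=\sum_{(n,m)\in\mathcal{A}}\mathbb{E}[u_nu_m\mathbf{x}_n\mathbf{x}_m']$. Hence $\Delta_N=\sum_{(n,m)\in\mathcal{A}}(W_{nm}-\mathbb{E}W_{nm})$, where $W_{nm}:=u_nu_m\mathbf{x}_n\mathbf{x}_m'$ satisfies $\|W_{nm}\|_F\le C$ by (A2).

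\textbf{Step 2: variance expansion.} I would expand
\[
\mathbb{E}\|\Delta_N\|_F^2=\sum_{(n,m)\in\mathcal{A}}\sum_{(k,l)\in\mathcal{A}}\operatorname{tr}\operatorname{Cov}(W_{nm},W_{kl}),
\]
interpreted entrywise. Each term is bounded by $2C^2$. If the union of dyads $\psi(n)\cup\psi(m)$ is node-disjoint from $\psi(k)\cup\psi(l)$, then (A1), applied to the index sets $\{n,m\}$ and $\{k,l\}$, makes the covariance vanish. Surviving quadruples therefore have both pairs in $\mathcal{A}$, and some element of one pair shares a unit with some element of the other.

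\textbf{Step 3: counting quadruples.} This is Lemma \ref{lem:struct}(c), and I expect it to be the main obstacle because it is where overlapping clusters cost powers of $\mathcal{M}^{H}$.
\begin{itemize}
\item Choose $n$: $N$ ways.
\item Choose $m$ with $\mathbf{1}_{nm}=1$: at most $2\mathcal{M}^{H}$ ways, since $m$ must touch one of the two units of $\psi(n)$.
\item Choose $k$ sharing a unit with $\psi(n)\cup\psi(m)$, a set of at most four units: at most $4\mathcal{M}^{H}$ ways.
\item Choose $l$ adjacent to $k$: at most $2\mathcal{M}^{H}$ ways.
\end{itemize}
The roles of $n,m$ and of $k,l$ are symmetric, so a constant factor covers every case of which index links the two pairs. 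This gives $O(N(\mathcal{M}^{H})^{3})$ quadruples, hence
\[
\mathbb{E}\|\Delta_N\|_F^2\le c\,C^2N(\mathcal{M}^{H})^{3}
\]
for a universal constant $c$. The care needed is to confirm that no configuration produces a fourth free factor of $\mathcal{M}^{H}$: the linking index must be fixed relative to at most four units, and $l$ relative to $k$. Dividing by $\lambda_{\min}(\Omega_N)^2$ and applying (A4) completes the proof.
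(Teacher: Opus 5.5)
Your proposal is correct and follows essentially the same route as the paper: centering via Lemma \ref{lem:struct}(a), using (A1) on the index sets $\{n,m\}$ and $\{k,l\}$ to kill covariances whose dyad unions are node-disjoint, counting the surviving quadruples as $O(N(\mathcal{M}^{H})^{3})$ as in Lemma \ref{lem:struct}(c), dividing by $\lambda_{\min}(\Omega_N)^2$, and finishing with Chebyshev. The differences are cosmetic. You bound the Frobenius inner product directly, which avoids the paper's $K^2$ factor from summing entrywise variances. Your chain-of-choices count gives a cruder constant than the paper's $18$.
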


\begin{proof}
For symmetric positive definite $\Omega$ and any $B$,
$\|\Omega^{-1/2}B\Omega^{-1/2}\|_F\le\|B\|_F/\lambda_{\min}(\Omega)$. Hence, by Lemma
\ref{lem:struct}(a),
\[
\mathbb{E}\big\|\Omega_N^{-1/2}\big(\widetilde{\Omega}_N-\Omega_N\big)
\Omega_N^{-1/2}\big\|_F^2\;\le\;
\frac{\sum_{a,b=1}^{K}\operatorname{Var}\big[(\widetilde{\Omega}_N)_{ab}\big]}
{\lambda_{\min}(\Omega_N)^2}.
\]
Fix $a,b$ and write $\xi_{ij}:=u_iu_jx_{ia}x_{jb}$, so
$\operatorname{Var}[(\widetilde{\Omega}_N)_{ab}]=\sum_{i,j,k,l}\mathbf{1}_{ij}
\mathbf{1}_{kl}\operatorname{Cov}(\xi_{ij},\xi_{kl})$. If
$(\psi_i\cup\psi_j)\cap(\psi_k\cup\psi_l)=\emptyset$, then (A1) applied with
$S_1=\{i,j\}$ and $S_2=\{k,l\}$ makes $\xi_{ij}$ and $\xi_{kl}$ independent, so only
$(i,j,k,l)\in S$ contribute; by (A2), $|\xi_{ij}|\le C$, so
$|\operatorname{Cov}(\xi_{ij},\xi_{kl})|\le2C^2$. With Lemma \ref{lem:struct}(c),
$\operatorname{Var}[(\widetilde{\Omega}_N)_{ab}]\le2C^2|S|\le36C^2N(\mathcal{M}^{H})^{3}$,
and summing over the $K^2$ entries the display above is bounded by
$36K^2C^2\delta_N\to0$ by (A4). Chebyshev's inequality completes the proof.
\hfill\end{proof}

\begin{lemma}\label{lem:residual}
Under Assumptions (A1)--(A5),
$\|\Omega_N^{-1/2}(\widehat{\Omega}_N-\widetilde{\Omega}_N)\Omega_N^{-1/2}\|_F
\longrightarrow_p0$.
\end{lemma}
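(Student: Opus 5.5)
On $\mathcal{E}_N$ we have $\widehat{u}_n=u_n-\mathbf{x}_n'\mathbf{h}$ with $\mathbf{h}:=\widehat{\boldsymbol{\beta}}-\boldsymbol{\beta}=(N\widehat{Q}_N)^{-1}S_N$. Substituting this into \eqref{eq:meat} splits the difference into three pieces,
\[
\widehat{\Omega}_N-\widetilde{\Omega}_N=-(T_N+T_N')+D_N,\qquad
T_N:=\sum_{n,m}\mathbf{1}_{nm}\,u_n(\mathbf{x}_m'\mathbf{h})\,\mathbf{x}_n\mathbf{x}_m',\qquad
D_N:=\sum_{n,m}\mathbf{1}_{nm}\,(\mathbf{x}_n'\mathbf{h})(\mathbf{x}_m'\mathbf{h})\,\mathbf{x}_n\mathbf{x}_m'.
\]
As in Lemma \ref{lem:infeasible}, I will use $\|\Omega_N^{-1/2}B\Omega_N^{-1/2}\|_F\le\|B\|_F/\lambda_{\min}(\Omega_N)$ and show that $\|T_N\|_F$ and $\|D_N\|_F$ are both $o_p(\lambda_{\min}(\Omega_N))$. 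Everything is stated on $\mathcal{E}_N$, whose probability tends to one by Lemma \ref{lem:design}.

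The first input is the size of $\mathbf{h}$. Lemma \ref{lem:design} gives $\|\widehat{Q}_N^{-1}\|\le2/c_Q$ with probability tending to one. Together with $\mathbb{E}\|S_N\|^2=\operatorname{tr}\Omega_N\le K\lambda_{\max}(\Omega_N)=O(N\mathcal{M}^{H})$ from Lemma \ref{lem:struct}(d), this yields $\|\mathbf{h}\|=O_p\big((\mathcal{M}^{H}/N)^{1/2}\big)$. No rate on $\widehat{\boldsymbol{\beta}}$ is assumed here; this bound only uses the cap on $\lambda_{\max}$.

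\emph{Quadratic term.} By the pair count of Lemma \ref{lem:struct}(b), $|\mathcal{A}|\le2N\mathcal{M}^{H}$. Each summand of $D_N$ has norm at most $C_x^4\|\mathbf{h}\|^2$, so $\|D_N\|_F\le\sqrt K\,2N\mathcal{M}^{H}C_x^4\|\mathbf{h}\|^2=O_p\big((\mathcal{M}^{H})^{2}\big)$. Dividing by $\lambda_{\min}(\Omega_N)$ gives $O_p\big((\mathcal{M}^{H})^2/\lambda_{\min}(\Omega_N)\big)$, which is $O_p(\delta_N^{1/2})$ by Lemma \ref{lem:struct}(e)(ii) and hence $o_p(1)$.

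\emph{Cross term.} This is the main obstacle. The crude bound $\|T_N\|\le C_x^3C_u\|\mathbf{h}\|\,|\mathcal{A}|=O_p\big(N^{1/2}(\mathcal{M}^{H})^{3/2}\big)$, once divided by $\lambda_{\min}(\Omega_N)$, is exactly $O_p(\delta_N^{1/2})$, so it also vanishes under (A4). If Lemma \ref{lem:struct}(b) indeed bounds $|\mathcal{A}|$ by a multiple of $N\mathcal{M}^{H}$, the crude bound therefore suffices.

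If the constants or the pair count turn out to be less favorable than this, I will fall back on linearity in $\mathbf{h}$ and write
\[
T_N=\sum_{k}h_k\,G^{(k)}_N,\qquad G^{(k)}_N:=\sum_{n,m}\mathbf{1}_{nm}\,u_nx_{mk}\,\mathbf{x}_n\mathbf{x}_m'.
\]
Each $G^{(k)}_N$ is a dyadic double sum whose mean is at most $O(N\mathcal{M}^{H})$ and whose variance is $O(N(\mathcal{M}^{H})^3)$, by the quadruple count of Lemma \ref{lem:struct}(c), exactly as in the proof of Lemma \ref{lem:infeasible}. Multiplying by $\|\mathbf{h}\|=O_p\big((\mathcal{M}^{H}/N)^{1/2}\big)$ and dividing by $\lambda_{\min}(\Omega_N)$ again produces powers of $\delta_N$. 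Combining the two terms via the triangle inequality completes the proof.
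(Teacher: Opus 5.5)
Your proposal is correct and follows the paper's proof essentially step for step. Both use the same three-term expansion of $\widehat{u}_n\widehat{u}_m$ and the same Markov bound $\|\widehat{\boldsymbol{\beta}}-\boldsymbol{\beta}\|=O_p\big((\mathcal{M}^{H}/N)^{1/2}\big)$ from $\operatorname{tr}\Omega_N\le 2KCN\mathcal{M}^{H}$, together with the same summand-by-summand bounds over $|\mathcal{A}|\le 2N\mathcal{M}^{H}$ pairs. This yields $O_p(\delta_N^{1/2})$ for the cross term and $O_p\big((\mathcal{M}^{H})^{2}/\lambda_{\min}(\Omega_N)\big)$ for the quadratic term. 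Since Lemma \ref{lem:struct}(b) does give $|\mathcal{A}|\le 2N\mathcal{M}^{H}$, your fallback via the $G^{(k)}_N$ is not needed, and it would give the same order anyway because their means are already of order $N\mathcal{M}^{H}$.
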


\begin{proof}
Write $\boldsymbol{\tau}:=\widehat{\boldsymbol{\beta}}-\boldsymbol{\beta}$, so
$\widehat{u}_n\widehat{u}_m=u_nu_m-u_n(\boldsymbol{\tau}'\mathbf{x}_m)-
(\boldsymbol{\tau}'\mathbf{x}_n)u_m+(\boldsymbol{\tau}'\mathbf{x}_n)(\boldsymbol{\tau}'\mathbf{x}_m)$ and
$\widehat{\Omega}_N-\widetilde{\Omega}_N=R_2+R_3+R_4$ with
$R_2:=-\sum_{n,m}\mathbf{1}_{nm}u_n(\boldsymbol{\tau}'\mathbf{x}_m)\mathbf{x}_n\mathbf{x}_m'$,
$R_3=R_2'$, and
$R_4:=\sum_{n,m}\mathbf{1}_{nm}(\boldsymbol{\tau}'\mathbf{x}_n)(\boldsymbol{\tau}'\mathbf{x}_m)
\mathbf{x}_n\mathbf{x}_m'$. By Lemma \ref{lem:design}, on an event of probability tending
to one $\lambda_{\min}(\widehat{Q}_N)\ge c_Q/2$, whence
$\|\boldsymbol{\tau}\|\le2\|S_N\|/(c_QN)$; and since
$\mathbb{E}\|S_N\|^2=\operatorname{tr}(\Omega_N)\le K\lambda_{\max}(\Omega_N)\le
2KCN\mathcal{M}^{H}$ by Lemma \ref{lem:struct}(d), Markov's inequality gives
$\|S_N\|=O_p\big((N\mathcal{M}^{H})^{1/2}\big)$ and therefore
\begin{equation}\label{eq:dhat}
\|\boldsymbol{\tau}\|=O_p\Big(\big(\mathcal{M}^{H}/N\big)^{1/2}\Big),
\end{equation}
using only Lemma \ref{lem:design} and Markov's inequality, not Lemma \ref{lem:clt},
so there is no circularity. Since
$\|\mathbf{x}_n\mathbf{x}_m'\|_F\le C_x^2$ and $|\boldsymbol{\tau}'\mathbf{x}_m|\le
\|\boldsymbol{\tau}\|C_x$, (A2) and Lemma \ref{lem:struct}(b) give
$\|R_2\|_F\le2C_uC_x^3N\mathcal{M}^{H}\|\boldsymbol{\tau}\|$ and
$\|R_4\|_F\le2C_x^4N\mathcal{M}^{H}\|\boldsymbol{\tau}\|^2$. Dividing by
$\lambda_{\min}(\Omega_N)$ and substituting equation \eqref{eq:dhat},
$\|R_2\|_F/\lambda_{\min}(\Omega_N)=
O_p\big(N^{1/2}(\mathcal{M}^{H})^{3/2}/\lambda_{\min}(\Omega_N)\big)=
O_p(\delta_N^{1/2})$ and
$\|R_4\|_F/\lambda_{\min}(\Omega_N)=
O_p\big((\mathcal{M}^{H})^{2}/\lambda_{\min}(\Omega_N)\big)$; the first tends to zero by
(A4) and the second by Lemma \ref{lem:struct}(e)(ii). Applying
$\|\Omega^{-1/2}B\Omega^{-1/2}\|_F\le\|B\|_F/\lambda_{\min}(\Omega)$ as in Lemma
\ref{lem:infeasible} completes the proof.
\hfill\end{proof}

\begin{lemma}\label{lem:matrix}
Let $M,\widehat{M}$ be symmetric positive definite $q\times q$ matrices and put
$\mathcal{C}:=M^{-1/2}\widehat{M}M^{-1/2}$ and $\eta:=\|\mathcal{C}-I_q\|$. If
$\eta\le1/2$, then $\Psi:=\widehat{M}^{-1/2}M^{1/2}$ satisfies
$\|\Psi-I_q\|_F\le\sqrt q\,\eta+2\sqrt{q\eta}$.
\end{lemma}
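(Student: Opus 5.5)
The plan is to diagonalize $\mathcal{C}$ and reduce everything to scalar inequalities on its eigenvalues. The one twist is that $\Psi$ is not symmetric, so it does not share an eigenbasis with $\mathcal{C}$; I would handle this with a polar decomposition.

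\textbf{Step 1: factor $\Psi$.} Note that $\widehat{M}=M^{1/2}\mathcal{C}M^{1/2}$. Define $P:=\mathcal{C}^{-1/2}$, which is symmetric positive definite. Then
\[
\Psi\Psi'=\widehat{M}^{-1/2}M\widehat{M}^{-1/2}, \qquad \Psi'\Psi=M^{1/2}\widehat{M}^{-1}M^{1/2}=\mathcal{C}^{-1}.
\]
Hence $\Psi=OP$ for some orthogonal $O$. In fact $O=\widehat{M}^{-1/2}M^{1/2}\mathcal{C}^{1/2}$, and $O'O=I_q$ follows from the second identity. The triangle inequality then gives
\[
\|\Psi-I_q\|_F\le\|O(P-I_q)\|_F+\|O-I_q\|_F=\|P-I_q\|_F+\|O-I_q\|_F .
\]

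\textbf{Step 2: bound the symmetric part.} The eigenvalues $\lambda_i$ of $\mathcal{C}$ lie in $[1-\eta,1+\eta]\subset[1/2,3/2]$. Therefore
\[
|\lambda_i^{-1/2}-1|\le\frac{|1-\lambda_i|}{\sqrt{\lambda_i}\,(1+\sqrt{\lambda_i})}\le\eta ,
\]
using $\sqrt{\lambda}(1+\sqrt{\lambda})\ge\sqrt{1/2}\,(1+\sqrt{1/2})>1$ on $[1/2,3/2]$. Summing over the $q$ eigenvalues gives $\|P-I_q\|_F\le\sqrt q\,\eta$.

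\textbf{Step 3: bound the orthogonal part (the main obstacle).} A priori $O$ need not be close to $I_q$. The lever is that $\widehat{M}^{-1/2}$ and $M^{1/2}$ are both symmetric positive definite. Hence $\Psi=\widehat{M}^{-1/2}M^{1/2}$ is similar to the symmetric positive definite matrix $M^{1/4}\widehat{M}^{-1/2}M^{1/4}$, so all its eigenvalues are real and positive. Also, $\operatorname{tr}(\Psi)=\operatorname{tr}(M^{1/4}\widehat{M}^{-1/2}M^{1/4})>0$. My first attempt is to use the identity
\[
\|O-I_q\|_F^2=2q-2\operatorname{tr}(O),
\]
and to lower-bound $\operatorname{tr}(O)$ by relating it to $\operatorname{tr}(\Psi)=\operatorname{tr}(OP)$. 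Since $\|P-I_q\|\le\eta$,
\[
|\operatorname{tr}(OP)-\operatorname{tr}(O)|\le\|O\|_F\,\|P-I_q\|_F\le q\eta .
\]
So it suffices to show $\operatorname{tr}(\Psi)\ge q(1-c\,\eta)$ for a suitable constant $c$. Then $\|O-I_q\|_F^2\le 2q(1+c)\eta$, and the target $2\sqrt{q\eta}$ in the statement corresponds to $2q(1+c)\le4q$, i.e.\ $c\le1$.

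To get this trace bound, I would write $\operatorname{tr}(\Psi)=\operatorname{tr}(\widehat{M}^{-1/2}M^{1/2})$ and compare it with the symmetric form $\operatorname{tr}(M^{1/4}\widehat{M}^{-1/2}M^{1/4})$ already noted. I would then apply an eigenvalue or AM--GM inequality for positive definite matrices to relate it to $\operatorname{tr}(\mathcal{C}^{-1/2})$, which is at least $q(1-\eta)$ by Step 2.

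If this direct route is too lossy for the constant, the fallback is a Procrustes-type inequality. Namely, $\operatorname{tr}(OP)$ is maximized over orthogonal $O$ at $O=I_q$. Combining this with the positivity of the eigenvalues of $\Psi$ should force $O$ close to $I_q$, with $\|O-I_q\|_F^2=O(q\eta)$. Adding the bounds from Steps 2 and 3 then gives the claim.
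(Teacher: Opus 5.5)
Your framework is correct as far as it goes. The identity $\Psi'\Psi=\mathcal{C}^{-1}$ gives the polar factorization $\Psi=OP$ with $P=\mathcal{C}^{-1/2}$. Step 2 correctly yields $\|P-I_q\|_F\le\sqrt q\,\eta$. The reduction in Step 3 through $\|O-I_q\|_F^2=2q-2\operatorname{tr}(O)$ and $|\operatorname{tr}(OP)-\operatorname{tr}(O)|\le q\eta$ is also valid.

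The gap is that the one inequality carrying the lemma, $\operatorname{tr}(\Psi)\ge q(1-c\eta)$ with $c\le1$, is never proved. You list two possible routes and carry out neither. The first, as described, runs the wrong way. By von Neumann's trace inequality, $\operatorname{tr}(\Psi)=\operatorname{tr}(OP)\le\operatorname{tr}(P)=\operatorname{tr}(\mathcal{C}^{-1/2})$, with equality only if $O=I_q$. So the bound $\operatorname{tr}(\mathcal{C}^{-1/2})\ge q(1-\eta)$ cannot be passed down to $\operatorname{tr}(\Psi)$ by comparison unless you already control $O$, which is the very thing to be shown. The Procrustes fallback has the same defect. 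Maximality of $\operatorname{tr}(OP)$ at $O=I_q$ is again an upper bound, and ``positivity of the eigenvalues should force $O$ close to $I_q$'' restates the goal rather than proving it. Positivity alone is qualitative; you need a quantitative lower bound on each eigenvalue of $\Psi$.

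That bound follows in one line from facts you already have. If $\Psi v=\mu v$ with $v\ne\mathbf{0}$, then $|\mu|\,\|v\|=\|\Psi v\|\ge\sigma_{\min}(\Psi)\|v\|$. Your similarity argument gives $\mu>0$, and $\sigma_{\min}(\Psi)=\lambda_{\min}(P)\ge(1+\eta)^{-1/2}\ge1-\eta/2$. Summing over eigenvalues gives $\operatorname{tr}(\Psi)\ge q(1-\eta/2)$. Equivalently, apply AM--GM against the determinant rather than the trace: $\operatorname{tr}(\Psi)\ge q(\det\Psi)^{1/q}=q(\det P)^{1/q}\ge q\lambda_{\min}(P)$. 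Hence $c=1/2$, so $\operatorname{tr}(O)\ge q(1-3\eta/2)$ and $\|O-I_q\|_F\le\sqrt{3q\eta}\le2\sqrt{q\eta}$. Combined with Step 2, this gives the claim.

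Once completed, your argument is a genuine alternative to the paper's. The paper puts $\Psi$ in Schur form $U(\Theta+\Upsilon)U^{*}$ and bounds the strictly upper triangular part by $\|\Upsilon\|_F^2=\sum_i\sigma_i(\Psi)^2-\sum_i\lambda_i(\Psi)^2\le4q\varepsilon$. Both proofs hinge on the same fact: the eigenvalues of $\Psi$ are positive reals lying between its extreme singular values. Your polar version stays with real orthogonal matrices and gives the constant $\sqrt3$ in place of $2$.
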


\begin{proof}
$\Psi'\Psi=M^{1/2}\widehat{M}^{-1}M^{1/2}=\mathcal{C}^{-1}$, whose eigenvalues lie in
$[(1+\eta)^{-1},(1-\eta)^{-1}]$, so every singular value of $\Psi$ lies in
$[(1+\eta)^{-1/2},(1-\eta)^{-1/2}]$; for $\eta\le1/2$ one checks
$1-(1+\eta)^{-1/2}\le\eta$ and $(1-\eta)^{-1/2}-1\le\eta$, so
$\varepsilon:=\max_i|\sigma_i(\Psi)-1|\le\eta$. Next, $\Psi$ is the product of
the symmetric
positive definite matrices $\widehat{M}^{-1/2}$ and $M^{1/2}$; conjugating by
$\widehat{M}^{1/4}$ gives
$\widehat{M}^{1/4}\Psi\widehat{M}^{-1/4}=\widehat{M}^{-1/4}M^{1/2}\widehat{M}^{-1/4}\succ0$,
so $\Psi$ is similar to a
symmetric positive definite matrix and its eigenvalues are real and positive, and since
$\sigma_{\min}(\Psi)\le|\lambda_i(\Psi)|\le\sigma_{\max}(\Psi)$ they lie in
$[1-\varepsilon,1+\varepsilon]$. Finally write
$\Psi=U(\Theta+\Upsilon)U^{*}$ in Schur form
with $\Theta$ diagonal carrying the eigenvalues and $\Upsilon$ strictly upper
triangular;
$\Theta$ and $\Upsilon$ are orthogonal in the Frobenius inner product, so
$\|\Upsilon\|_F^2=\|\Psi\|_F^2-\|\Theta\|_F^2=
\sum_i\sigma_i(\Psi)^2-\sum_i\lambda_i(\Psi)^2\le
q[(1+\varepsilon)^2-(1-\varepsilon)^2]=4q\varepsilon$, whence
$\|\Psi-I_q\|_F\le\|\Theta-I_q\|_F+\|\Upsilon\|_F\le
\sqrt q\,\varepsilon+2\sqrt{q\varepsilon}
\le\sqrt q\,\eta+2\sqrt{q\eta}$.
\hfill\end{proof}

Note that rotation invariance of the limit law does not by itself permit $\Psi$ to be
replaced by an orthogonal matrix. A matrix whose Gram matrices both tend to $I_q$ may
itself tend to a nontrivial orthogonal matrix, and a rotation that is random and
correlated with the statistic destroys the limit. Taking $O_N$ to carry $Z_N$ to
$(\|Z_N\|,0,\dots,0)'$ concentrates $O_NZ_N$ on a ray. The positivity of the spectrum of
$\Psi$ established above is what rules this out.

For the remaining lemmas with regards to the jackknife, write $\mathbf{v}_n:=\mathbf{x}_n\widehat{u}_n$,
$T_g:=\sum_{n\in\mathcal{N}_g}\mathbf{v}_n$, and $D_d:=\sum_{n:\psi(n)=d}\mathbf{v}_n$,
so that $T_g$ aggregates the score residuals at unit $g$ and $D_d$ at dyad $d$; write
also $\rho_g:=4C_x^2M_g/(c_QN)$ and $\rho:=4C_x^2\mathcal{M}^{H}/(c_QN)$, and let
\begin{equation}\label{eq:tencond}
\mathcal{M}^{H}/N\;\le\;c_Q\big/\big(4C_x^{2}\big)
\end{equation}
denote the degree condition under which $\rho\le1$; by Lemma \ref{lem:struct}(e)(i),
equation \eqref{eq:tencond} holds for all $N$ large enough.

\begin{lemma}\label{lem:remainders}
Under Assumptions (A1)--(A5), on the event
$\mathcal{G}_N:=\{\lambda_{\min}(\widehat{Q}_N)\ge c_Q/2\}\cap
\{\|\widehat{\boldsymbol{\beta}}-\boldsymbol{\beta}\|\le C_u/C_x\}$, which has
$\mathbb{P}(\mathcal{G}_N)\to1$, and for every $N$ satisfying equation
\eqref{eq:tencond}, one has the exact decomposition
$\widehat{V}^{J}-\widehat{V}_N=A^{-1}(B_1+B_2-B_3-B_4)A^{-1}$ with $B_1,\dots,B_4$
symmetric, $B_1\succeq0$, $B_4\succeq0$, and
\[
\|B_1\|\le4C\,\mathcal{J}_N,\qquad
\|B_2\|\le24C\rho N\mathcal{M}^{H},\qquad
\|B_3\|\le\frac{16C\rho^2N^2}{G},\qquad
\|B_4\|\le\frac{32CN\mathcal{M}^{H}}{G}.
\]
\end{lemma}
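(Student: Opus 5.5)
The plan is to derive an exact leave-one-unit-out identity for $\widehat{\boldsymbol{\beta}}_{(g)}-\widehat{\boldsymbol{\beta}}$ and expand the jackknife outer products around it. The blocks $B_1,\dots,B_4$ will then be read off term by term, and each will be bounded using only the almost-sure bounds of (A2) and degree counts.

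\emph{Step 0: the event $\mathcal{G}_N$.} First I check that $\mathbb{P}(\mathcal{G}_N)\to1$. Lemma \ref{lem:design} gives $\lambda_{\min}(\widehat{Q}_N)\ge c_Q/2$ with probability tending to one. Equation \eqref{eq:dhat} together with $\mathcal{M}^{H}/N\to0$ (Lemma \ref{lem:struct}(e)(i)) gives $\|\widehat{\boldsymbol{\beta}}-\boldsymbol{\beta}\|\le C_u/C_x$ with probability tending to one. On $\mathcal{G}_N$ we have $|\widehat u_n|\le C_u+C_x\|\widehat{\boldsymbol{\beta}}-\boldsymbol{\beta}\|\le 2C_u$, hence $\|\mathbf{v}_n\|\le 2C_xC_u$. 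This yields the two counting bounds
\[
\|T_g\|\le 2C_xC_uM_g, \qquad \|D_d\|\le 2C_xC_uL_d .
\]
Moreover $\lambda_{\min}(A_{(g)})\ge Nc_Q/2-C_x^2M_g\ge Nc_Q/4$ whenever $\rho_g\le1$, which is guaranteed by \eqref{eq:tencond}. So $\mathcal{E}^{J}_N$ holds on $\mathcal{G}_N$, with $\|A_{(g)}^{-1}\|\le 4/(c_QN)$.

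\emph{Step 1: the exact deletion identity.} The normal equations give $\sum_n\mathbf{v}_n=\mathbf{0}$, so $\sum_{n\notin\mathcal{N}_g}\mathbf{x}_n\widehat u_n=-T_g$. Hence $\widehat{\boldsymbol{\beta}}_{(g)}-\widehat{\boldsymbol{\beta}}=-A_{(g)}^{-1}T_g$. Applying the resolvent identity $A_{(g)}^{-1}=A^{-1}+A^{-1}A_gA_{(g)}^{-1}$ gives
\[
\widehat{\boldsymbol{\beta}}_{(g)}-\widehat{\boldsymbol{\beta}}=-A^{-1}(T_g+\mathbf{r}_g), \qquad \mathbf{r}_g:=A_gA_{(g)}^{-1}T_g .
\]
Here $\|\mathbf{r}_g\|\le C_x^2M_g\cdot\tfrac{4}{c_QN}\|T_g\|=\rho_g\|T_g\|$. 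Since each observation lies in exactly two groups, $\sum_gT_g=2\sum_n\mathbf{v}_n=\mathbf{0}$, so the first-order terms cancel in the mean. It follows that $\bar{\boldsymbol{\beta}}-\widehat{\boldsymbol{\beta}}=-A^{-1}G^{-1}\sum_g\mathbf{r}_g$.

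\emph{Step 2: the algebra.} Write $\boldsymbol{\delta}_g:=\widehat{\boldsymbol{\beta}}_{(g)}-\widehat{\boldsymbol{\beta}}$. Then
\[
\widehat{V}^{J}=\Big(1-\tfrac1G\Big)\Big[\textstyle\sum_g\boldsymbol{\delta}_g\boldsymbol{\delta}_g'-G\bar{\boldsymbol{\delta}}\bar{\boldsymbol{\delta}}'\Big].
\]
The key combinatorial identity is
\[
\sum_gT_gT_g'=\sum_g\sum_{n,m\in\mathcal{N}_g}\mathbf{v}_n\mathbf{v}_m'=\widehat{\Omega}_N+\sum_dD_dD_d' .
\]
It holds because a pair $(n,m)$ sharing exactly one unit is counted once, while a pair carrying the same dyad shares both units and is counted twice. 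Since $A^{-1}\widehat{\Omega}_NA^{-1}=\widehat{V}_N$, this suggests the following choice of blocks:
\[
B_1:=\sum_dD_dD_d', \qquad B_2:=\sum_g\big(T_g\mathbf{r}_g'+\mathbf{r}_gT_g'+\mathbf{r}_g\mathbf{r}_g'\big),
\]
\[
B_3:=\Big(1-\tfrac1G\Big)\tfrac1G\Big(\textstyle\sum_g\mathbf{r}_g\Big)\Big(\textstyle\sum_g\mathbf{r}_g\Big)', \qquad B_4:=\tfrac1G\textstyle\sum_g(T_g+\mathbf{r}_g)(T_g+\mathbf{r}_g)' .
\]
All four are symmetric, and $B_1,B_4\succeq0$ as sums of outer products. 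One then verifies $\widehat{V}^{J}-\widehat{V}_N=A^{-1}(B_1+B_2-B_3-B_4)A^{-1}$ by expanding.

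\emph{Step 3: the norm bounds.} Each bound follows from Step 0 and the count $\sum_gM_g^2\le\mathcal{M}^{H}\sum_gM_g=2N\mathcal{M}^{H}$.
\begin{itemize}
\item $\|B_1\|\le\sum_d\|D_d\|^2\le4C\sum_dL_d^2=4C\mathcal{J}_N$.
\item $\|B_2\|\le(2\rho+\rho^2)\sum_g\|T_g\|^2\le3\rho\cdot4C\cdot2N\mathcal{M}^{H}=24C\rho N\mathcal{M}^{H}$, using $\rho\le1$.
\item $\big\|\sum_g\mathbf{r}_g\big\|\le\rho\sum_g\|T_g\|\le\rho\cdot2C_xC_u\cdot2N$, which gives $\|B_3\|\le16C\rho^2N^2/G$.
\item $\|B_4\|\le G^{-1}\sum_g(2\|T_g\|)^2\le32CN\mathcal{M}^{H}/G$.
\end{itemize}

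I expect the main obstacle to be the bookkeeping in Step 2. The cancellation $\sum_gT_g=\mathbf{0}$ and the double counting of same-dyad pairs, which isolates $B_1$, must be tracked exactly so that the decomposition is an identity rather than a bound. I would check this first on a toy configuration with repeated dyads.
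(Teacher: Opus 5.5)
Your proposal is correct and follows the paper's proof essentially step for step. The shared ingredients are the exact deletion identity $\widehat{\boldsymbol{\beta}}_{(g)}-\widehat{\boldsymbol{\beta}}=-A^{-1}(T_g+\mathbf{r}_g)$ via the resolvent identity, the cancellation $\sum_gT_g=\mathbf{0}$ that makes the centring second order, the double-counting identity $\sum_gT_gT_g'=\widehat{\Omega}_N+\sum_dD_dD_d'$ that isolates $B_1$, and the same degree-count bounds. The only difference is bookkeeping: you leave $B_4=G^{-1}\sum_g(T_g+\mathbf{r}_g)(T_g+\mathbf{r}_g)'$ uncentred and put the factor $1-G^{-1}$ into $B_3$, whereas the paper takes $B_3=G\boldsymbol{\pi}\boldsymbol{\pi}'$ and centres $B_4$ at $\boldsymbol{\pi}$. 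The sums $B_3+B_4$ coincide, and your blocks satisfy every stated property and bound, with the $B_4$ bound obtained slightly more directly.
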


\begin{proof}
On $\mathcal{G}_N$,
$|\widehat{u}_n|\le C_u+\|\widehat{\boldsymbol{\beta}}-\boldsymbol{\beta}\|C_x\le2C_u$ by
(A2), so $\|\mathbf{v}_n\|\le2C_uC_x$, $\|T_g\|\le2C_uC_xM_g$ and
$\|D_d\|\le2C_uC_xL_d$; by the handshake identity $\sum_gM_g=2N$ and
$M_g\le\mathcal{M}^{H}$,
\begin{equation}\label{eq:Tsums}
\sum_g\|T_g\|^2\le8CN\mathcal{M}^{H},\qquad\qquad
\sum_g\|T_g\|\le4C_uC_xN .
\end{equation}
That $\mathbb{P}(\mathcal{G}_N)\to1$ follows from Lemma \ref{lem:design} and from
equation \eqref{eq:dhat} together with Lemma \ref{lem:struct}(e)(i). Write $\mathbf{w}_g:=A_{(g)}^{-1}T_g$, so Lemma \ref{lem:leaveout}(b)
gives $\widehat{\boldsymbol{\beta}}_{(g)}-\bar{\boldsymbol{\beta}}=
-(\mathbf{w}_g-\boldsymbol{\omega})$ with $\boldsymbol{\omega}:=G^{-1}\sum_g\mathbf{w}_g$, and, with
$\Sigma:=\sum_g(\mathbf{w}_g-\boldsymbol{\omega})(\mathbf{w}_g-\boldsymbol{\omega})'$,
\begin{equation}\label{eq:sigma}
\widehat{V}^{J}=\frac{G-1}{G}\Sigma=\Sigma-\frac1G\Sigma,\qquad
\Sigma=\sum_g\mathbf{w}_g\mathbf{w}_g'-G\,\boldsymbol{\omega}\boldsymbol{\omega}' .
\end{equation}
By Lemma \ref{lem:leaveout}(a),
$A_{(g)}^{-1}=A^{-1}(I_K+A_gA_{(g)}^{-1})$, so with
$\mathbf{t}_g:=(I_K+A_gA_{(g)}^{-1})T_g$ we have $\mathbf{w}_g=A^{-1}\mathbf{t}_g$; and
by Lemma \ref{lem:leaveout}(c),
$\bar{\mathbf{t}}:=G^{-1}\sum_g\mathbf{t}_g=\boldsymbol{\pi}$ with
$\boldsymbol{\pi}:=G^{-1}\sum_gA_gA_{(g)}^{-1}T_g$ and
$\boldsymbol{\omega}=A^{-1}\boldsymbol{\pi}$, the
step that makes the centring second order, because each observation belongs to exactly
two deletion groups. Expanding
$\sum_g\mathbf{t}_g\mathbf{t}_g'=\sum_gT_gT_g'+B_2$ with
$B_2:=\sum_g[A_gA_{(g)}^{-1}T_gT_g'+T_gT_g'A_{(g)}^{-1}A_g+
A_gA_{(g)}^{-1}T_gT_g'A_{(g)}^{-1}A_g]$, symmetric; using Lemma \ref{lem:leaveout}(d)
and equation \eqref{eq:vhat} in the main text,
$A^{-1}(\sum_gT_gT_g')A^{-1}=\widehat{V}_N+A^{-1}B_1A^{-1}$ with
$B_1:=\sum_dD_dD_d'\succeq0$; and from equation \eqref{eq:sigma},
$B_3:=G\,\boldsymbol{\pi}\boldsymbol{\pi}'$ and
$B_4:=G^{-1}\sum_g(\mathbf{t}_g-\boldsymbol{\pi})(\mathbf{t}_g-\boldsymbol{\pi})'\succeq0$ collect
the remaining terms, giving the stated identity. For the bounds: $B_1\succeq0$ gives
$\|B_1\|\le\operatorname{tr}(B_1)=\sum_d\|D_d\|^2\le4C\sum_dL_d^2=4C\mathcal{J}_N$;
$\|B_2\|\le\sum_g(2\rho_g+\rho_g^2)\|T_g\|^2\le3\rho\sum_g\|T_g\|^2\le
24C\rho N\mathcal{M}^{H}$ by equation \eqref{eq:Tsums};
$\|\boldsymbol{\pi}\|\le\rho G^{-1}\sum_g\|T_g\|\le4\rho C_uC_xN/G$ gives
$\|B_3\|=G\|\boldsymbol{\pi}\|^2\le16C\rho^2N^2/G$; and
$\|B_4\|\le\operatorname{tr}(B_4)=G^{-1}\sum_g\|\mathbf{t}_g-\boldsymbol{\pi}\|^2\le
G^{-1}\sum_g\|\mathbf{t}_g\|^2\le4G^{-1}\sum_g\|T_g\|^2\le32CN\mathcal{M}^{H}/G$, using
$\|\mathbf{t}_g\|\le(1+\rho_g)\|T_g\|\le2\|T_g\|$ and that a sum of squared deviations
about the mean is no larger than the sum of squares.
\hfill\end{proof}

\begin{lemma}\label{lem:struct}
Under Assumptions (A1)--(A3) and (A5): (a)
$\mathbb{E}[\widetilde{\Omega}_N]=\Omega_N$ exactly. (b)
$|\mathcal{A}|\le N(2\mathcal{M}^{H}-1)\le2N\mathcal{M}^{H}$. (c) Let
$S:=\{(i,j,k,l):\mathbf{1}_{ij}=\mathbf{1}_{kl}=1,\
(\psi_i\cup\psi_j)\cap(\psi_k\cup\psi_l)\ne\emptyset\}$, where $\psi_i:=\psi(i)$; then
$|S|\le3N\mathcal{M}^{H}(2\mathcal{M}^{H}-1)(3\mathcal{M}^{H}-2)<18N(\mathcal{M}^{H})^{3}$.
(d) $\lambda_{\max}(\Omega_N)\le\|\Omega_N\|\le CN(2\mathcal{M}^{H}-1)\le
2CN\mathcal{M}^{H}$. (e) Under (A4) in addition, (i)
$\mathcal{M}^{H}/N\le4C^{2}\delta_N\to0$ and (ii)
$(\mathcal{M}^{H})^{2}/\lambda_{\min}(\Omega_N)\le\sqrt{\delta_N}\to0$.
\end{lemma}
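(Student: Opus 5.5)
The plan is to prove the five parts of Lemma \ref{lem:struct} in turn, since each is either an exact moment identity or a combinatorial count on the configuration.

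\emph{Part (a).} Taking expectations in \eqref{eq:meat} gives
$\mathbb{E}[\widetilde{\Omega}_N]=\sum_{n,m}\mathbf{1}_{nm}\mathbb{E}[u_nu_m\mathbf{x}_n\mathbf{x}_m']$.
Separately, $\Omega_N=\sum_{n,m}\operatorname{Cov}(\mathbf{x}_nu_n,\mathbf{x}_mu_m)$.
\begin{itemize}
\item By (A3), each mean $\mathbb{E}[\mathbf{x}_nu_n]$ vanishes, so every covariance equals the raw cross moment.
\item When $\mathbf{1}_{nm}=0$, apply (A1) with $S_1=\{n\}$ and $S_2=\{m\}$. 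This makes the two terms independent, so the cross moment factors into a product of zero means and vanishes.
\end{itemize}
Hence the indicator removes exactly the terms that are zero anyway, and the two sums coincide.

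\emph{Part (b).} For a fixed $n$ with $\psi(n)=\{g,h\}$, the observations $m$ with $\mathbf{1}_{nm}=1$ lie in $\mathcal{N}_g\cup\mathcal{N}_h$. This union has at most $M_g+M_h-1\le2\mathcal{M}^{H}-1$ elements, since $n$ itself is in both sets. Summing over $n$ gives the bound.

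\emph{Part (c).} I will count quadruples in $S$ by a sequential choice.
\begin{itemize}
\item Choose $i$ in $N$ ways, then $j$ adjacent to $i$ in at most $2\mathcal{M}^{H}-1$ ways.
\item The union $\psi_i\cup\psi_j$ contains at most $3$ units, because the two dyads share a unit. Some unit of $\psi_k\cup\psi_l$ must lie in this union.
\item By symmetry of the roles of $k$ and $l$, first suppose $\psi_k$ meets the union. Then $k$ lies in $\mathcal{N}_g$ for one of at most $3$ units $g$, giving at most $3\mathcal{M}^{H}$ choices.
\item Then choose $l$ adjacent to $k$ in at most $2\mathcal{M}^{H}-1$ ways.
\item The case where $\psi_l$ meets the union but $\psi_k$ does not is handled symmetrically.
\end{itemize}
Getting exactly the stated product $3N\mathcal{M}^{H}(2\mathcal{M}^{H}-1)(3\mathcal{M}^{H}-2)$ requires care with overlaps. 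The three units carry at most $3\mathcal{M}^{H}-2$ distinct observations, because $i$ and $j$ are each double-counted. The leading factor $3\mathcal{M}^{H}$ must then absorb the choice of which coordinate, $k$ or $l$, touches the union.

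I expect this bookkeeping to be the main obstacle. Only the final bound $<18N(\mathcal{M}^{H})^3$ is ever used later, so if the exact product proves hard to match, I would settle for a cruder constant, for example $2\cdot 3N\cdot2\mathcal{M}^{H}\cdot3\mathcal{M}^{H}\cdot\ldots$. Doing so would change the constants in Lemma \ref{lem:infeasible}.

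\emph{Part (d).} Each summand of $\Omega_N=\sum_{(n,m)\in\mathcal{A}}\mathbb{E}[u_nu_m\mathbf{x}_n\mathbf{x}_m']$ has spectral norm at most $C$ by (A2). The triangle inequality together with part (b) then gives the bound.

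\emph{Part (e).} For (i), use $\lambda_{\min}(\Omega_N)\le\lambda_{\max}(\Omega_N)\le2CN\mathcal{M}^{H}$ from part (d). This gives
$\delta_N\ge N(\mathcal{M}^{H})^3/(4C^2N^2(\mathcal{M}^{H})^2)=\mathcal{M}^{H}/(4C^2N)$.

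For (ii), write
$(\mathcal{M}^{H})^4/\lambda_{\min}^2=\delta_N\cdot\mathcal{M}^{H}/N\le\delta_N$,
using $\mathcal{M}^{H}\le N$ from \eqref{eq:degrees}. Taking square roots gives the claim.
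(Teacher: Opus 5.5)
Parts (a), (b), (d) and (e) are correct and follow the paper's argument essentially step for step. The gap is in part (c): your count does not deliver the stated bound, and your explanation of where its factors come from is not right. You handle the case ``$\psi_l$ meets $\psi_i\cup\psi_j$'' symmetrically to the case ``$\psi_k$ meets it,'' and this doubles the count. For each $(i,j)\in\mathcal{A}$ you then get at most $2\cdot3\mathcal{M}^{H}(2\mathcal{M}^{H}-1)$ pairs $(k,l)$. With your refinement that the three units carry at most $3\mathcal{M}^{H}-2$ observations, you get $2(3\mathcal{M}^{H}-2)(2\mathcal{M}^{H}-1)$. Either way $|S|$ is bounded only by roughly $24N(\mathcal{M}^{H})^{3}$, so you cannot conclude $|S|<18N(\mathcal{M}^{H})^{3}$. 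The factor $3\mathcal{M}^{H}$ cannot ``absorb'' the choice between $k$ and $l$: that would need $2(2\mathcal{M}^{H}-1)\le3\mathcal{M}^{H}$, which fails once $\mathcal{M}^{H}\ge3$. Settling for a cruder constant means proving a weaker lemma than the one stated. As you note, that is harmless asymptotically, but it is not the statement.

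The missing idea is that the doubling double-counts exactly the pairs in which $k$ and $l$ touch the same unit, and these can be counted exactly. If $y\in\psi_k$ and $y\in\psi_l$, then $\mathbf{1}_{kl}=1$ automatically, so there are precisely $M_y^2$ such pairs. The paper works unit by unit. For $y\in\psi_i\cup\psi_j$ it writes $R_y:=\#\{(k,l):\mathbf{1}_{kl}=1,\ y\in\psi_k\cup\psi_l\}=|A_1|+|A_2|-|A_1\cap A_2|$. Here $|A_1|=|A_2|\le M_y(2\mathcal{M}^{H}-1)$ by the per-row bound of part (b), and $|A_1\cap A_2|=M_y^2$, so $R_y\le2M_y(2\mathcal{M}^{H}-1)-M_y^2$. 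This is increasing in $M_y$ on $M_y\le2\mathcal{M}^{H}-1$, hence $R_y\le\mathcal{M}^{H}(3\mathcal{M}^{H}-2)$. Summing over the at most three units of $\psi_i\cup\psi_j$ and over $(i,j)\in\mathcal{A}$, with $|\mathcal{A}|\le N(2\mathcal{M}^{H}-1)$, gives exactly $3N\mathcal{M}^{H}(2\mathcal{M}^{H}-1)(3\mathcal{M}^{H}-2)$. So the factor $3\mathcal{M}^{H}-2$ in the statement comes from $R_y$, not from the number of observations at the three units.

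Your sequential route can also be repaired. Take the second case to be ``$\psi_l$ meets the union but $\psi_k$ does not.'' Then $k$ can meet $\psi_l$ only at the unit of $\psi_l$ outside the union, and $k\ne l$, which leaves at most $\mathcal{M}^{H}-1$ choices. Each $(i,j)$ then contributes at most $(3\mathcal{M}^{H}-2)(2\mathcal{M}^{H}-1)+(3\mathcal{M}^{H}-2)(\mathcal{M}^{H}-1)=(3\mathcal{M}^{H}-2)^2\le3\mathcal{M}^{H}(3\mathcal{M}^{H}-2)$, which is enough.
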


\begin{proof}
(a) By (A3), $\operatorname{Cov}(\mathbf{x}_nu_n,\mathbf{x}_mu_m)=
\mathbb{E}[u_nu_m\mathbf{x}_n\mathbf{x}_m']$, so $\Omega_N=\sum_{n,m}
\mathbb{E}[u_nu_m\mathbf{x}_n\mathbf{x}_m']$. If $\mathbf{1}_{nm}=0$ then
$\psi(n)\cap\psi(m)=\emptyset$, and (A1) applied with $S_1=\{n\}$, $S_2=\{m\}$ makes
$(\mathbf{x}_n,u_n)$ independent of $(\mathbf{x}_m,u_m)$; that term then equals
$\mathbb{E}[\mathbf{x}_nu_n]\mathbb{E}[\mathbf{x}_mu_m]'=\mathbf{0}$. The indicator in
equation \eqref{eq:meat} in the main text therefore deletes only terms that vanish, and retains every term that need not.

(b) Fix $n$ with $\psi(n)=\{g,h\}$. Then $\{m:\mathbf{1}_{nm}=1\}=\{m:g\in\psi(m)\}\cup
\{m:h\in\psi(m)\}$, whose cardinality is $M_g+M_h-\#\{m:\psi(m)=\{g,h\}\}$ by
inclusion--exclusion, and the subtracted count is at least $1$ since $m=n$ qualifies.
Hence $\#\{m:\mathbf{1}_{nm}=1\}\le2\mathcal{M}^{H}-1$; summing over $n$ gives the claim.

(c) If $\mathbf{1}_{ij}=1$ then $|\psi_i\cup\psi_j|\le3$. For a unit $y$, put
$R_y:=\#\{(k,l):\mathbf{1}_{kl}=1,\ y\in\psi_k\cup\psi_l\}$, and write
$A_1:=\{(k,l):\mathbf{1}_{kl}=1,y\in\psi_k\}$, $A_2:=\{(k,l):\mathbf{1}_{kl}=1,
y\in\psi_l\}$, so $R_y=|A_1|+|A_2|-|A_1\cap A_2|$ with $|A_1|=|A_2|$ by symmetry. By part
(b)'s per-row bound, $|A_1|=\sum_{k:y\in\psi_k}\#\{l:\mathbf{1}_{kl}=1\}\le
M_y(2\mathcal{M}^{H}-1)$; and if $y\in\psi_k$ and $y\in\psi_l$ then $\mathbf{1}_{kl}=1$
automatically, so $|A_1\cap A_2|=M_y^2$ exactly. Hence $R_y\le
2M_y(2\mathcal{M}^{H}-1)-M_y^2$, which is increasing in $M_y$ on $M_y\le
2\mathcal{M}^{H}-1$, so $R_y\le\mathcal{M}^{H}(3\mathcal{M}^{H}-2)$. Every
$(i,j,k,l)\in S$ has some unit $y\in\psi_i\cup\psi_j$ lying in $\psi_k\cup\psi_l$;
summing over $(i,j)\in\mathcal{A}$ and the at most three units of $\psi_i\cup\psi_j$,
$|S|\le|\mathcal{A}|\cdot3\mathcal{M}^{H}(3\mathcal{M}^{H}-2)\le
3N(2\mathcal{M}^{H}-1)\mathcal{M}^{H}(3\mathcal{M}^{H}-2)<18N(\mathcal{M}^{H})^{3}$.

(d) By part (a)'s computation, $\Omega_N=\sum_{(n,m)\in\mathcal{A}}
\mathbb{E}[u_nu_m\mathbf{x}_n\mathbf{x}_m']$. Since $\mathbf{x}_n\mathbf{x}_m'$ has rank
one, $\|\mathbf{x}_n\mathbf{x}_m'\|=\|\mathbf{x}_n\|\|\mathbf{x}_m\|$, and by (A2) and
Jensen's inequality $\|\mathbb{E}[u_nu_m\mathbf{x}_n\mathbf{x}_m']\|\le C$. The triangle
inequality over $\mathcal{A}$ and part (b) give the claim; $K$ does not enter.

(e)(i) By (d), $\lambda_{\min}(\Omega_N)^2\le4C^2N^2(\mathcal{M}^{H})^{2}$, so
$\delta_N\ge N(\mathcal{M}^{H})^{3}/\{4C^2N^2(\mathcal{M}^{H})^{2}\}=
(4C^2)^{-1}\mathcal{M}^{H}/N$, a finite-sample inequality; (A4) gives the limit.
(e)(ii) By equation \eqref{eq:degrees} in the main text, $\mathcal{M}^{H}\le N$, so
$(\mathcal{M}^{H})^{4}\le N(\mathcal{M}^{H})^{3}$ and
$\{(\mathcal{M}^{H})^{2}/\lambda_{\min}(\Omega_N)\}^{2}\le\delta_N$.
\hfill\end{proof}

\begin{lemma}\label{lem:leaveout}
Under Assumptions (A1)--(A5): (a) if
$\lambda_{\min}(\widehat{Q}_N)\ge c_Q/2$ and $N$ satisfies equation \eqref{eq:tencond},
then $\lambda_{\min}(A_{(g)})\ge c_QN/4$ for every $g$, so $\mathcal{E}^{J}_N$ obtains;
both conditions hold with probability tending to one, so
$\mathbb{P}(\mathcal{E}^{J}_N)\to1$; and
$\|A_gA_{(g)}^{-1}\|\le\rho_g\le\rho\le1$, with
$A_{(g)}^{-1}-A^{-1}=A^{-1}A_gA_{(g)}^{-1}$. (b) On $\mathcal{E}^{J}_N$, for every $g$,
$\widehat{\boldsymbol{\beta}}_{(g)}-\widehat{\boldsymbol{\beta}}=-A_{(g)}^{-1}T_g$
exactly. (c) On $\mathcal{E}_N$, $\sum_{g=1}^{G}T_g=\mathbf{0}$. (d) Identically,
$\sum_{g=1}^{G}T_gT_g'=\widehat{\Omega}_N+\sum_dD_dD_d'$.
\end{lemma}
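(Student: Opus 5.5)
The four parts are largely algebraic, and I would prove them in the order (d), (a), (b), (c). Throughout, write $\mathbf{x}_n\widehat{u}_n=\mathbf{v}_n$, so that $T_g=\sum_{n\in\mathcal{N}_g}\mathbf{v}_n$.

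Part (d) is a pure counting identity. Expand
\[
\sum_g T_gT_g'=\sum_g\sum_{n,m\in\mathcal{N}_g}\mathbf{v}_n\mathbf{v}_m' .
\]
The ordered pair $(n,m)$ is counted once for each $g\in\psi(n)\cap\psi(m)$. That intersection has size $0$ when $\mathbf{1}_{nm}=0$, size $1$ when the dyads share exactly one unit, and size $2$ exactly when $\psi(n)=\psi(m)$. The count therefore equals $\mathbf{1}_{nm}+\mathbf{1}\{\psi(n)=\psi(m)\}$. The first indicator reproduces $\widehat{\Omega}_N$ from \eqref{eq:meat}. The second gives $\sum_d\sum_{n,m:\psi(n)=\psi(m)=d}\mathbf{v}_n\mathbf{v}_m'=\sum_dD_dD_d'$. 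This identity holds for arbitrary vectors $\mathbf{v}_n$, so no event is needed.

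Part (c) uses the same double counting in first order. We have
\[
\sum_gT_g=\sum_n|\psi(n)|\,\mathbf{v}_n=2\sum_n\mathbf{x}_n\widehat{u}_n .
\]
On $\mathcal{E}_N$, this sum vanishes by the OLS normal equations.

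For (b), on $\mathcal{E}^J_N$ note that $\sum_{n\notin\mathcal{N}_g}\mathbf{x}_ny_n=\sum_n\mathbf{x}_ny_n-\sum_{n\in\mathcal{N}_g}\mathbf{x}_ny_n$. Substitute $y_n=\mathbf{x}_n'\widehat{\boldsymbol{\beta}}+\widehat{u}_n$. The full-sample sum equals $A\widehat{\boldsymbol{\beta}}$ by the normal equations, and the deleted part equals $A_g\widehat{\boldsymbol{\beta}}+T_g$. Hence
\[
\widehat{\boldsymbol{\beta}}_{(g)}=A_{(g)}^{-1}\big(A_{(g)}\widehat{\boldsymbol{\beta}}-T_g\big)=\widehat{\boldsymbol{\beta}}-A_{(g)}^{-1}T_g .
\]
One caveat: this computation needs $A$ invertible as well. $\mathcal{E}^J_N$ does not obviously give this, so I would either check that $A=A_{(g)}+A_g\succeq A_{(g)}\succ0$ (which does hold, since $A_g\succeq0$) or note it explicitly.

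Part (a) is the only genuinely analytic step. On $\lambda_{\min}(\widehat{Q}_N)\ge c_Q/2$ we have $\lambda_{\min}(A)\ge c_QN/2$. By (A2), $\|A_g\|\le C_x^2M_g\le C_x^2\mathcal{M}^H$, so Weyl's inequality gives
\[
\lambda_{\min}(A_{(g)})\ge c_QN/2-C_x^2\mathcal{M}^H\ge c_QN/4,
\]
where the last step is exactly the degree condition \eqref{eq:tencond}. Since this holds for every $g$, $\mathcal{E}^J_N$ obtains. Then $\|A_gA_{(g)}^{-1}\|\le C_x^2M_g\cdot 4/(c_QN)=\rho_g\le\rho\le1$. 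The resolvent identity follows from
\[
A_{(g)}^{-1}-A^{-1}=A^{-1}(A-A_{(g)})A_{(g)}^{-1}=A^{-1}A_gA_{(g)}^{-1}.
\]
For the probability claim, Lemma \ref{lem:design} gives $\mathbb{P}(\lambda_{\min}(\widehat{Q}_N)\ge c_Q/2)\to1$. Lemma \ref{lem:struct}(e)(i) gives $\mathcal{M}^H/N\to0$, so \eqref{eq:tencond} holds for all large $N$.

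The main obstacle is only bookkeeping: getting the multiplicity count in (d) right, in particular the factor $2$ for same-dyad pairs, and making sure the events in (a) and (b) are nested correctly.
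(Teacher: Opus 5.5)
Your proposal is correct and follows the paper's proof essentially step for step. In (a) you use the same Weyl bound, $\|A_g\|\le C_x^2M_g$ against $\lambda_{\min}(A)\ge c_QN/2$; in (b) you make the same observation that $A=A_{(g)}+A_g\succeq A_{(g)}\succ0$, so that $\mathcal{E}^{J}_N\subseteq\mathcal{E}_N$ before invoking the normal equations; and in (c)--(d) you use the same double-counting identity $|\psi(n)\cap\psi(m)|=\mathbf{1}_{nm}+\mathbf{1}\{\psi(n)=\psi(m)\}$, so your reordering of the parts is only cosmetic.
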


\begin{proof}
(a) By (A2), $\|A_g\|\le M_gC_x^2\le\mathcal{M}^{H}C_x^2$, while
$\lambda_{\min}(A)=N\lambda_{\min}(\widehat{Q}_N)\ge c_QN/2$; under equation
\eqref{eq:tencond}, $\mathcal{M}^{H}C_x^2\le c_QN/4$, so Weyl's inequality gives
$\lambda_{\min}(A_{(g)})\ge c_QN/4>0$ uniformly in $g$, whence
$\|A_{(g)}^{-1}\|\le4/(c_QN)$ and $\|A_gA_{(g)}^{-1}\|\le\rho_g\le\rho\le1$, the last
inequality being equation \eqref{eq:tencond} again; the resolvent identity
$A_{(g)}^{-1}-A^{-1}=A^{-1}(A-A_{(g)})A_{(g)}^{-1}=A^{-1}A_gA_{(g)}^{-1}$ is immediate.
The two conditions hold with
probability tending to one by Lemma \ref{lem:design} and by Lemma
\ref{lem:struct}(e)(i) respectively.

(b) Since $A=A_{(g)}+A_g$ with $A_g\succeq0$, invertibility of $A_{(g)}$ forces that of
$A$, so $\mathcal{E}^{J}_N\subseteq\mathcal{E}_N$; the normal equation then gives
$A_{(g)}\widehat{\boldsymbol{\beta}}_{(g)}=A\widehat{\boldsymbol{\beta}}-
\sum_{n\in\mathcal{N}_g}\mathbf{x}_ny_n$; substituting $A=A_{(g)}+A_g$,
$A_{(g)}(\widehat{\boldsymbol{\beta}}_{(g)}-\widehat{\boldsymbol{\beta}})=
-\sum_{n\in\mathcal{N}_g}\mathbf{x}_n(y_n-\widehat{\boldsymbol{\beta}}'\mathbf{x}_n)=-T_g$.
No expansion is involved and no remainder is discarded.

(c) Each observation lies in exactly two of the sets $\mathcal{N}_1,\dots,\mathcal{N}_G$,
so $\sum_gT_g=2\sum_n\mathbf{x}_n\widehat{u}_n=\mathbf{0}$ by the normal equations on
$\mathcal{E}_N$.

(d) Interchanging summation, $\sum_gT_gT_g'=\sum_{n,m}|\psi(n)\cap\psi(m)|\,
\mathbf{v}_n\mathbf{v}_m'$, and since $|\psi(n)|=|\psi(m)|=2$ the intersection has
cardinality $0$, $1$ or $2$, with $2$ exactly when $\psi(n)=\psi(m)$; hence
$|\psi(n)\cap\psi(m)|=\mathbf{1}_{nm}+\mathbf{1}\{\psi(n)=\psi(m)\}$ in all three cases,
and splitting the sum gives $\widehat{\Omega}_N$ from the first term, by equation
\eqref{eq:meat} in the main text, and $\sum_dD_dD_d'$ from the second.
\hfill\end{proof}

\begin{lemma}\label{lem:design}
Under Assumptions (A1)--(A5),
$\|\widehat{Q}_N-Q_N\|_F=O_p\big((\mathcal{M}^{H}/N)^{1/2}\big)=o_p(1)$ and
$\mathbb{P}\big(\lambda_{\min}(\widehat{Q}_N)\ge c_Q/2\big)\to1$.
\end{lemma}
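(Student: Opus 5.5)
The plan is a second-moment (Chebyshev) argument entrywise, followed by Weyl's inequality. Write $\widehat{Q}_N-Q_N=N^{-1}\sum_n(\mathbf{x}_n\mathbf{x}_n'-\mathbb{E}[\mathbf{x}_n\mathbf{x}_n'])$ and fix an entry $(a,b)$. Put $\chi_n:=x_{na}x_{nb}-\mathbb{E}[x_{na}x_{nb}]$. By (A2), $|\chi_n|\le2C_x^2$, so
\[
\operatorname{Var}\Big[\sum_n x_{na}x_{nb}\Big]=\sum_{n,m}\operatorname{Cov}(\chi_n,\chi_m).
\]

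The key structural observation is that (A1), applied with $S_1=\{n\}$ and $S_2=\{m\}$, makes $\chi_n$ and $\chi_m$ independent whenever $\mathbf{1}_{nm}=0$. Only pairs in $\mathcal{A}$ therefore contribute, and each contributes at most $4C_x^4$ in absolute value. Lemma \ref{lem:struct}(b) gives $|\mathcal{A}|\le2N\mathcal{M}^{H}$, so each entry variance is at most $8C_x^4N\mathcal{M}^{H}$. Summing over the $K^2$ entries (with $K$ fixed by (A5)) yields
\[
\mathbb{E}\|\widehat{Q}_N-Q_N\|_F^2\le8K^2C_x^4\,\mathcal{M}^{H}/N .
\]
Markov's inequality then delivers $\|\widehat{Q}_N-Q_N\|_F=O_p((\mathcal{M}^{H}/N)^{1/2})$.

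To show this rate is $o_p(1)$, use Lemma \ref{lem:struct}(e)(i): $\mathcal{M}^{H}/N\le4C^2\delta_N\to0$ under (A4). This is the only place (A4) enters. One should check there is no circularity here, since Lemma \ref{lem:struct} was proved without Lemma \ref{lem:design}.

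For the eigenvalue claim, Weyl's inequality gives
\[
\lambda_{\min}(\widehat{Q}_N)\ge\lambda_{\min}(Q_N)-\|\widehat{Q}_N-Q_N\|\ge c_Q-\|\widehat{Q}_N-Q_N\|_F
\]
by (A3). Hence $\mathbb{P}(\lambda_{\min}(\widehat{Q}_N)<c_Q/2)\le\mathbb{P}(\|\widehat{Q}_N-Q_N\|_F>c_Q/2)\to0$. Alternatively, Chebyshev's inequality gives the explicit bound $32K^2C_x^4\mathcal{M}^{H}/(c_Q^2N)$.

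There is no real obstacle. The only point requiring care is that the dependence among the $\chi_n$ is handled by the dependency-graph counting of Lemma \ref{lem:struct}(b), rather than by independence across observations.
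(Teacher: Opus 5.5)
Your proposal is correct and follows essentially the same route as the paper. Both restrict the second moment to pairs in $\mathcal{A}$ via (A1) and Lemma \ref{lem:struct}(b), apply Markov's inequality, use Lemma \ref{lem:struct}(e)(i) for the $o_p(1)$ claim and Weyl's inequality for the eigenvalue bound; the only difference is that the paper bounds $\mathbb{E}\operatorname{tr}(\Delta_n\Delta_m')$ with $\Delta_n:=\mathbf{x}_n\mathbf{x}_n'-\mathbb{E}[\mathbf{x}_n\mathbf{x}_n']$ directly in Frobenius norm, obtaining $8C_x^4\mathcal{M}^{H}/N$ without the harmless factor $K^2$ that your entrywise summation introduces.
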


\begin{proof}
Put $\Delta_n:=\mathbf{x}_n\mathbf{x}_n'-\mathbb{E}[\mathbf{x}_n\mathbf{x}_n']$, so
$\mathbb{E}\Delta_n=0$ and $\|\Delta_n\|_F\le2C_x^2$ by (A2). If $\mathbf{1}_{nm}=0$, (A1) makes
$\Delta_n$ and $\Delta_m$ independent and $\mathbb{E}\operatorname{tr}(\Delta_n\Delta_m')$ vanishes;
otherwise $|\mathbb{E}\operatorname{tr}(\Delta_n\Delta_m')|\le4C_x^4$. By Lemma
\ref{lem:struct}(b),
$\mathbb{E}\|\widehat{Q}_N-Q_N\|_F^2\le4C_x^4|\mathcal{A}|/N^2\le8C_x^4\mathcal{M}^{H}/N$,
which tends to $0$ by Lemma \ref{lem:struct}(e)(i); Markov's inequality gives the rate.
By Weyl's inequality,
$\lambda_{\min}(\widehat{Q}_N)\ge\lambda_{\min}(Q_N)-\|\widehat{Q}_N-Q_N\|\ge c_Q-o_p(1)$,
using (A3) and $\|\cdot\|\le\|\cdot\|_F$.
\hfill\end{proof}

\section{Three Illustrative Examples}
\label{sec:smb}

In this section we describe the framework on three designs that a practitioner might encounter in
applied work. Each illustration proceeds in three steps. We first write the design down. We then give a
symbol-by-symbol dictionary whose entries are the objects of Section \ref{sec:model} in the
main text, taken in the order that section introduces them, so that each entry says what
that piece of notation is in the application. We close with what Assumption (A1) leaves
free and what Assumption (A4) asks of the design. The numerical exercise of
\ref{sec:numerical} below instantiates the first two of them: the directed Monte Carlo
experiments and the empirical application are instances of Illustration 1, with $K_1=2$
and $K_1=4$ respectively, and the multilayer Monte Carlo experiments are an instance of
Illustration 2, with $L=3$ and $K_1=2$.

\subsection{Illustration 1: Directed Dyads}
\label{sec:smb-directed}

Let there be $G$ countries, indexed $g=1,\dots,G$, and let a row of the regression be an
\emph{ordered} pair $n=(g,h)$ with $g\ne h$, read as a flow from origin $g$ to
destination $h$, such as exports, migrants or capital. Every ordered pair is observed, so
\[
N=G(G-1), \qquad \psi\big((g,h)\big):=\{g,h\} .
\]
The orientation is carried by the row and forgotten by $\psi$. Exactly two rows, $(g,h)$
and $(h,g)$, carry each dyad, so $\psi$ is two-to-one, and this is the only feature of the
design the framework is told.

Let the outcome be $y_n=y_{gh}:=\ln T_{gh}$, the log flow from $g$ to $h$. Let
$\mathbf{z}_g\in\mathbb{R}^{K_1}$ collect country attributes, such as log GDP and log GDP
per capita, and let $\mathbf{s}_{\{g,h\}}\in\mathbb{R}^{K_2}$ collect symmetric pair
attributes, such as log distance, contiguity and the constant, with
$\mathbf{s}_{\{g,h\}}=\mathbf{s}_{\{h,g\}}$ by construction. The regressor stacks the
origin, destination and pair blocks,
\[
\mathbf{x}_n=\mathbf{x}_{gh}:=\big(\mathbf{z}_g',\ \mathbf{z}_h',\
\mathbf{s}_{\{g,h\}}'\big)'\in\mathbb{R}^{K}, \qquad K=2K_1+K_2 ,
\]
so that with $\boldsymbol{\beta}=(\boldsymbol{\beta}_O',\boldsymbol{\beta}_D',
\boldsymbol{\beta}_S')'$ the single equation $y_n=\boldsymbol{\beta}'\mathbf{x}_n+u_n$
is the gravity equation \eqref{eq:gravity} of \ref{sec:numerical} below, in which
$\boldsymbol{\beta}_O$ is the push effect of an attribute at the origin and
$\boldsymbol{\beta}_D$ its pull effect at the destination. Note that $\psi$ is symmetric
while $\mathbf{x}_n$ is not. Reversing the arrow swaps the first two blocks and fixes the
third,
\[
\mathbf{x}_{hg}=\Pi\,\mathbf{x}_{gh},\qquad
\Pi:=\begin{pmatrix}0&I_{K_1}&0\\ I_{K_1}&0&0\\ 0&0&I_{K_2}\end{pmatrix},
\qquad \Pi=\Pi',\quad \Pi^2=I_K .
\]
The hypothesis that push equals pull, $H_0:\boldsymbol{\beta}_O=\boldsymbol{\beta}_D$,
is
\[
R_N:=\begin{pmatrix}I_{K_1}\\ -I_{K_1}\\ 0_{K_2\times K_1}\end{pmatrix}
\in\mathbb{R}^{K\times q},\qquad q=K_1,\qquad \mathbf{r}=\mathbf{0} ;
\]
Because $\Pi$ is a symmetric involution it splits $\mathbb{R}^{K}$ orthogonally, and the columns of $R_N$ span its $(-1)$-eigenspace $\mathcal{P}_-$. The restriction
$R_N'\boldsymbol{\beta}=\mathbf{0}$ therefore says that $\boldsymbol{\beta}$ lies in the $(+1)$-eigenspace, $\Pi\boldsymbol{\beta}=\boldsymbol{\beta}$, and the two directions of the equivalence draw on different things. Under $H_0$,
$\boldsymbol{\beta}'\mathbf{x}_{hg}=\boldsymbol{\beta}'\Pi\mathbf{x}_{gh}=
(\Pi\boldsymbol{\beta})'\mathbf{x}_{gh}=\boldsymbol{\beta}'\mathbf{x}_{gh}$ for every ordered pair, whatever the attributes are. Conversely
$\boldsymbol{\beta}'(\mathbf{x}_{gh}-\mathbf{x}_{hg})=
(\boldsymbol{\beta}_O-\boldsymbol{\beta}_D)'(\mathbf{z}_g-\mathbf{z}_h)$, so the regression function forgets the orientation only if $\boldsymbol{\beta}_O-\boldsymbol{\beta}_D$ annihilates every attribute difference. Assumption (A3) already forbids that: $Q_N$ leaves $\mathcal{P}_-$ invariant, and at the unit vector $(\mathbf{v}',-\mathbf{v}',\mathbf{0}')'/\sqrt2$ of $\mathcal{P}_-$ indexed by $\mathbf{v}\in\mathbb{R}^{K_1}$ with $\|\mathbf{v}\|=1$, its quadratic form is $(2N)^{-1}\sum_{g\neq h}\mathbb{E}[\{\mathbf{v}'(\mathbf{z}_g-\mathbf{z}_h)\}^2]$, so $\lambda_{\min}(Q_N)\geq c_Q$ leaves no such direction. That is, the hypothesis is exactly that the regression function forgets the orientation, as $\psi$ does, and no condition beyond (A3) is needed to say so.

Each entry of the dictionary below is the corresponding object of Section
\ref{sec:model} in the main text.

\begin{description}
\item[{$g$:}] a country, $g=1,\dots,G$, and all limits are taken as $G\to\infty$.
\item[{$n$, $N$:}] an ordered pair $(g,h)$ with $g\ne h$, that is, one directed flow, and
  $N=G(G-1)$.
\item[{$\psi(n)$:}] the unordered pair $\{g,h\}$, so that the orientation is discarded and
  $\psi$ is two-to-one.
\item[{$M_g$, $\mathcal{M}^{H}$, $\mathcal{M}^{L}$:}] $M_g=2(G-1)$ for every $g$, since $g$
  is the origin in $G-1$ rows and the destination in $G-1$ others. Hence
  $\mathcal{M}^{H}=\mathcal{M}^{L}=2(G-1)$ and the first chain of \eqref{eq:degrees} holds
  with equality.
\item[{$L_d$, $\mathcal{L}$, $\mathcal{J}_N$:}] every dyad carries exactly its two
  orientations, so $L_d\equiv2$, $\mathcal{L}=2$ and $\mathcal{J}_N=2N$.
\item[{$\mathbf{1}_{nm}$, $\mathcal{A}$:}] $\mathbf{1}_{nm}=1$ if and only if the two rows
  share a country, whatever their orientations, and in particular
  $\mathbf{1}_{(g,h),(h,g)}=1$. A row on $\{g,h\}$ meets $M_g+M_h-2$ rows, so
  $|\mathcal{A}|=N(4G-6)$ exactly, against the bound
  $N(2\mathcal{M}^{H}-1)=N(4G-5)$ of Lemma \ref{lem:struct}(b).
\item[{$y_n$, $\mathbf{x}_n$:}] the log flow from $g$ to $h$, and the stacked blocks above.
  Here $\|\mathbf{x}_n\|^2=\|\mathbf{z}_g\|^2+\|\mathbf{z}_h\|^2+
  \|\mathbf{s}_{\{g,h\}}\|^2$, so the bound on $\mathbf{x}_n$ in Assumption (A2) holds if
  and only if the attributes are bounded.
\item[{$\boldsymbol{\beta}$, $u_n$:}] the push, pull and pair blocks
  $(\boldsymbol{\beta}_O',\boldsymbol{\beta}_D',\boldsymbol{\beta}_S')'$, and the
  flow-specific disturbance. Note that $u_{gh}$ and $u_{hg}$ are different random
  variables, and that their correlation is the \emph{reciprocity} of the corridor.
\item[{$Q_N$:}] $\Pi$-invariant, $\Pi Q_N\Pi=Q_N$. The invariance is a property of the
  complete directed design rather than an assumption, since every dyad is observed in both
  orientations. The blocks are
  $(1,1)=(2,2)=A_z:=G^{-1}\sum_g\mathbb{E}[\mathbf{z}_g\mathbf{z}_g']$ and
  $(1,2)=(2,1)=B_z:=N^{-1}\sum_{g\ne h}\mathbb{E}[\mathbf{z}_g\mathbf{z}_h']$, with the two
  $\mathbf{z}$--$\mathbf{s}$ blocks equal by the same symmetry. Consequently
  $\mathcal{P}_-$ is $Q_N$-invariant and $Q_N|_{\mathcal{P}_-}=A_z-B_z$. On the tested
  directions Assumption (A3) says that country attributes vary: with $\mathbf{z}_g$
  independent across countries, $A_z-B_z\to\operatorname{Var}(\mathbf{z}_g)$.
\item[{$\widehat{\boldsymbol{\beta}}$, $S_N$:}] one least-squares regression on the $N$
  rows, with score $S_N=\sum_{g\ne h}\mathbf{x}_{gh}u_{gh}$.
\item[{$\Omega_N$, $\widehat{\Omega}_N$:}] $\Omega_N$ contains the reciprocity covariances
  $\operatorname{Cov}(\mathbf{x}_{gh}u_{gh},\mathbf{x}_{hg}u_{hg})$. In
  $\widehat{\Omega}_N$ the $(g,h),(h,g)$ terms are what make the estimator dyadic-robust
  rather than one-way clustered.
\item[{$R_N$, $q$:}] push minus pull, a basis of $\mathcal{P}_-$, constant in $N$, with
  $q=K_1$.
\item[{$\mathcal{V}_N$, $\widehat{\mathcal{V}}_N$:}] the variance of
  $R_N'\widehat{\boldsymbol{\beta}}=\widehat{\boldsymbol{\beta}}_O-
  \widehat{\boldsymbol{\beta}}_D$, and its estimator. The $\Pi$-invariance gives
  $Q_NR_N=R_N(A_z-B_z)$, hence $Q_N^{-1}R_N=R_N(A_z-B_z)^{-1}$ and
  $\mathcal{V}_N=N^{-2}(A_z-B_z)^{-1}R_N'\Omega_NR_N(A_z-B_z)^{-1}$. That is, the bread acts on the
  tested directions as the single $K_1\times K_1$ matrix $A_z-B_z$.
\item[{$W_N$:}] the push-equals-pull Wald statistic, asymptotically $\chi^2_{K_1}$.
\end{description}

Assumption (A1) imposes no restrictions on observations that share a common unit. Consequently, the reciprocity $\operatorname{Corr}(u_{gh},u_{hg})$, as well as the dependence among a country's outflows, its inflows, and the cross-correlation between them, remain unrestricted. Assumption (A1) solely requires that any shock common to node-disjoint corridors be explicitly modeled in the conditional mean.

To evaluate Assumption (A4), note that $R_N'\mathbf{x}_{gh}=\mathbf{z}_g-\mathbf{z}_h$. Therefore, along the tested directions, the score isolates net flows, such that $R_N'S_N=\sum_{\{g,h\}}(\mathbf{z}_g-\mathbf{z}_h)(u_{gh}-u_{hg})$. Consider an error-components model given by $u_{gh}=a_g+b_h+c_{\{g,h\}}+\varepsilon_{gh}$, where $a_g$ denotes a push effect, $b_h$ a pull effect, $c_{\{g,h\}}$ a symmetric corridor shock, and $\varepsilon_{gh}$ is an idiosyncratic noise term. The symmetric corridor shock cancels exactly from the difference $u_{gh}-u_{hg}$, yielding $\operatorname{Var}(R_N'S_N)\asymp G^3\operatorname{Var}(a_g-b_g)+G^2\sigma_\varepsilon^2$. In contrast, the symmetric directions accumulate at order $G^3$ due to the node and corridor shocks. Because $N(\mathcal{M}^{H})^3\asymp8G^5$, Assumption (A4) requires a score variance of order $G^3$ along the tested directions; this condition is satisfied whenever $\operatorname{Var}(a_g-b_g)$ is strictly bounded away from zero.

However, this alone is insufficient to satisfy Assumption (A4) globally, as the condition constrains every direction of $\mathbb{R}^{K}$ and thus requires a strictly positive lower bound on the entire matrix $\Omega_N$. This distinction is theoretically substantive. For instance, setting $b_g\equiv0$ leaves $\operatorname{Var}(a_g-b_g)$ bounded away from zero but eliminates the pull family entirely. In this scenario, the unit aggregates of the remaining push family yield a direction, namely the destination block net of the constant, along which $\Omega_N$ accumulates at order $N$ alone. Consequently, $\lambda_{\min}(\Omega_N)=O(G^{2})$ and $\delta_N$ operates at order $G$.

The theoretical argument necessitates that the unit aggregates vary at order $\mathcal{M}^{H}$ along every direction of $\mathbb{R}^{K}$. This requires the simultaneous presence of both node families; spanning alone is insufficient, as the regressors can span the parameter space while a specific direction still accumulates only at order $N$. A divergence between push and pull effects ensures this requisite variation. In particular, a higher degree of reciprocity in the flows diminishes $\Omega_N$ along the tested directions, while leaving the symmetric directions unaffected. Consequently, the condition number $\kappa(\Omega_N)$ diverges for economically interpretable reasons. Finally, because $\mathcal{J}_N=2N$ operates at order $G^2$, whereas Assumption (A4) restricts $\lambda_{\min}(\Omega_N)$ to grow at a rate strictly faster than $G^{5/2}$, Assumption (A6) holds unconditionally whenever Assumption (A4) is satisfied.

\subsection{Illustration 2: Multilayer Dyads}
\label{sec:smb-multilayer}

Let there be $G$ countries, indexed $g=1,\dots,G$, and $L$ \emph{layers}, say product
categories, indexed $\ell=1,\dots,L$, with $L$ fixed. Write $d=\{g,h\}$ for an unordered
country pair, and let a row of the regression be a (pair, layer) couple $n=(d,\ell)$, every
pair being observed in every layer, so
\[
N=L\binom G2,\qquad \psi\big((d,\ell)\big):=d .
\]
The layer is carried by the row and forgotten by $\psi$, which is $L$-to-one. Again the
non-injectivity is the only feature of the design the framework is told.

Let $y_n=y_{(d,\ell)}$ be the log trade flow of pair $d$ in category $\ell$, and let
$\mathbf{w}_{d\ell}\in\mathbb{R}^{K_1}$ collect the $K_1$ gravity covariates of that pair
in that category, with $K_1$ fixed. These covariates are node-borne by construction, and it
is gravity theory itself that supplies them in that form. Writing
$\boldsymbol{\varphi}_{g\ell}$ for country $g$'s attributes in category $\ell$, the canonical
specification has
$\mathbf{w}_{d\ell}=\boldsymbol{\varphi}_{g\ell}+\boldsymbol{\varphi}_{h\ell}$ plus a bilateral
component, since $\ln(Y_gY_h)=\ln Y_g+\ln Y_h$, and the multilateral resistance terms of
\citet{anderson2003gravity} are node-level as well. Let the regressor be the layer dummy
interacted with the covariates,
\[
\mathbf{x}_n=\mathbf{x}_{(d,\ell)}:=\mathbf{e}_\ell\otimes\mathbf{w}_{d\ell}
\in\mathbb{R}^{K},\qquad K=LK_1,
\]
with $\mathbf{e}_\ell\in\mathbb{R}^{L}$ the $\ell$-th standard basis vector, and
correspondingly $\boldsymbol{\beta}=(\boldsymbol{\beta}_1',\dots,
\boldsymbol{\beta}_L')'$ with $\boldsymbol{\beta}_\ell\in\mathbb{R}^{K_1}$. The single
equation $y_n=\boldsymbol{\beta}'\mathbf{x}_n+u_n$ \emph{is} the system of $L$
category-specific gravity equations
$y_{(d,\ell)}=\boldsymbol{\beta}_\ell'\mathbf{w}_{d\ell}+u_{(d,\ell)}$, which is equation
\eqref{eq:gravityml} of \ref{sec:numerical} below when $\mathbf{w}_{d\ell}$ holds the
pair's size and a constant, and
$\widehat{\boldsymbol{\beta}}$ is exactly equation-by-equation least squares, because
$\sum_n\mathbf{x}_n\mathbf{x}_n'$ is block diagonal. Take the first covariate to be the
pair's economic size, $\ln(Y_gY_h)$, and let the hypothesis be that its elasticity is
common across categories. With $\mathbf{a}_1\in\mathbb{R}^{K_1}$ the first standard basis
vector,
\[
R_N:=\big[\,\mathbf{e}_1\otimes\mathbf{a}_1-\mathbf{e}_2\otimes\mathbf{a}_1\;\big|\;
\cdots\;\big|\;\mathbf{e}_{L-1}\otimes\mathbf{a}_1-\mathbf{e}_L\otimes\mathbf{a}_1\,\big]
\in\mathbb{R}^{K\times q},\qquad q=L-1,\qquad \mathbf{r}=\mathbf{0} .
\]

The dictionary:

\begin{description}
\item[{$g$:}] a country, $g=1,\dots,G$.
\item[{$n$, $N$:}] a (pair, layer) couple $(d,\ell)$, and $N=L\binom G2$.
\item[{$\psi(n)$:}] the country pair $d$, the layer being discarded, so $\psi$ is
  $L$-to-one.
\item[{$M_g$, $\mathcal{M}^{H}$, $\mathcal{M}^{L}$:}] $M_g=L(G-1)$, namely $G-1$ partners
  times $L$ layers, the same for every country, so
  $\mathcal{M}^{H}=\mathcal{M}^{L}=L(G-1)$.
\item[{$L_d$, $\mathcal{L}$, $\mathcal{J}_N$:}] $L_d\equiv L$, $\mathcal{L}=L$ and
  $\mathcal{J}_N=LN$.
\item[{$\mathbf{1}_{nm}$, $\mathcal{A}$:}] equal to one if and only if
  $d\cap d'\ne\emptyset$, whatever the layers. A row on $\{g,h\}$ meets $M_g+M_h-L$ rows,
  so $|\mathcal{A}|=N\cdot L(2G-3)$ exactly.
\item[{$y_n$, $\mathbf{x}_n$:}] the log flow of pair $d$ in category $\ell$, and
  $\mathbf{e}_\ell\otimes\mathbf{w}_{d\ell}$. Here
  $\|\mathbf{x}_n\|=\|\mathbf{w}_{d\ell}\|$, so the bound on $\mathbf{x}_n$ in Assumption
  (A2) is exactly a bound on the covariates.
\item[{$\boldsymbol{\beta}$, $u_n$:}] the $L$ stacked category-specific slope vectors, and
  the pair-by-category disturbance.
\item[{$Q_N$:}] block diagonal, one block per category,
  $Q_N=\operatorname{blockdiag}\big(L^{-1}\Sigma_{N,1},\dots,L^{-1}\Sigma_{N,L}\big)$ with
  $\Sigma_{N,\ell}:=\binom G2^{-1}\sum_d\mathbb{E}[\mathbf{w}_{d\ell}
  \mathbf{w}_{d\ell}']$, so that Assumption (A3) becomes
  $\min_\ell\lambda_{\min}(\Sigma_{N,\ell})\ge Lc_Q$.
\item[{$\widehat{\boldsymbol{\beta}}$, $S_N$:}]
  $\widehat{\boldsymbol{\beta}}_\ell=\big(\sum_d\mathbf{w}_{d\ell}
  \mathbf{w}_{d\ell}'\big)^{-1}\sum_d\mathbf{w}_{d\ell}y_{(d,\ell)}$, category by
  category, while the score stacks the per-category scores
  $S_{N,\ell}=\sum_d\mathbf{w}_{d\ell}u_{(d,\ell)}\in\mathbb{R}^{K_1}$.
\item[{$\Omega_N$, $\widehat{\Omega}_N$:}] an $L\times L$ array of $K_1\times K_1$ blocks,
  with block $(\ell,\ell')=\operatorname{Cov}(S_{N,\ell},S_{N,\ell'})$. The diagonal blocks
  are the per-category score variances and the off-diagonal blocks the cross-category
  covariances, which are left unrestricted. The $(\ell,\ell')$ block of
  $\widehat{\Omega}_N$ is $\sum_{d\cap d'\ne\emptyset}\widehat{u}_{(d,\ell)}
  \widehat{u}_{(d',\ell')}\mathbf{w}_{d\ell}\mathbf{w}_{d'\ell'}'$.
\item[{$R_N$, $q$:}] the $L-1$ size-elasticity contrasts, constant in $N$, with
  $q=L-1$.
\item[{$\mathcal{V}_N$, $\widehat{\mathcal{V}}_N$, $W_N$:}] the $(L-1)\times(L-1)$ variance of those contrasts
  and its estimator, and the cross-category Wald statistic, asymptotically $\chi^2_{L-1}$.
\end{description}

Assumption (A1) leaves the correlation of a single pair's disturbances across categories unrestricted, analogous to the dependence structure in a seemingly unrelated regressions (SUR) system, as well as the correlation between any two pairs sharing a country across any two categories. The assumption strictly requires that a global category shock simultaneously affecting node-disjoint pairs be explicitly observed and incorporated into $\mathbf{x}_n$, or that the analysis be interpreted conditional on its realization. While the category indicator within $\mathbf{x}_n$ absorbs the shock's global mean, it does not absorb the shock itself.

To evaluate Assumption (A4), note that $N(\mathcal{M}^{H})^3\asymp(L^4/2)G^5$, which implies the condition requires $\lambda_{\min}(\Omega_N)\gg (L^2/\sqrt2)G^{5/2}$. Suppose the disturbance for category $\ell$ includes country-specific effects, such that $u_{(d,\ell)}=a_{g\ell}+a_{h\ell}+\varepsilon_{d\ell}$, where $\operatorname{Var}(a_{g\ell})\asymp G^{-\gamma_\ell}$. Because the covariates $\mathbf{w}_{d\ell}$ are node-borne, the unit aggregate $\sum_{h\ne g}\mathbf{w}_{\{g,h\}\ell}$ operates at order $G$ rather than $\sqrt{G}$. Consequently, the score variance for category $\ell$ is of order $G^{3-\gamma_\ell}$, and Assumption (A4) requires $\gamma_\ell<1/2$ for all $\ell$. When cross-category blocks are non-zero, this restriction applies to all directions in $\mathbb{R}^{K}$, not merely to individual categories. The categories may thus exist across a spectrum of genuinely different scales, permitting the condition number $\kappa(\Omega_N)$ to diverge at the rate $G^{\gamma_{\max}-\gamma_{\min}}$. Importantly, the scale parameters $\gamma_\ell$ do not appear in Theorem \ref{thm:main}, nor do they need to be known or estimated. The same restriction matrix $R_N$, variance estimator $\widehat{V}_N$, and $\chi^2_{L-1}$ critical value remain asymptotically valid regardless of whether the categories share a common scale, a property explicitly evaluated in the multilayer Monte Carlo experiments of Appendix \ref{sec:mc2}.

The requirement that covariates be node-borne is important, imposing a restriction on the empirical design rather than on the theorem itself. A covariate with idiosyncratic pair-level variation, defined such that its $G-1$ values for a given country are uncorrelated and lack a country-specific mean, contributes a direction to $\mathbb{R}^{K}$ where the unit aggregate is of order $\sqrt{G}$ rather than $G$, provided the intercept absorbs any corresponding global level. Whether a covariate exhibits this idiosyncratic property depends on its fundamental nature, not simply its bilateral structure. An independent, mean-zero pair shock satisfies this condition; however, log distance does not for example. For log distance, the aggregate $\sum_{h\ne g}(d_{\{g,h\}}-\bar{d})=(G-1)(\bar{d}_g-\bar{d})$ is of order $G$ for any country whose average distance diverges from the overall sample average (e.g., geographically remote countries). Projecting onto category intercepts fails to resolve this, as these intercepts are global per-category constants that cannot remove country-specific means.

Along a direction defined by such an idiosyncratic covariate, $\Omega_N$ accumulates only at order $N$. Because Assumption (A4) constrains every direction in the parameter space rather than just the tested contrast, the condition will fail irrespective of how well-scaled the tested contrast is. Accommodating an idiosyncratic covariate within this design therefore requires a support sufficiently sparse such that the lower bound $\{N(\mathcal{M}^{H})^{3}\}^{1/2}$ falls below order $N$, implying $(\mathcal{M}^{H})^{3}=o(N)$. Finally, because $\mathcal{J}_N=LN$ is of order $G^2$, Assumption (A6) is satisfied unconditionally whenever Assumption (A4) holds. 

\subsection{Illustration 3: A Balanced Dyadic Panel}
\label{sec:smb-panel}

Let there be $G$ airports, indexed $g=1,\dots,G$, and $T$ years, indexed $t=1,\dots,T$,
with $T$ fixed. Write $d=\{g,h\}$ for a city-pair \emph{route}, and let a row of the
regression be a (route, year) couple $n=(d,t)$, every route being observed in every year,
so
\[
N=T\binom G2,\qquad \psi\big((d,t)\big):=d ,
\]
the year being carried by the row and forgotten by $\psi$, which is $T$-to-one.

Let $y_n=y_{dt}$ be log passengers flown on route $d$ in year $t$. Let
$\mathbf{s}_d\in\mathbb{R}^{K_s}$ collect time-invariant route characteristics and
$\mathbf{f}_{dt}\in\mathbb{R}^{K_w}$ time-varying ones, say fares. Both are node-borne,
with $\mathbf{s}_d=\boldsymbol{\theta}_g+\boldsymbol{\theta}_h$ plus a bilateral component
such as great-circle distance, and
$\mathbf{f}_{dt}=\boldsymbol{\varphi}_{gt}+\boldsymbol{\varphi}_{ht}$, where
$\boldsymbol{\theta}_g$ is a permanent airport characteristic and
$\boldsymbol{\varphi}_{gt}$ a time-varying one. Split the time-varying block into its route
mean and its deviation,
\[
\bar{\mathbf{f}}_d:=\frac1T\sum_{r=1}^{T}\mathbf{f}_{dr},\qquad
\mathbf{f}^{\mathrm w}_{dt}:=\mathbf{f}_{dt}-\bar{\mathbf{f}}_d,
\qquad\text{so that }\ \sum_t\mathbf{f}^{\mathrm w}_{dt}=\mathbf{0}\ \text{ exactly for
every }d,
\]
and let the regressor stack the year dummies, the time-invariant block, the
\emph{between} block, and the \emph{within} block,
\[
\mathbf{x}_n=\mathbf{x}_{dt}:=\big(\mathbf{e}_t',\ \mathbf{s}_d',\
\bar{\mathbf{f}}_d',\ (\mathbf{f}^{\mathrm w}_{dt})'\big)'\in\mathbb{R}^{K},
\qquad K=T+K_s+2K_w,
\]
with $\mathbf{e}_t\in\mathbb{R}^{T}$ the $t$-th standard basis vector, so that the
specification carries a full set of year dummies and no separate intercept. With
$\boldsymbol{\beta}=(\boldsymbol{\delta}',
\boldsymbol{\beta}_s',\boldsymbol{\beta}_B',\boldsymbol{\beta}_W')'$ the equation reads
$y_{dt}=\delta_t+\boldsymbol{\beta}_s'\mathbf{s}_d+\boldsymbol{\beta}_B'
\bar{\mathbf{f}}_d+\boldsymbol{\beta}_W'\mathbf{f}^{\mathrm w}_{dt}+u_{dt}$. The year
dummies are what fixes $T$. Assumption (A1) asks that a shock common to node-disjoint
routes, such as an aggregate demand or fuel-price shock, sit in the mean, and that costs
$T$ parameters; since Assumption (A5) fixes $K$, we take $T$ fixed and $G\to\infty$, a
short panel of many units. Let the hypothesis be the between--within contrast of
\citet{mundlak1978pooling} and \citet{hausman1978specification}, namely that between-route
and within-route variation deliver the same coefficient,
$\boldsymbol{\beta}_B=\boldsymbol{\beta}_W$,
\[
R_N:=\big(0_{K_w\times T},\ 0_{K_w\times K_s},\ I_{K_w},\ -I_{K_w}\big)'
\in\mathbb{R}^{K\times q},\qquad q=K_w,\qquad \mathbf{r}=\mathbf{0} ,
\]
a dyadic-robust Hausman test with no rate assumed anywhere.

The dictionary:

\begin{description}
\item[{$g$:}] an airport, $g=1,\dots,G$.
\item[{$n$, $N$:}] a (route, year) couple $(d,t)$, and $N=T\binom G2$.
\item[{$\psi(n)$:}] the route, the year being discarded, so $\psi$ is $T$-to-one.
\item[{$M_g$, $\mathcal{M}^{H}$, $\mathcal{M}^{L}$:}] $M_g=T(G-1)$, namely $G-1$ routes
  times $T$ years, the same for every airport, so
  $\mathcal{M}^{H}=\mathcal{M}^{L}=T(G-1)$. The equality is what a balanced panel buys.
\item[{$L_d$, $\mathcal{L}$, $\mathcal{J}_N$:}] $L_d\equiv T$, $\mathcal{L}=T$ and
  $\mathcal{J}_N=TN$.
\item[{$\mathbf{1}_{nm}$, $\mathcal{A}$:}] equal to one if and only if the two rows share
  an airport, whatever the years, and in particular for the same route at any two dates.
  Here $|\mathcal{A}|=N\cdot T(2G-3)$ exactly.
\item[{$y_n$, $\mathbf{x}_n$:}] log passengers on route $d$ in year $t$, and the stack
  above. Here $\|\mathbf{x}_n\|^2=1+\|\mathbf{s}_d\|^2+\|\bar{\mathbf{f}}_d\|^2+
  \|\mathbf{f}^{\mathrm w}_{dt}\|^2$, so the bound on $\mathbf{x}_n$ in Assumption (A2)
  holds if and only if the characteristics are bounded.
\item[{$\boldsymbol{\beta}$, $u_n$:}] $(\boldsymbol{\delta}',\boldsymbol{\beta}_s',
  \boldsymbol{\beta}_B',\boldsymbol{\beta}_W')'$, one common slope pair rather than $T$ of
  them, and the route-year disturbance. Note that $u_{d1},\dots,u_{dT}$ are $T$ different
  random variables on one route.
\item[{$Q_N$:}] the between and within blocks are orthogonal by construction, as the within block is route-demeaned, yielding exactly $N^{-1}\sum_{d,t}\bar{\mathbf{f}}_d(\mathbf{f}^{\mathrm w}_{dt})'= N^{-1}\sum_d\bar{\mathbf{f}}_d\big(\sum_t\mathbf{f}^{\mathrm w}_{dt}\big)'=0$. Consequently, $Q_N$ restricted to the $(B,W)$ subspace is block diagonal $\operatorname{diag}(\Sigma_B,\Sigma_W)$, where $\Sigma_B:=\binom G2^{-1}\sum_d\mathbb{E}[\bar{\mathbf{f}}_d\bar{\mathbf{f}}_d']$ and $\Sigma_W:=N^{-1}\sum_{d,t}\mathbb{E}[\mathbf{f}^{\mathrm w}_{dt} (\mathbf{f}^{\mathrm w}_{dt})']$. However, the year-dummy and $\mathbf{s}_d$ blocks are not orthogonal to this subspace, necessitating that the restriction $R_N$ partial them out. Along the tested directions, Assumption (A3) requires both cross-sectional and intertemporal variation: $\Sigma_B\succ0$ establishes the full rank of the raw between-block second moment, which does not strictly measure between-route variation, as it holds even if all routes share an identical nonzero average fare; $\Sigma_W\succ0$ implies that fares exhibit temporal variation within a given route. These conditions are necessary but not sufficient; Assumption (A3) additionally requires that both blocks remain nonsingular after partialling out the year dummies and $\mathbf{s}_d$, which formally ensures between-route identification.
\item[{$\widehat{\boldsymbol{\beta}}$, $S_N$:}] one pooled least-squares regression on the
  $N$ rows, with score $S_N=\sum_{d,t}\mathbf{x}_{dt}u_{dt}$.
\item[{$\Omega_N$, $\widehat{\Omega}_N$:}] $\Omega_N$ contains the serial covariances
  $\operatorname{Cov}(\mathbf{x}_{dt}u_{dt},\mathbf{x}_{ds}u_{ds})$ at every lag. In
  $\widehat{\Omega}_N$ the within-route pairs enter with weight one at every lag, with no
  kernel, no bandwidth and no lag truncation, since $T$ is fixed.
\item[{$R_N$, $q$:}] the between--within contrast
  $\boldsymbol{\beta}_B-\boldsymbol{\beta}_W$, constant in $N$, with $q=K_w$.
\item[{$\mathcal{V}_N$, $\widehat{\mathcal{V}}_N$, $W_N$:}] the $K_w\times K_w$ variance of
  $\widehat{\boldsymbol{\beta}}_B-\widehat{\boldsymbol{\beta}}_W$ and its estimator, and
  the dyadic-robust Hausman statistic, asymptotically $\chi^2_{K_w}$.
\end{description}

Assumption (A1) permits arbitrary serial dependence within a given route, imposing no stationarity, mixing, or bandwidth conditions, as well as arbitrary dependence between routes sharing an airport across any time periods. The assumption strictly requires that shocks common to node-disjoint routes within a year be explicitly captured by the conditional mean. Year indicators satisfy this by treating time effects as fixed parameters, effectively conditioning the analysis on their realization; however, heterogeneous shock loadings across routes would not be absorbed in this manner.

To evaluate Assumption (A4), observe that $N(\mathcal{M}^{H})^3\asymp(T^4/2)G^5$, which imposes the requirement $\lambda_{\min}(\Omega_N)\gg(T^2/\sqrt2)G^{5/2}$. The within-route directions isolate strictly temporal variation, as $\sum_t\mathbf{f}^{\mathrm w}_{dt}=\mathbf{0}$ exactly. Suppose every direction in $\mathbb{R}^{K}$ exhibits unit aggregates of order $G$. This occurs with node-borne components such as $\boldsymbol{\theta}_g+\boldsymbol{\theta}_h$ and $\boldsymbol{\varphi}_{gt}+\boldsymbol{\varphi}_{ht}$, as well as with great-circle distance through its country-specific mean (as discussed in Appendix \ref{sec:smb-multilayer}). Under this structure, between-type directions accumulate at order $T^2G^3$. The within directions surpass the $G^{5/2}$ threshold provided that the airport-borne shocks are time-varying and the covariates $\mathbf{f}_{dt}$ are inherently node-borne (as would be typical in a fare panel). Consequently, Assumption (A4) is satisfied; furthermore, because $\mathcal{J}_N=TN$ operates at order $G^2$, it is strictly dominated by $\lambda_{\min}(\Omega_N)$, ensuring that Assumption (A6) holds simultaneously.

However, panel balancedness and temporal variation are insufficient to guarantee this lower bound inherently. It is possible to construct a bounded bilateral route characteristic satisfying $\sum_{h\ne g}z_{\{g,h\}}=0$ for all $g$ while maintaining $\sum_dz_d^2\asymp G^2$. Such a covariate perfectly eliminates airport shocks from its corresponding score coordinate, reducing the variance accumulation along that specific direction to order $N$. In this scenario, $\lambda_{\min}(\Omega_N)=\Theta(N)$, $\delta_N=\Theta(G)$, and $\mathcal{J}_N/\lambda_{\min}(\Omega_N)=\Theta(T)$, meaning neither asymptotic condition vanishes. Ultimately, satisfying these assumptions relies on establishing a primitive lower bound on $\lambda_{\min}(\Omega_N)$, rather than depending solely on the geometric structure of the panel.

\section{Numerical Exercise}
\label{sec:numerical}

In this section, we apply both theorems in the main text to a bilateral trade gravity equation across three distinct dyadic configurations. Two of the three treat an observation as a directed country pair:
\begin{equation}\label{eq:gravity}
\ln T_{ij}\;=\;\underbrace{\boldsymbol\beta_O'\mathbf z_i}_{\text{push, exporter}}
+\underbrace{\boldsymbol\beta_D'\mathbf z_j}_{\text{pull, importer}}
+\boldsymbol\beta_S'\mathbf s_{\{i,j\}}+u_{ij},
\end{equation}
where $T_{ij}$ is the directed trade flow from exporter $i$ to importer $j$, $\mathbf{z}_i$ and $\mathbf{z}_j$ represent origin and destination attributes, respectively, and $\mathbf{s}_{\{i,j\}}$ captures symmetric pair attributes, including the constant. Under this structure, the mapping $\psi\big((i,j)\big)=\{i,j\}$ is at most two-to-one, achieving exactly two-to-one solely for reciprocated pairs. Consequently, a realized dyad carries either one or two observations, which aligns with Illustration 1 in \ref{sec:smb}.

The remaining configuration defines an observation as an \emph{unordered} country pair measured across $L$ product categories:
\begin{equation}\label{eq:gravityml}
\ln T_{d\ell}\;=\;\beta_{\ell,1}\,w_{d\ell}+\beta_{\ell,2}+u_{d\ell},
\qquad \ell=1,\dots,L,
\end{equation}
where $d=\{g,h\}$ denotes an unordered pair, $w_{d\ell}=\ln(Y_gY_h)=\ln Y_g+\ln Y_h$ is the pair's economic size in category $\ell$, and $\beta_{\ell,2}$ serves as the category intercept. In this case, $\psi\big((d,\ell)\big)=d$ constitutes an $L$-to-one mapping, corresponding directly to Illustration 2 in \ref{sec:smb}. The two provided dictionaries map every symbol to the established notation in Section \ref{sec:model} of the main text.

By retaining a single gravity relationship across three non-injective configurations of the dyadic index set, these exercises maintain a unified economic interpretation while allowing the configuration, regressor blocks, and error structures to vary. Appendix \ref{sec:mc} simulates the model in \eqref{eq:gravity} on directed supports of varying density to evaluate the performance of $\widehat V_N$ and $\widehat V^{J}$ relative to standard alternative variance estimators. Appendix \ref{sec:mc2} simulates the specification in \eqref{eq:gravityml} over a multilayer support to assess joint inference when linear combinations of $\widehat{\boldsymbol\beta}$ converge at heterogeneous rates. Finally, Appendix \ref{sec:empirical} estimates the model in \eqref{eq:gravity} using aggregate cross-sectional trade data. The exercises of Appendices \ref{sec:mc} and \ref{sec:empirical} are structurally linked through their shared specification and a simulated density explicitly calibrated to the estimation sample. All computational procedures are executed in \texttt{Python} relying on the \texttt{numpy} package \citep{harris2020array}.

\subsection{Monte Carlo Experiments: Directed Dyads}
\label{sec:mc}

The simulations specify $\mathbf{z}_i=(z_{1i},z_{2i})'$ and a single symmetric attribute plus a constant in $\mathbf{s}_{\{i,j\}}$, establishing $K=6$. The true parameters are $\boldsymbol\beta_{O}=\boldsymbol\beta_{D}=(1,1/2)'$ and $\boldsymbol\beta_{S}=(1,0)'$. The support is a directed Erd\H{o}s--R\'{e}nyi graph of $G\in\{40,80,160\}$ nodes, where each ordered pair $(i,j)$, $i\ne j$, is observed independently with probability $p$. The draw is repeated until no node is isolated, ensuring $\mathcal{M}^{L}\ge1$. We consider three density rules: $p=\log G/G$, $p=2/\sqrt G$, and $p=0.52$. The latter matches the share of non-zero directed trade flows in the empirical application of Appendix \ref{sec:empirical}. These densities span regimes where Assumption (A4) is respectively automatic, knife-edge, and binding. Node attributes $z_{1g},z_{2g},\vartheta_{g}$ are independent $\mathrm{U}[-1,1]$ with $s_{\{i,j\}}=(\vartheta_i+\vartheta_j)/2$. The errors follow $u_{ij}=a_i+b_j+c_{\{i,j\}}+e_{ij}$, where $a_g=\sigma_a(\mu_g+\nu^{a}_g)/\sqrt2$ and $b_g=\sigma_b(\mu_g+\nu^{b}_g)/\sqrt2$. The primitives $\mu_g,\nu^{a}_g,\nu^{b}_g,c_{\{i,j\}},e_{ij}$ are independent $\mathrm{U}[-1,1]$, and we set $\sigma_a=1$ and $\sigma_b=3/5$. This structure guarantees $\operatorname{Corr}(a_g,b_g)=1/2$, enforces bounded components as required by Assumption (A2), and permits unrestricted reciprocity alongside covariance between a node's outflows and inflows. We execute 5{,}000 replications per configuration, redrawing the graph, attributes, and errors in each. Bias is omitted as the estimator is unbiased by construction.

Table \ref{tab:mc_se_coverage} details the ratios of the average estimated standard errors relative to the Monte Carlo standard deviation, alongside the empirical coverage of the nominal $95\%$ confidence interval for $\widehat{V}_N$. Table \ref{tab:mc_size_power} reports both the raw size and the size-adjusted local power for testing the null hypothesis $H_0:\boldsymbol\beta_{O}=\boldsymbol\beta_{D}$ ($q=2$), providing comparisons against i.i.d., one-way, and two-way \citep{cameron2011robust} variance estimators.

The local alternatives are indexed by a reference noncentrality parameter $\lambda$, parameterized as $\boldsymbol\beta_{O}-\boldsymbol\beta_{D}=\bar{\mathcal{V}}_N^{1/2}\mathbf{h}$. Here, $\mathbf{h}=(\lambda/q)^{1/2}\mathbf{1}_q$ represents the equiangular direction ensuring $\|\mathbf{h}\|^2=\lambda$, and $\bar{\mathcal{V}}_N$ is defined as the average of $R_N'V_NR_N$ computed across a preliminary set of configuration draws. The actual noncentrality realized in any specific draw is given by $\mathbf{h}'\bar{\mathcal{V}}_N^{1/2}\mathcal{V}_N^{-1}\bar{\mathcal{V}}_N^{1/2}\mathbf{h}$, which precisely equals $\lambda$ only when $\mathcal{V}_N$ exactly coincides with $\bar{\mathcal{V}}_N$. Thus, $\lambda$ strictly indexes the sequence of alternatives rather than the realized noncentrality of any individual configuration. This indexing approach deliberately isolates statistical power from estimation precision, acknowledging that the Monte Carlo standard deviation of $\widehat\beta_{O,1}$ decreases substantially as $G$ increases.

Reported size is raw, whereas power is strictly size-adjusted; each test statistic is evaluated against its own empirical $95$th percentile under the null ($\lambda=0$). Furthermore, neither of the cluster comparators incorporates a finite-sample correction. Following the convention established in Section \ref{sec:assumptions}, Wald statistics are explicitly set to zero whenever the inverted variance matrix is not positive definite. In these simulations, this singularity occurs in $0.17\%$ of replications for $R_N'\widehat{V}_NR_N$, but never for $R_N'\widehat{V}^{J}R_N$, which is positive semidefinite by construction.

\begin{table}[tp]
\centering
\spacingset{1}
\setlength{\tabcolsep}{3pt}
\footnotesize
\caption{Standard errors and coverage across nine directed Erd\H{o}s--R\'enyi
configurations: the dyadic-robust estimator, its delete-a-unit jackknife, and the
i.i.d.\ benchmark.}
\label{tab:mc_se_coverage}
\begin{tabular}{@{}rrrrr rrrr rrrr rrrr@{}}
\toprule
 & & & & & \multicolumn{4}{c}{$\beta_{O,1}$} & \multicolumn{4}{c}{$\beta_{D,1}$} & \multicolumn{4}{c}{$\beta_{S,1}$} \\
\cmidrule(lr){6-9}\cmidrule(lr){10-13}\cmidrule(lr){14-17}
$G$ & $p$ & $N$ & $\delta_N$ & $\mathcal{J}_N/\lambda_{\min}$
 & $r_{\mathrm{dy}}$ & $r_{\mathrm{jk}}$ & $r_{\mathrm{iid}}$ & Cov
 & $r_{\mathrm{dy}}$ & $r_{\mathrm{jk}}$ & $r_{\mathrm{iid}}$ & Cov
 & $r_{\mathrm{dy}}$ & $r_{\mathrm{jk}}$ & $r_{\mathrm{iid}}$ & Cov \\
\midrule
 40 & 0.092 &       145 & 285.2 & 4.42 & 0.828 & 1.251 & 0.688 & 87.3 & 0.835 & 1.387 & 0.857 & 87.0 & 0.809 & 1.279 & 0.698 & 85.8 \\
 40 & 0.316 &       495 & 366.5 & 2.82 & 0.861 & 1.119 & 0.457 & 89.3 & 0.841 & 1.231 & 0.664 & 87.8 & 0.838 & 1.142 & 0.463 & 88.1 \\
 40 & 0.520 &       810 & 405.7 & 2.43 & 0.874 & 1.091 & 0.373 & 90.0 & 0.829 & 1.153 & 0.556 & 87.3 & 0.840 & 1.102 & 0.375 & 88.7 \\
\addlinespace
 80 & 0.055 &       348 & 116.6 & 3.20 & 0.934 & 1.238 & 0.671 & 92.8 & 0.915 & 1.327 & 0.828 & 91.3 & 0.915 & 1.242 & 0.673 & 91.6 \\
 80 & 0.224 &   1{,}413 & 184.5 & 1.82 & 0.929 & 1.081 & 0.397 & 92.2 & 0.925 & 1.177 & 0.593 & 92.2 & 0.927 & 1.104 & 0.405 & 92.1 \\
 80 & 0.520 &   3{,}287 & 207.3 & 1.30 & 0.951 & 1.059 & 0.277 & 93.5 & 0.921 & 1.094 & 0.432 & 91.8 & 0.933 & 1.066 & 0.279 & 92.6 \\
\addlinespace
160 & 0.032 &       806 &  57.8 & 2.60 & 0.951 & 1.181 & 0.626 & 93.0 & 0.964 & 1.307 & 0.813 & 93.6 & 0.951 & 1.196 & 0.635 & 93.3 \\
160 & 0.158 &   4{,}029 &  96.1 & 1.27 & 0.958 & 1.050 & 0.340 & 93.0 & 0.945 & 1.113 & 0.514 & 93.6 & 0.974 & 1.081 & 0.351 & 94.1 \\
160 & 0.520 &  13{,}233 & 101.6 & 0.68 & 0.977 & 1.031 & 0.199 & 94.5 & 0.956 & 1.045 & 0.318 & 93.5 & 0.957 & 1.022 & 0.198 & 93.4 \\
\bottomrule
\end{tabular}

\vspace{4pt}
\begin{minipage}{\textwidth}\scriptsize
\textsuperscript{a}\,$G$ is the number of nodes and $p$ the edge probability of the
directed Erd\H{o}s--R\'enyi support; $N$ is the mean number of observations across the
5{,}000 replications; $\delta_N=N(\mathcal{M}^{H})^{3}/\lambda_{\min}(\Omega_N)^{2}$ is
the quantity Assumption~(A4) sends to zero and $\mathcal{J}_N/\lambda_{\min}$ abbreviates
$\mathcal{J}_N/\lambda_{\min}(\Omega_N)$, the quantity Assumption~(A6) sends to zero,
both computed exactly from $\Omega_N$ on the realized configuration and averaged over
draws.\\
\textsuperscript{b}\,Writing SD for the Monte Carlo standard deviation of $\widehat\beta$
across replications, the entries are the normalized ratios
$r_{\mathrm{dy}}=\overline{\mathrm{SE}}(\widehat V_N)/\mathrm{SD}$,
$r_{\mathrm{jk}}=\overline{\mathrm{SE}}(\widehat V^{J})/\mathrm{SD}$ and
$r_{\mathrm{iid}}=\overline{\mathrm{SE}}(\text{i.i.d.})/\mathrm{SD}$, each SE an average
over replications; Cov is the empirical coverage of the nominal $95\%$ interval built
from $\widehat V_N$, in per cent.
\end{minipage}
\end{table}

\begin{table}[tp]
\centering
\spacingset{1}
\setlength{\tabcolsep}{4pt}
\footnotesize
\caption{Size and local power of the test of $H_0:\boldsymbol\beta_{O}=\boldsymbol\beta_{D}$ across nine directed Erd\H{o}s--R\'enyi configurations, against three alternative variance estimators.}
\label{tab:mc_size_power}
\begin{tabular}{@{}rrrr rrr rrrr rrrr@{}}
\toprule
 & & & & \multicolumn{3}{c}{Size, alternative estimators} & \multicolumn{4}{c}{$W_N$} & \multicolumn{4}{c}{$W_N^{J}$} \\
\cmidrule(lr){5-7}\cmidrule(lr){8-11}\cmidrule(lr){12-15}
$G$ & $p$ & $\delta_N$ & $\mathcal{J}_N/\lambda_{\min}$
 & i.i.d. & One-way & Two-way
 & Size & $\lambda=2$ & $\lambda=5$ & $\lambda=10$
 & Size & $\lambda=2$ & $\lambda=5$ & $\lambda=10$ \\
\midrule
 40 & 0.092 & 285.2 & 4.42 & 0.121 & 0.095 & 0.093 & 0.198 & 0.148 & 0.318 & 0.559 & 0.014 & 0.221 & 0.462 & 0.756 \\
 40 & 0.316 & 366.5 & 2.82 & 0.273 & 0.060 & 0.040 & 0.144 & 0.187 & 0.397 & 0.693 & 0.031 & 0.205 & 0.447 & 0.755 \\
 40 & 0.520 & 405.7 & 2.43 & 0.388 & 0.059 & 0.031 & 0.132 & 0.182 & 0.400 & 0.695 & 0.047 & 0.203 & 0.441 & 0.737 \\
\addlinespace
 80 & 0.055 & 116.6 & 3.20 & 0.131 & 0.057 & 0.047 & 0.105 & 0.200 & 0.441 & 0.738 & 0.012 & 0.219 & 0.480 & 0.786 \\
 80 & 0.224 & 184.5 & 1.82 & 0.357 & 0.040 & 0.019 & 0.085 & 0.210 & 0.469 & 0.771 & 0.026 & 0.213 & 0.481 & 0.787 \\
 80 & 0.520 & 207.3 & 1.30 & 0.580 & 0.037 & 0.016 & 0.084 & 0.213 & 0.464 & 0.773 & 0.044 & 0.213 & 0.468 & 0.776 \\
\addlinespace
160 & 0.032 &  57.8 & 2.60 & 0.147 & 0.046 & 0.032 & 0.073 & 0.216 & 0.473 & 0.780 & 0.010 & 0.219 & 0.487 & 0.795 \\
160 & 0.158 &  96.1 & 1.27 & 0.453 & 0.029 & 0.011 & 0.066 & 0.222 & 0.493 & 0.798 & 0.025 & 0.228 & 0.500 & 0.809 \\
160 & 0.520 & 101.6 & 0.68 & 0.739 & 0.026 & 0.008 & 0.064 & 0.225 & 0.491 & 0.810 & 0.044 & 0.226 & 0.495 & 0.812 \\
\midrule
\multicolumn{4}{@{}l}{\emph{Limiting value}} & 0.050 & 0.050 & 0.050 & 0.050 & 0.226 & 0.504 & 0.815 & 0.050 & 0.226 & 0.504 & 0.815 \\
\bottomrule
\end{tabular}

\vspace{4pt}
\begin{minipage}{\textwidth}\scriptsize
\textsuperscript{a}\,Entries are rejection frequencies, over 5{,}000 replications per
row, of nominal $5\%$ tests of $H_0:\boldsymbol\beta_{O}=\boldsymbol\beta_{D}$; the
configurations and the columns $G$, $p$, $\delta_N$ and $\mathcal{J}_N/\lambda_{\min}$
are those of Table~\ref{tab:mc_se_coverage}.\\
\textsuperscript{b}\,i.i.d.\ is the Wald test built from $\widehat\sigma^{2}(X'X)^{-1}$;
One-way clusters on the exporter; Two-way is the \citet{cameron2011robust} estimator with
intersection clusters equal to single ordered pairs, that is
$\sum_{g}T^{O}_{g}T^{O\prime}_{g}+\sum_{g}T^{D}_{g}T^{D\prime}_{g}-\sum_{n}\mathbf{x}_n\mathbf{x}_n'\widehat
u_n^{2}$ with $T^{O}_{g}=\sum_{n:\,i(n)=g}\mathbf{x}_n\widehat u_n$ and $T^{D}_{g}$ its
importer counterpart.\\
\textsuperscript{c}\,Size is the raw rejection frequency under the null; the $\lambda$
columns report size-adjusted power against the local alternatives
$\boldsymbol\beta_{O}-\boldsymbol\beta_{D}=\bar M_N^{1/2}\mathbf{h}$ with noncentrality
$\lambda=\lVert\mathbf{h}\rVert^{2}$, each statistic evaluated against its own empirical
$95$th percentile under $\lambda=0$; the Limiting value row is the corresponding
noncentral $\chi^{2}_{2}$ rejection probability.

\end{minipage}
\end{table}

The simulation results yield four primary conclusions. First, the dyadic-robust standard errors display a downward bias in finite samples; as $G$ increases, $r_{\mathrm{dy}}$ approaches one from $0.81$, and empirical coverage improves from $85.8\%$ to $94.5\%$. Conversely, the i.i.d. standard errors are fundamentally inconsistent. At the densest configuration, they capture only a fraction of the true sampling variability, yielding a test size that reaches $73.9\%$ and strictly deteriorates as the sample size increases. Second, one-way clustering on the exporter and two-way clustering \citep{cameron2011robust} err in the opposite direction, with their empirical size dropping as low as $0.008$. Both of these estimators omit the covariance between a node's outflows and inflows. Because this covariance is negative for this specific contrast, and given that the sign is inherently contrast-dependent, neither estimator produces predictable errors across different hypotheses. Third, consistent with the theoretical predictions of Proposition \ref{prop:onesided}, $W^{J}_N$ closely approximates the nominal size when $\mathcal{J}_N/\lambda_{\min}(\Omega_N)$ is small, but exhibits a conservative one-sided error when this ratio is large. The jackknife standard error ratio, $r_{\mathrm{jk}}$, strictly exceeds one and monotonically increases with $\mathcal{J}_N/\lambda_{\min}(\Omega_N)$. Finally, across all 27 evaluated local alternatives, the size-adjusted jackknife test strictly dominates or equals the statistical power of the base statistic.

Table \ref{tab:mc_size_power_q1} replicates this analysis for the single restriction $H_0:\beta_{O,1}=\beta_{D,1}$ ($q=1$). For a single restriction, the base statistic $W_N$ exhibits reduced size distortion, with the raw size in the sparse configuration at $G=40$ decreasing from $0.198$ to $0.122$. Nevertheless, the jackknife remains conservative at high values of $\mathcal{J}_N/\lambda_{\min}(\Omega_N)$ across both testing exercises. Equation \eqref{eq:master} uniformly bounds the normalized error over restrictions, as the restriction enters exclusively through the congruence $R_N^{*}$ of norm one. This uniformity ensures the theoretical comparability of the two exercises; however, it does not imply finite-sample invariance with respect to $q$ or the specific restriction chosen. Consequently, the evaluated sizes at $\mathcal{J}_N/\lambda_{\min}(\Omega_N)=2.43$ systematically differ, yielding $0.047$ for $q=2$ compared to $0.039$ for $q=1$.

\begin{table}[tp]
\centering
\spacingset{1}
\setlength{\tabcolsep}{4pt}
\footnotesize
\caption{Size and local power of the test of the single restriction $H_0:\beta_{O,1}=\beta_{D,1}$, across the same nine configurations.}
\label{tab:mc_size_power_q1}
\begin{tabular}{@{}rrrr rrr rrrr rrrr@{}}
\toprule
 & & & & \multicolumn{3}{c}{Size, alternative estimators} & \multicolumn{4}{c}{$W_N$} & \multicolumn{4}{c}{$W_N^{J}$} \\
\cmidrule(lr){5-7}\cmidrule(lr){8-11}\cmidrule(lr){12-15}
$G$ & $p$ & $\delta_N$ & $\mathcal{J}_N/\lambda_{\min}$
 & i.i.d. & One-way & Two-way
 & Size & $\lambda=2$ & $\lambda=5$ & $\lambda=10$
 & Size & $\lambda=2$ & $\lambda=5$ & $\lambda=10$ \\
\midrule
 40 & 0.092 & 285.2 & 4.42 & 0.094 & 0.072 & 0.068 & 0.122 & 0.233 & 0.506 & 0.789 & 0.013 & 0.280 & 0.576 & 0.850 \\
 40 & 0.316 & 366.5 & 2.82 & 0.194 & 0.047 & 0.032 & 0.095 & 0.262 & 0.545 & 0.831 & 0.026 & 0.290 & 0.585 & 0.861 \\
 40 & 0.520 & 405.7 & 2.43 & 0.265 & 0.051 & 0.028 & 0.090 & 0.261 & 0.545 & 0.833 & 0.039 & 0.274 & 0.565 & 0.846 \\
\addlinespace
 80 & 0.055 & 116.6 & 3.20 & 0.110 & 0.054 & 0.044 & 0.082 & 0.259 & 0.560 & 0.848 & 0.013 & 0.272 & 0.581 & 0.869 \\
 80 & 0.224 & 184.5 & 1.82 & 0.247 & 0.035 & 0.023 & 0.070 & 0.278 & 0.587 & 0.867 & 0.027 & 0.284 & 0.596 & 0.873 \\
 80 & 0.520 & 207.3 & 1.30 & 0.406 & 0.032 & 0.015 & 0.069 & 0.285 & 0.582 & 0.868 & 0.040 & 0.291 & 0.589 & 0.873 \\
\addlinespace
160 & 0.032 &  57.8 & 2.60 & 0.116 & 0.043 & 0.032 & 0.063 & 0.279 & 0.593 & 0.873 & 0.014 & 0.287 & 0.598 & 0.881 \\
160 & 0.158 &  96.1 & 1.27 & 0.311 & 0.034 & 0.015 & 0.060 & 0.287 & 0.594 & 0.879 & 0.031 & 0.286 & 0.595 & 0.881 \\
160 & 0.520 & 101.6 & 0.68 & 0.540 & 0.029 & 0.011 & 0.060 & 0.287 & 0.595 & 0.875 & 0.045 & 0.289 & 0.597 & 0.877 \\
\midrule
\multicolumn{4}{@{}l}{\emph{Limiting value}} & 0.050 & 0.050 & 0.050 & 0.050 & 0.293 & 0.609 & 0.885 & 0.050 & 0.293 & 0.609 & 0.885 \\
\bottomrule
\end{tabular}

\vspace{4pt}
\begin{minipage}{\textwidth}\scriptsize
\textsuperscript{a}\,This corresponds to Table~\ref{tab:mc_size_power}, but where the
hypothesis tested is the single restriction $H_0:\beta_{O,1}=\beta_{D,1}$, so that
$R_N=(1,0,-1,0,0,0)'$ and $q=1$, computed on the same draws and seeds; every column is
defined as in Table~\ref{tab:mc_size_power}, with the limiting values now those of the
noncentral $\chi^{2}_{1}$ at its $95$th percentile 3.841.

\end{minipage}
\end{table}

\subsection{Monte Carlo Experiments: Multilayer Dyads}
\label{sec:mc2}

The second Monte Carlo exercise estimates the specification in \eqref{eq:gravityml} across $L=3$ categories. The regressor is defined as $\mathbf{x}_n=\mathbf{e}_\ell\otimes(w_{d\ell},1)'$ with $K=LK_1=6$, where the true parameters are set to $\beta_{\ell,1}=1$ and $\beta_{\ell,2}=0$ for all $\ell$. The support is generated as an undirected Erd\H{o}s--R\'{e}nyi graph over $G\in\{40,80,160\}$ countries with an edge probability of $p=0.52$. Because every realized unordered pair is observed across all three categories, the design intrinsically dictates $L_d\equiv3$, $\mathcal{L}=3$, and $\mathcal{J}_N=3N$. Country attributes $\varphi_{g\ell}$ are drawn from independent $\mathrm{U}[-1,1]$ distributions, formally defining the node-borne economic size as $w_{d\ell}=\varphi_{g\ell}+\varphi_{h\ell}$. The disturbances are specified as $u_{d\ell}=\sigma_\ell(a_{g\ell}+a_{h\ell})+\varepsilon_{d\ell}$, where $a_{g\ell}=(\eta_g+\nu_{g\ell})/2$ and $\varepsilon_{d\ell}=(\xi_d+\zeta_{d\ell})/2$. These independent $\mathrm{U}[-1,1]$ primitives explicitly enforce the bounded components condition of Assumption (A2) and establish cross-category correlations of $\operatorname{Corr}(a_{g\ell},a_{g\ell'}) = \operatorname{Corr}(\varepsilon_{d\ell},\varepsilon_{d\ell'}) = 1/2$ for $\ell\ne\ell'$. These correlations directly populate the off-diagonal blocks of $\Omega_N$. The simulation executes 5,000 replications per configuration.

Heterogeneous convergence rates are induced by systematically varying the per-category scale of the country effects: $\sigma_\ell=G^{-\gamma_\ell/2}$ with $\gamma=(0,s/2,s)$ evaluated across the rate spreads $s\in\{0,1/4,0.45\}$. Because the covariate $w_{d\ell}$ is inherently node-borne, the unit aggregate $\sum_{h}w_{\{g,h\}\ell}$ operates at order $Gp$, yielding a category-$\ell$ score variance of order $G^{3-\gamma_\ell}p^{2}$. Consequently, the condition number scales as $\kappa(\Omega_N)\asymp G^{s}$, and the accumulation ratio follows $\delta_N\asymp G^{2s-1}\to 0$. While these asymptotic orders and the corresponding $\delta_N$ column in Table \ref{tab:mc_ml} satisfy the limit required by Assumption (A4) across all nine configurations, they do not put any scalar normalization $(NG^{r})^{-1}\Omega_N$ on course for a positive-definite limit when $s>0$. Accordingly, the condition number reported in Table \ref{tab:mc_ml} strictly increases with $s$. Interpreted through this framework, the design satisfies the condition imposed by Assumption 2.6 in \citet[p.~675]{tabordmeehan2019inference} only in the $s=0$ column, deliberately operating outside of it in the other two scenarios. These columns are intentionally constructed to straddle the assumption rather than to formally satisfy it. The configurations therefore exercise statistical inference precisely when the two contrasts accumulate at different asymptotic orders ($G^{3}$ and $G^{3-s/2}$). This operationalizes the theoretical discussion of Assumption (A4) in Section \ref{sec:assumptions}, demonstrating that valid inference is maintained even when joint restrictions accumulate dependence at genuinely heterogeneous rates.

The triangular array structure, in which $\sigma_\ell$ drifts with $G$, is a mathematical necessity in this context. On a dense support, every direction of $\Omega_N$ is confined strictly within the polynomial band between $G^{5/2}$ and $G^{3}$; the lower bound is dictated by Assumption (A4), while the upper bound is imposed by the cap $\lambda_{\max}(\Omega_N)=O(N\mathcal{M}^{H})$ derived from Assumption (A2) via Lemma \ref{lem:struct}(d). Establishing two distinct polynomial orders within this constrained band requires the data-generating process to drift with $G$, a feature that Theorem \ref{thm:main} structurally accommodates via its triangular array formulation. Furthermore, the fact that the covariates are node-borne ensures that every direction of $\Omega_N$ remains within this prescribed band. Conversely, a covariate exhibiting strictly idiosyncratic pair-level variation yields a unit aggregate of order $(Gp)^{1/2}$, causing $\Omega_N$ to accumulate at order $N$ along its contributed direction. On a dense support, this accumulation falls strictly below the floor imposed by Assumption (A4). As detailed in Illustration 2 of \ref{sec:smb}, this dynamic is a fundamental property of the specific covariate rather than a general characteristic of bilateral covariates as a class.  

Table \ref{tab:mc_ml} reports configuration diagnostics computed directly from $\Omega_N$, alongside Monte Carlo standard deviations for each category's slope and the raw rejection frequencies for nominal $5\%$ tests of the null hypothesis $H_0:\beta_{1,1}=\beta_{2,1}=\beta_{3,1}$ ($q=2$). The performance of $W_N$ is evaluated against an i.i.d. benchmark and a pair-level clustering estimator. The inverted variance matrices maintained strict positive definiteness across all $45{,}000$ replications; thus, the convention outlined in Section \ref{sec:assumptions} of setting singular Wald statistics to zero was never exercised during this simulation exercise.

\begin{table}[tp]
\centering
\spacingset{1}
\setlength{\tabcolsep}{4pt}
\footnotesize
\caption{Size of the test of $H_0:\beta_{1,1}=\beta_{2,1}=\beta_{3,1}$ across nine
multilayer configurations, in which the three layers' coefficients converge at
genuinely different rates.}
\label{tab:mc_ml}
\begin{tabular}{@{}rrrrrr rrr rrrr@{}}
\toprule
 & & & & & & \multicolumn{3}{c}{$\mathrm{SD}(\widehat\beta_{\ell,1})$}
 & \multicolumn{4}{c}{Size} \\
\cmidrule(lr){7-9}\cmidrule(lr){10-13}
$G$ & $s$ & $N$ & $\kappa(\Omega_N)$ & $\delta_N$ & $\mathcal{J}_N/\lambda_{\min}$
 & $\ell=1$ & $\ell=2$ & $\ell=3$
 & i.i.d. & Dyad & $W_N$ & $W^{J}_N$ \\
\midrule
 40 & 0.00 &   1{,}216 &  8.93 &   1{,}461 &  5.44 & 0.1204 & 0.1223 & 0.1178 & 0.672 & 0.686 & 0.127 & 0.053 \\
 40 & 0.25 &   1{,}219 & 12.27 &   5{,}468 & 10.52 & 0.1181 & 0.0956 & 0.0777 & 0.631 & 0.643 & 0.125 & 0.047 \\
 40 & 0.45 &   1{,}216 & 19.60 &  18{,}497 & 19.45 & 0.1197 & 0.0833 & 0.0577 & 0.593 & 0.612 & 0.144 & 0.048 \\
\addlinespace
 80 & 0.00 &   4{,}937 &  7.57 &       484 &  2.38 & 0.0805 & 0.0813 & 0.0825 & 0.818 & 0.820 & 0.078 & 0.046 \\
 80 & 0.25 &   4{,}934 & 12.74 &   2{,}979 &  5.89 & 0.0826 & 0.0624 & 0.0484 & 0.777 & 0.781 & 0.084 & 0.053 \\
 80 & 0.45 &   4{,}930 & 24.20 &  14{,}228 & 12.91 & 0.0809 & 0.0511 & 0.0328 & 0.734 & 0.746 & 0.081 & 0.045 \\
\addlinespace
160 & 0.00 &  19{,}845 &  6.91 &       181 &  1.10 & 0.0563 & 0.0571 & 0.0568 & 0.894 & 0.896 & 0.063 & 0.050 \\
160 & 0.25 &  19{,}832 & 14.16 &   1{,}794 &  3.47 & 0.0573 & 0.0412 & 0.0301 & 0.875 & 0.876 & 0.070 & 0.051 \\
160 & 0.45 &  19{,}835 & 31.88 &  12{,}053 &  8.96 & 0.0567 & 0.0325 & 0.0191 & 0.849 & 0.846 & 0.063 & 0.048 \\
\midrule
\multicolumn{6}{@{}l}{\emph{Limiting value}} & & & & 0.050 & 0.050 & 0.050 & 0.050 \\
\bottomrule
\end{tabular}

\vspace{4pt}
\begin{minipage}{\textwidth}\scriptsize
\textsuperscript{a}\,$G$ is the number of countries and $s$ the rate spread, the layers
carrying node-effect scales $\sigma_\ell=G^{-\gamma_\ell/2}$ with
$\gamma=(0,s/2,s)$; every unordered pair is present with probability $p=0.52$ and is
observed in all $L=3$ layers, and $N$ is the mean number of observations across the
5{,}000 replications. Further,
$\kappa(\Omega_N)=\lambda_{\max}(\Omega_N)/\lambda_{\min}(\Omega_N)$,
$\delta_N=N(\mathcal{M}^{H})^{3}/\lambda_{\min}(\Omega_N)^{2}$ is the quantity
Assumption~(A4) sends to zero, and $\mathcal{J}_N/\lambda_{\min}$ abbreviates
$\mathcal{J}_N/\lambda_{\min}(\Omega_N)$, the quantity Assumption~(A6) sends to zero, all
three computed exactly from $\Omega_N$ on the realized configuration and averaged over
draws.\\
\textsuperscript{b}\,$\mathrm{SD}(\widehat\beta_{\ell,1})$ is the Monte Carlo standard
deviation of the layer-$\ell$ slope across replications.\\
\textsuperscript{c}\,Entries in the last four columns are raw rejection frequencies of
nominal $5\%$ tests of $H_0:\beta_{1,1}=\beta_{2,1}=\beta_{3,1}$, so that $q=L-1=2$;
i.i.d.\ is the Wald test built from $\widehat\sigma^{2}(X'X)^{-1}$ and Dyad the one built
by clustering on the country pair, that is
$\sum_{d}\big(\sum_{n:\,\psi(n)=d}\mathbf{x}_n\widehat u_n\big)
\big(\sum_{n:\,\psi(n)=d}\mathbf{x}_n\widehat u_n\big)'$, neither carrying a
finite-sample cluster correction; the Limiting value row is the nominal level.
\end{minipage}
\end{table}

The simulation results yield four conclusions. First, the data-generating process successfully separates the convergence rates. At $G=160$, the three category slopes exhibit identical standard deviations for $s=0$, but diverge sharply for $s=0.45$ (at $0.0567$, $0.0325$, and $0.0191$). The condition number $\kappa(\Omega_N)$ separates correspondingly, rising from $6.91$ at $s=0$ to $31.88$ at $s=0.45$.  

Second, the size of the rate-agnostic Wald statistic $W_N$ is strictly governed by $G$, remaining invariant to the rate spread $s$. At $G=160$, the empirical size is bounded between $0.063$ and $0.070$ regardless of whether $\kappa(\Omega_N)$ diverges. The identical restriction matrix, variance estimator, and critical value maintain the correct asymptotic size in the absolute absence of a scalar normalization.

Third, the jackknife statistic $W^{J}_N$ strictly limits over-rejection, yielding a size between $0.045$ and $0.053$ across all configurations. This aligns with Proposition \ref{prop:onesided}, which bounds a limiting eigenvalue but does not formally order the two statistics within any given finite sample.

Finally, the other two commonly-used alternatives fail systematically, rejecting a true null hypothesis between $59\%$ and $90\%$ of the time, with performance deteriorating as $G$ grows. Clustering solely on the country pair discards the covariances between pairs that share a common node, ignoring the dominant source of dependence, which operates at order $G^{3}p^{2}$ on a multilayer support. We note that while $\delta_N$ correctly approaches zero in $G$, its magnitude remains large at these specific sample sizes, ranging from $181$ to $12{,}053$ at $G=160$. The size of $W_N$ is nevertheless stable across the row, remaining between $0.063$ and $0.070$ at $G=160$.

\subsection{Empirical Application}
\label{sec:empirical}

The empirical application estimates equation \eqref{eq:gravity} using the \citet{santossilva2006log} cross-section of bilateral merchandise trade among $G=136$ countries in 1990. Out of $18{,}360$ possible ordered pairs, $N=9{,}613$ ($52.4\%$) record positive trade and thus constitute the estimation sample. The estimand is strictly defined as the linear projection of $\ln T_{ij}$ onto the covariates within this positive-trade subsample. Because a projection residual is orthogonal to the regressors in the aggregate by construction, we impose Assumption (A3) directly at the level of the individual observation. \citet{santossilva2006log} argue that least squares on log trade fails to identify the elasticities of $\mathbb{E}[T_{ij}\mid\mathbf{x}]$ under heteroskedasticity; our approach does not dispute this. No coefficient estimated here is interpreted as an elasticity or read causally, as this exercise focuses strictly on statistical \emph{inference} rather than model specification. Consequently, zero-trade pairs are dropped as a fundamental part of the estimand's definition rather than as an empirical recommendation. This omission renders the configuration incomplete, resulting in genuinely variable degree counts: $\mathcal{M}^{H}=270$ against $\mathcal{M}^{L}=40$, indicating a network that is dense at the top and sparse at the bottom. Furthermore, the multiplicity count is $\mathcal{J}_N=17{,}775$, compared to the $2N=19{,}226$ that a fully reciprocated support would yield; specifically, $4{,}081$ of the $5{,}532$ realized dyads are observed in both orientations, while the remaining $1{,}451$ are observed in only one.

The exporter attributes, $\mathbf{z}_i$, comprise log GDP (\texttt{ly}), log GDP per capita (\texttt{lyp}), log remoteness (\texttt{lremot}), and a landlocked dummy (\texttt{landl}). These are duplicated symmetrically for the importer attributes, $\mathbf{z}_j$. The symmetric pair attributes, $\mathbf{s}_{\{i,j\}}$, include log distance (\texttt{ldist}), contiguity (\texttt{border}), common language (\texttt{comlang}), colonial ties (\texttt{colony}), a common free-trade-agreement dummy (\texttt{comfrt\_wto}), an open-economy dummy (\texttt{open\_wto}), and the constant, establishing $K=15$. We test the joint hypothesis of push-pull symmetry, formalized as $H_0:\boldsymbol{\beta}_O=\boldsymbol{\beta}_D$ ($q=4$). As detailed in Illustration 1 in \ref{sec:smb}, this hypothesis is mathematically equivalent to the claim that the regression function forgets the orientation, exactly as the mapping $\psi$ does.

Table \ref{tab:emp_coef} reports the estimated coefficients alongside their i.i.d., dyadic-robust, and jackknife standard errors. For every coefficient, the i.i.d. standard errors are strictly smaller than both dyadic estimates: the dyadic-robust standard errors are $1.3$ to $4.2$ times larger than the i.i.d. values, and the jackknife standard errors are $1.7$ to $4.7$ times larger. Table \ref{tab:emp_wald} reports the corresponding Wald tests. While the joint null hypothesis is decisively rejected across all estimators, accounting for dyadic dependence substantially moderates the magnitude of this rejection: $W^{J}_N=46.12$ compared to $\chi^2_{4,0.95}=9.49$, whereas the i.i.d. statistic is massively inflated to $295.53$. Notably, for the total GDP attribute, the single-restriction conclusion is entirely reversed depending on the variance estimator. The i.i.d. test rejects equality at $21.02$, whereas the jackknife test fails to reject at $2.96$ against a critical value of $\chi^2_{1,0.95}=3.84$. Per-capita income, remoteness, and landlockedness remain individually asymmetric regardless of the estimator used.

The jackknife statistic $W^{J}_N$ lies strictly below $W(\widehat{V}_N)$ in every row of Table \ref{tab:emp_wald}, an observation consistent with the upward one-sided error established in Proposition \ref{prop:onesided} in the main text. However, because that proposition is asymptotic and stated relative to $\mathcal{V}_N$, it does not theoretically imply $\widehat{V}^{J}-\widehat{V}_N\succeq0$ in any given finite sample, nor does it guarantee a consistent ordering of the two statistics restriction by restriction. Note that the ordering observed here is strictly an empirical finding. Both $R_N'\widehat{V}_NR_N$ and $R_N'\widehat{V}^{J}R_N$ remain strictly positive definite for the $G=136$ sample across every tested restriction. Thus, the convention from Section \ref{sec:assumptions} of setting singular Wald statistics to zero is never exercised in this full sample. Conversely, across twelve random subsamples of $G=40$ countries, $\widehat{\Omega}_N$ fails to be positive definite in every instance. It similarly fails in eleven out of twelve subsamples at $G=60$, only achieving consistent positive definiteness for $G\ge80$. In these smaller samples, the jackknife estimator proves essential by supplying positive semidefiniteness by construction. Ultimately, this exercise formally evaluates, via dyadic-robust joint inference, the push-pull coefficient comparison that \citet{santossilva2006log} drew informally, demonstrating that the choice of variance estimator is highly consequential even when the headline conclusion remains intact.

\begin{table}[tp]
\centering
\spacingset{1}
\caption{Least-squares estimates of equation \eqref{eq:gravity} on the
\citet{santossilva2006log} sample.}
\label{tab:emp_coef}
\begin{tabular}{lrrrrr}
\toprule
Coefficient & Estimate & SE (i.i.d.) & SE ($\widehat{V}_N$) & SE ($\widehat{V}^{J}$) & Ratio \\
\midrule
\texttt{ly} (exporter)      & $0.2073$  & $0.0166$ & $0.0696$ & $0.0785$ & $4.72$ \\
\texttt{lyp} (exporter)     & $0.9378$  & $0.0116$ & $0.0419$ & $0.0474$ & $4.08$ \\
\texttt{lremot} (exporter)  & $0.4671$  & $0.0778$ & $0.2388$ & $0.2699$ & $3.47$ \\
\texttt{landl} (exporter)   & $-0.0620$ & $0.0646$ & $0.1548$ & $0.1791$ & $2.77$ \\
\texttt{ly} (importer)      & $0.1061$  & $0.0167$ & $0.0575$ & $0.0650$ & $3.89$ \\
\texttt{lyp} (importer)     & $0.7978$  & $0.0111$ & $0.0418$ & $0.0470$ & $4.24$ \\
\texttt{lremot} (importer)  & $-0.2050$ & $0.0808$ & $0.2689$ & $0.2994$ & $3.71$ \\
\texttt{landl} (importer)   & $-0.6645$ & $0.0631$ & $0.1224$ & $0.1446$ & $2.29$ \\
\texttt{ldist}              & $-1.1660$ & $0.0339$ & $0.0792$ & $0.0912$ & $2.69$ \\
\texttt{border}             & $0.3140$  & $0.1425$ & $0.1869$ & $0.2481$ & $1.74$ \\
\texttt{comlang}            & $0.6780$  & $0.0640$ & $0.1818$ & $0.2162$ & $3.38$ \\
\texttt{colony}             & $0.3968$  & $0.0681$ & $0.1941$ & $0.2297$ & $3.37$ \\
\texttt{comfrt\_wto}         & $0.4908$  & $0.1053$ & $0.2990$ & $0.3411$ & $3.24$ \\
\texttt{open\_wto}           & $-0.1696$ & $0.0490$ & $0.1843$ & $0.2085$ & $4.25$ \\
Constant                    & $-28.4920$& $1.0880$ & $4.0123$ & $4.4828$ & $4.12$ \\
\bottomrule
\end{tabular}

\vspace{4pt}
\begin{minipage}{\textwidth}\footnotesize
\textsuperscript{a}\,Estimate is the least-squares point estimate; SE (i.i.d.) is the
classical least-squares standard error; SE ($\widehat{V}_N$) uses the dyadic-robust
estimator of equation \eqref{eq:vhat} in the main text; SE ($\widehat{V}^{J}$) uses the
delete-one-unit jackknife of equation \eqref{eq:vjack} in the main text; Ratio is SE
($\widehat{V}^{J}$) over SE (i.i.d.).
\end{minipage}
\end{table}

\begin{table}[tp]
\centering
\spacingset{1}
\caption{Wald tests of push--pull symmetry, $H_0:\boldsymbol{\beta}_O=
\boldsymbol{\beta}_D$ and its single restrictions.}
\label{tab:emp_wald}
\begin{tabular}{lrrrrrr}
\toprule
Restriction & $q$ & $W$ (i.i.d.) & $W$ ($\widehat{V}_N$) & $W^{J}$ &
$\chi^2_{q,0.95}$ & $p$ ($W^{J}$) \\
\midrule
Joint: all four attributes            & 4 & $295.53$ & $61.15$ & $46.12$ & $9.49$ & $2.3\times10^{-9}$ \\
\texttt{ly}: exporter $=$ importer     & 1 & $21.02$  & $3.80$  & $2.96$  & $3.84$ & $0.085$ \\
\texttt{lyp}: exporter $=$ importer    & 1 & $91.35$  & $15.52$ & $12.03$ & $3.84$ & $5.2\times10^{-4}$ \\
\texttt{lremot}: exporter $=$ importer & 1 & $41.39$  & $7.13$  & $5.69$  & $3.84$ & $0.017$ \\
\texttt{landl}: exporter $=$ importer  & 1 & $44.34$  & $23.86$ & $16.06$ & $3.84$ & $6.1\times10^{-5}$ \\
\bottomrule
\end{tabular}

\vspace{4pt}
\begin{minipage}{\textwidth}\footnotesize
\textsuperscript{a}\,$q$ is the number of restrictions, with $R_N=(I_4;-I_4;0_{7\times4})$
for the joint test and the corresponding single rows for the others; $W$ (i.i.d.), $W$
($\widehat{V}_N$) and $W^{J}$ are the Wald statistics built from the classical,
dyadic-robust, and jackknife variance estimators respectively; $\chi^2_{q,0.95}$ is the
$5\%$ critical value; $p$ ($W^{J}$) is the $\chi^2_q$ $p$-value of the jackknife
statistic.
\end{minipage}
\end{table}

\spacingset{1}
\putbib[references]
\end{bibunit}

\end{document}